\DeclareMathAlphabet{\mathcal}{OMS}{cmsy}{m}{n}
\newcommand{\vect}[1]{\mathbf{#1}}
\newcommand{\vs}{\vspace{1.5mm}}
\newtheorem{theorem}{Theorem}[section]
\newtheorem{lemma}[theorem]{Lemma}
\newtheorem{definition}[theorem]{Definition}
\newtheorem{assumption}[theorem]{Assumption}
\newcommand{\G}{\mathbb{G}}
\newcommand{\Z}{\mathbb{Z}}
\newcommand{\bits}{\{0,1\}}
\newcommand{\Adv}{\textbf{Adv}}
\newcommand{\mc}[1]{\mathcal{#1}}
\newcommand{\tb}[1]{\textbf{#1}}
\newcommand{\lb}{\linebreak[0]}
\newcommand{\db}{\displaybreak[0]}
\title{Sequential Aggregate Signatures with Short Public Keys\\ without
    Random Oracles\footnote{This paper is the combined and extended version
    of two papers \cite{LeeLY13s,LeeLY13m} that appeared in PKC 2013 and
    ACNS 2013, respectively.}}
\author{
    Kwangsu Lee\footnote{Korea University, Seoul, Korea.
        Email: \texttt{guspin@korea.ac.kr}.}
        \and
    Dong~Hoon Lee\footnote{Korea University, Seoul, Korea.
        Email: \texttt{donghlee@korea.ac.kr}.}
        \and
    Moti Yung\footnote{Google Inc. and Columbia University, NY, USA.
        Email: \texttt{moti@cs.columbia.edu}.}
}
\date{}
\begin{document}

\maketitle

\begin{abstract}
The notion of aggregate signature has been motivated by applications and it
enables any user to compress different signatures signed by different
signers on different messages into a short signature. Sequential aggregate
signature, in turn, is a special kind of aggregate signature that only
allows a signer to add his signature into an aggregate signature in
sequential order. This latter scheme has applications in diversified
settings such as in reducing bandwidth of certificate chains and in secure
routing protocols. Lu, Ostrovsky, Sahai, Shacham, and Waters (EUROCRYPT
2006) presented the first sequential aggregate signature scheme in the
standard model. The size of their public key, however, is quite large
(i.e., the number of group elements is proportional to the security
parameter), and therefore, they suggested as an open problem the
construction of such a scheme with short keys.

In this paper, we propose the first sequential aggregate signature schemes
with short public keys (i.e., a constant number of group elements) in prime
order (asymmetric) bilinear groups that are secure under static assumptions
in the standard model. Furthermore, our schemes employ a constant number of
pairing operations per message signing and message verification operation.
Technically, we start with a public-key signature scheme based on the
recent dual system encryption technique of Lewko and Waters (TCC 2010).
This technique cannot directly provide an aggregate signature scheme since,
as we observed, additional elements should be published in a public key to
support aggregation. Thus, our constructions are careful augmentation
techniques for the dual system technique to allow it to support sequential
aggregate signature schemes. We also propose a multi-signature scheme with
short public parameters in the standard model.
\end{abstract}

\vs \noindent {\bf Keywords:} Public-key signature, Aggregate signature,
Sequential aggregate signature, Dual system encryption, Bilinear maps.

\newpage
\tableofcontents
\newpage

\section{Introduction}

Aggregate signature is a relatively new type of public-key signature (PKS)
that enables any user to combine $n$ signatures signed by $n$ different
signers on $n$ different messages into a short signature. The concept of
public-key aggregate signature (PKAS) was introduced by Boneh, Gentry, Lynn,
and Shacham \cite{BonehGLS03}, and they proposed an efficient PKAS scheme in
the random oracle model using bilinear groups. After that, numerous PKAS
schemes were proposed using bilinear groups \cite{GentryR06,LuOSSW06,
LuOSSW13,BoldyrevaGOY07,BoldyrevaGOY10,AhnGH10,GerbushLOW12} or using
trapdoor permutations \cite{LysyanskayaMRS04,BellareNN07,Neven08}.

One application of aggregate signature is the certificate chains of the
public-key infrastructure (PKI) \cite{BonehGLS03}. The PKI system has a tree
structure, and a certificate for a user consists of a certificate chain from
a root node to a leaf node, each node in the chain signing its predecessor.
If the signatures in the certificate chain are replaced with a single
aggregate signature, then bandwidth for signature transfer can be
significantly saved. Another application is to the secure routing protocol of
the Internet protocol \cite{BonehGLS03}. If each router that participates in
the routing protocol uses a PKAS scheme instead of a PKS scheme, then the
communication overload of signature transfer can be dramatically reduced.
Furthermore, aggregate signatures have other applications such as reducing
bandwidth in sensor networks or ad-hoc networks, as well as in software
authentication in the presence of software update \cite{AhnGH10}.

\subsection{Previous Methods}

Aggregate signature schemes are categorized as {\it full} aggregate
signature, {\it synchronized} aggregate signature, and {\it sequential}
aggregate signature depending on the type of signature aggregation. They have
also been applied to regular signatures in the PKI model and to ID-based
signatures (with a trusted key server).

The first type of aggregate signature is \textit{full aggregate signature},
which enables any user to freely aggregate different signatures of different
signers. This full aggregate signature is the most flexible aggregate
signature since it does not require any restriction on the aggregation step
(though restriction may be needed at times for certain applications).
However, there is only one full aggregate signature scheme, proposed by Boneh
et al. \cite{BonehGLS03}\footnote{Subsequent to our work, Hohenberger et al.
\cite{HohenbergerSW13} proposed an identity-based aggregate signature scheme
that supports full aggregation based on the recently introduced candidate
multilinear maps of Garg et al. \cite{GargGH13}.}. Since this scheme is based
on the short signature scheme of Boneh et al. \cite{BonehLS01}, the signature
length it provides is also very short. However, the security of the scheme
has just been proven in the idealized random oracle model and the number of
pairing operations in the aggregate signature verification algorithm is
proportional to the number of signers in the aggregate signature.

The second type of aggregate signature is \textit{synchronized aggregate
signature}, which enables any user to combine different signatures with the
same synchronizing information into a single signature. The synchronized
aggregate signature has one drawback: all signers should share the same
synchronizing information (such as a time clock or another shared value).
Gentry and Ramzan \cite{GentryR06} introduced the concept of synchronized
aggregate signature. They proposed an identity-based synchronized aggregate
signature scheme using bilinear groups, and they proved its security in the
random oracle model. We note that identity-based aggregate signature (IBAS)
is an ID-based scheme and thus relies on a trusted server knowing all private
keys (i.e., its trust structure is different from that in regular PKI).
However, it also has a notable advantage in that it is not required to
retrieve the public keys of signers in the verification algorithm since an
identity string plays the role of a public key (this lack of public key is
indicated in our comparison table as public key of no size!). Recently, Ahn
et al. \cite{AhnGH10} presented a public-key synchronized aggregate signature
scheme without relying on random oracles.

The third type of aggregate signature is \textit{sequential aggregate
signature} (SAS), which enables each signer to aggregate his signature to a
previously aggregated signature in a sequential order. The sequential
aggregate signature has the obvious limitation of signers being ordered to
aggregate their signatures in contrast to the full aggregate signature and
the synchronized aggregate signature. However, it has an advantage in that it
is not required to share synchronized information among signers in contrast
to the synchronized aggregate signature, and many natural applications lead
themselves to this setting. The concept of sequential aggregate signature was
introduced by Lysyanskaya, Micali, Reyzin, and Shacham
\cite{LysyanskayaMRS04}, and they proposed a public-key sequential aggregate
signature scheme using the certified trapdoor permutations in the random
oracle model. Boldyreva et al. \cite{BoldyrevaGOY07} presented an
identity-based sequential aggregate signature scheme in the random oracle
model using an interactive assumption, but it was shown by Hwang et al.
\cite{HwangLY09} that their construction is not secure. After that, Boldyreva
et al. \cite{BoldyrevaGOY10} proposed a new identity-based sequential
aggregate signature by modifying their previous construction and proved its
security in the generic group model. Recently, Gerbush et al.
\cite{GerbushLOW12} showed that the modified IBAS scheme of Boldyreva et al.
\cite{BoldyrevaGOY10} is secure under static assumptions using the dual form
signatures framework.

The first sequential aggregate signature scheme without random oracle
idealization was proposed by Lu et al. \cite{LuOSSW06,LuOSSW13}. They
converted the PKS scheme of Waters \cite{Waters05} to the PKAS scheme and
proved its security under the well known CDH assumption. However, their
scheme has a drawback since the number of group elements in a public key is
proportional to the security parameter (for a security of $2^{80}$ they need
$160$ elements, or about $80$ elements in a larger group); so they left as an
open question how to design a scheme with shorter public keys. Schr{\"o}der
proposed a PKAS scheme with short public keys relying on the
Camenisch-Lysyanskaya signature scheme \cite{Schroder11}; however the
scheme's security is proven under an interactive assumption (which,
typically, is a relaxation used when designs based on static assumptions are
hard to find).\footnote{Gerbush et al. \cite{GerbushLOW12} showed that a
modified Camenisch-Lysyanskaya signature scheme in composite order groups is
secure under static assumptions. However, it is unclear whether the
construction of Schr{\"o}der can be directly applied to this modified
Camenisch-Lysyanskaya signature scheme. The reason is that aggregating
$\G_{p_1}$ and $\G_{p_2}$ subgroups is hard and a private key element
$g_{2,3} \in \G_{p_2 p_3}$ cannot be generated by the key generation
algorithm of an aggregate signature scheme. Additionally, our work and
findings are independent of the work of Gerbush et al.} Therefore, the
construction of an SAS scheme with short public keys without relaxations such
as random oracles or interactive assumptions was left as an open question.

\begin{table*}[t]
\caption{Comparison of aggregate signature schemes} \label{tab:pkas-compare}
\vs \small \addtolength{\tabcolsep}{0.6pt}
\renewcommand{\arraystretch}{1.2}
    \begin{tabularx}{6.50in}{lcccccccl}
    \hline
    Scheme                  & Type & ROM & KOSK & PK Size & AS Size
                            & Sign Time & Verify Time & Assumption \\
    \hline
    BGLS \cite{BonehGLS03}  & Full & Yes & No & $1k_p$ & $1k_p$
                            & $1$E & $l$P& CDH \\
    \cdashline{2-9}
    GR \cite{GentryR06}     & IB, Sync & Yes & No & -- & $2k_p + \lambda$
                            & $3$E & $3$P + $l$E & CDH \\
    AGH \cite{AhnGH10}      & Sync & Yes & Yes & $1k_p$ & $2k_p + 32$
                            & $6$E & $4$P + $l$E & CDH \\
    AGH \cite{AhnGH10}      & Sync & No  & Yes & $1k_p$ & $2k_p + 32$
                            & $10$E & $8$P + $l$E & CDH \\
    \cdashline{2-9}
    LMRS \cite{LysyanskayaMRS04} & Seq & Yes & No & $1k_f$ & $1k_f$
                            & $l$E & $l$E & cert TDP \\
    Neven \cite{Neven08}    & Seq & Yes & No & $1k_f$ & $1k_f + 2\lambda$
                            & $1$E + $2l$M & $2l$M & uncert CFP \\
    BGOY \cite{BoldyrevaGOY10} & IB, Seq  & Yes & No & -- & $3k_p$
                            & $4$P + $l$E & $4$P + $l$E & Interactive \\
    GLOW \cite{GerbushLOW12} & IB, Seq  & Yes & No & -- & $5k_f$
                            & $10$P + $2l$E & $10$P + $2l$E & Static \\
    \cdashline{3-9}
    LOSSW \cite{LuOSSW06}   & Seq & No & Yes & $2\lambda k_p$ & $2k_p$
                            & $2$P + $4\lambda l$M & $2$P + $2\lambda l$M & CDH \\
    Schr\"oder \cite{Schroder11} & Seq  & No & Yes & $2k_p$ & $4k_p$
                            & $l$P + $2l$E & $l$P + $l$E & Interactive \\
    Ours                    & Seq  & No & Yes & $11k_p$ & $8k_p$
                            & $8$P + $5l$E & $8$P + $4l$E & Static \\
    Ours                    & Seq  & No & Yes & $13k_p$ & $6k_p$
                            & $6$P + $6l$E & $6$P + $3l$E & Static \\
    \hline
    \multicolumn{9}{l}{ROM = random oracle model, KOSK = certified-key model,
    IB = identity based} \\
    \multicolumn{9}{l}{$\lambda$ = security parameter,
    $k_p, k_f$ = the bit size of elements for pairing and factoring, $l$ =
    the number of signers} \\
    \multicolumn{9}{l}{P = pairing computation, E = exponentiation,
    M = multiplication}
    \end{tabularx}
\end{table*}

\subsection{Our Contributions}

Challenged by the above question, the motivation of our research is to
construct an efficient SAS scheme secure in the standard model (i.e., without
employing assumptions such as random oracle or interactive assumptions as
part of the proof) with short public keys (e.g., a constant number of group
elements). To achieve this goal, we use the PKS scheme derived from the
identity-based encryption (IBE) scheme, which adopts the innovative dual
system encryption techniques of Waters \cite{Waters09,LewkoW10}. That is, an
IBE scheme is first converted to a PKS scheme by the clever observation of
Naor \cite{BonehF01}. The PKS schemes that adopt the dual system encryption
techniques are the scheme of Waters \cite{Waters09}, which includes a random
tag in a signature, and the scheme of Lewko and Waters \cite{LewkoW10}, which
does not include a random tag in a signature. The scheme of Waters is not
appropriate to aggregate signatures since the random tags in signatures
cannot be compressed into a single value. The scheme of Lewko and Waters in
composite order groups is easily converted to an aggregate signature scheme
if an element in $\G_{p_3}$ is moved from a private key to a public key, but
it is inefficient because of composite order groups.\footnote{We can safely
move the element in $\G_{p_3}$ from a private key to a public key since it is
always given in assumptions. Lewko obtained a prime order IBE scheme by
translating the Lewko-Waters composite order IBE scheme using the dual
pairing vector spaces \cite{Lewko12}. One may consider to construct an
aggregate signature scheme using this IBE scheme. However, it is not easy to
aggregate individual signatures since the dual orthonormal basis vectors of
each users are randomly generated.}

Therefore, we start the construction from the IBE scheme of Lewko and Waters
(LW-IBE) \cite{LewkoW10} in the prime order (asymmetric) bilinear groups.
However, this LW-PKS scheme, which is directly derived from the LW-IBE
scheme, is not easily converted to an SAS scheme (as far as we see). The
reason is that we need a PKS scheme that supports multi-users and public
re-randomization to construct an SAS scheme by using the randomness reuse
technique of Lu et al. \cite{LuOSSW06}, but the LW-PKS scheme does not
support these two properties. Technically speaking, this directly converted
LW-PKS scheme does not support multi-users and public re-randomization since
group elements $g, u, h \in \G$ cannot be published in a public key. To
resolve this problem, we devised two independent solutions. Our first
solution for this problem is to randomize the verification algorithm of the
LW-PKS scheme and publish $g, u, h \in \G$ in the public key. That is, the
verification components are additionally multiplied by $\hat{v},
\hat{v}^{\nu_3}, \hat{v}^{-\pi}$ to prevent the verification of invalid
signatures. Our second solution for this problem is to randomize the group
elements of the public key. That it, we publish $g w_1^{c_g}, u w_1^{c_u}, h
w_1^{c_h} \in \G$ in the public key instead of $g, u, h \in \G$.

Here we first construct two PKS schemes in prime order (asymmetric) bilinear
groups that support multi-users and public re-randomization by applying our
two solutions to the LW-PKS scheme, and we prove their security by using the
dual system encryption technique. Next, we convert the modified PKS schemes
to SAS schemes with short public keys by using the randomness reuse
technique, and then we prove their security based on the traditional static
assumptions without random oracles. Additionally, we present an efficient
multi-signature scheme based on our modified PKS scheme. Table
\ref{tab:pkas-compare} gives the comparison of past aggregate signature
schemes with ours.

\subsection{Additional Related Work}

There are some works on aggregate signature schemes that allow signers to
communicate with each other or schemes that compress only partial elements of
a signature in the aggregate algorithm \cite{BellareN07,BagherzandiJ10,
Herranz06,BrogleGR12}. Generally, communication resources of computer systems
are very expensive compared with computation resources. Thus, it is preferred
to perform several expensive computational operations rather than one single
communication exchange. Additionally, a signature scheme with added
communications does not correspond to a pure PKS scheme, but corresponds more
to a multi-party protocol. In addition, PKS schemes that compress just
partial elements of signatures cannot be considered aggregate signature
schemes since the total size of signatures is still proportional to the
number of signers.

Another research area related to aggregate signature is multi-signature
\cite{ItakuraN83,Boldyreva03,LuOSSW06}. Multi-signature is a special type of
aggregate signature in which all signers generate signatures on the same
message, and then any user can combine these signatures into a single
signature. Aggregate message authentication code (AMAC) is the symmetric key
analogue of aggregate signature: Katz and Lindell \cite{KatzL08} introduced
the concept of AMAC and showed that it is possible to construct an AMAC
scheme based on any message authentication code scheme.

\section{Preliminaries}

In this section, we define asymmetric bilinear groups and introduce the
complexity assumptions for our schemes. The description of LW-IBE and LW-PKS
schemes is given in Appendix \ref{sec:lw-ibe}.

\subsection{Asymmetric Bilinear Groups}

Let $\G, \hat{\G}$ and $\G_{T}$ be multiplicative cyclic groups of prime
order $p$. Let $g$ and $\hat{g}$ be generators of $\G$ and $\hat{\G}$,
respectively. The bilinear map $e : \G \times \hat{\G} \rightarrow \G_{T}$
has the following properties:
\begin{enumerate}
\item Bilinearity: $\forall u \in \G, \forall \hat{v} \in \hat{\G}$ and
    $\forall a,b \in \Z_p$, $e(u^a,\hat{v}^b) = e(u,\hat{v})^{ab}$.
\item Non-degeneracy: $e(g,\hat{g}) \neq 1$, that is, $e(g,\hat{g})$ is a
    generator of $\G_T$.
\end{enumerate}
We say that $\G, \hat{\G}, \G_T$ are bilinear groups with no efficiently
computable isomorphisms if the group operations in $\G, \hat{\G},$ and $\G_T$
as well as the bilinear map $e$ are all efficiently computable, but there are
no efficiently computable isomorphisms between $\G$ and $\hat{\G}$.

\subsection{Complexity Assumptions}

We employ four assumptions in prime order bilinear groups. The SXDH and DBDH
assumptions have been used extensively, while the LW1 and LW2 assumptions
were introduced by Lewko and Waters \cite{LewkoW10}.

\begin{assumption}[Symmetric eXternal Diffie-Hellman, SXDH]
Let $(p, \G, \hat{\G}, \G_T, e)$ be a description of the asymmetric bilinear
group of prime order $p$. Let $g, \hat{g}$ be generators of $\G, \hat{\G}$
respectively. The assumption is that if the challenge values
    $$D = ((p, \G, \hat{\G}, \G_T, e),
    g, \hat{g}, \hat{g}^a, \hat{g}^b) \mbox{ and } T,$$
are given, no PPT algorithm $\mc{B}$ can distinguish $T = T_0 = \hat{g}^{ab}$
from $T = T_1 = \hat{g}^c$ with more than a negligible advantage. The
advantage of $\mc{B}$ is defined as $\Adv_{\mc{B}}^{SXDH}(\lambda) = \big|
\Pr[\mc{B}(D, T_0)=0] - \Pr[\mc{B}(D, T_1)=0] \big|$ where the probability is
taken over the random choice of $a, b, c \in \Z_p$.
\end{assumption}

\begin{assumption}[LW1]
Let $(p, \G, \hat{\G}, \G_T, e)$ be a description of the asymmetric bilinear
group of prime order $p$ with the security parameter $\lambda$. Let $g,
\hat{g}$ be generators of $\G, \hat{\G}$ respectively. The assumption is that
if the challenge values
    \begin{align*}
    D = ((p, \G, \hat{\G}, \G_T, e),
        g, g^b, \hat{g}, \hat{g}^a, \hat{g}^b, \hat{g}^{ab^2}, \hat{g}^{b^2},
        \hat{g}^{b^3}, \hat{g}^c, \hat{g}^{ac}, \hat{g}^{bc}, \hat{g}^{b^2 c},
        \hat{g}^{b^3 c}) \mbox{ and } T
    \end{align*}
are given, no PPT algorithm $\mc{B}$ can distinguish $T = T_0 = \hat{g}^{ab^2
c}$ from $T = T_1 = \hat{g}^d$ with more than a negligible advantage. The
advantage of $\mc{B}$ is defined as $\Adv_{\mc{B}}^{LW1} (\lambda) = \big|
\Pr[\mc{B}(D,T_0)=0] - \Pr[\mc{B}(D,T_1)=0] \big|$ where the probability is
taken over the random choice of $a, b, c, d \in \Z_p$.
\end{assumption}

\begin{assumption}[LW2]
Let $(p, \G, \hat{\G}, \G_T, e)$ be a description of the asymmetric bilinear
group of prime order $p$. Let $g, \hat{g}$ be generators of $\G, \hat{\G}$
respectively. The assumption is that if the challenge values
    $$D = ((p, \G, \hat{\G}, \G_T, e),
    g, g^a, g^b, g^c, \hat{g}, \hat{g}^a, \hat{g}^{a^2}, \hat{g}^{bx},
    \hat{g}^{abx}, \hat{g}^{a^2x}) \mbox{ and } T,$$
are given, no PPT algorithm $\mc{B}$ can distinguish $T = T_0 = g^{bc}$ from
$T = T_1 = g^d$ with more than a negligible advantage. The advantage of
$\mc{B}$ is defined as $\Adv_{\mc{B}}^{LW2}(\lambda) = \big| \Pr[\mc{B}(D,
T_0)=0] - \Pr[\mc{B}(D, T_1)=0] \big|$ where the probability is taken over
the random choice of $a, b, c, x, d \in \Z_p$.
\end{assumption}

\begin{assumption}[Decisional Bilinear Diffie-Hellman, DBDH]
Let $(p, \G, \hat{\G}, \G_T, e)$ be a description of the asymmetric bilinear
group of prime order $p$. Let $g, \hat{g}$ be generators of $\G, \hat{\G}$
respectively. The assumption is that if the challenge values
    $$D = ((p, \G, \hat{\G}, \G_T, e),
    g, g^a, g^b, g^c, \hat{g}, \hat{g}^a, \hat{g}^b, \hat{g}^c)
    \mbox{ and } T,$$
are given, no PPT algorithm $\mc{B}$ can distinguish $T = T_0 = e(g,
\hat{g})^{abc}$ from $T = T_1 = e(g, \hat{g})^{d}$ with more than a
negligible advantage. The advantage of $\mc{B}$ is defined as
$\Adv_{\mc{B}}^{DBDH}(\lambda) = \big| \Pr[\mc{B}(D, T_0)=0] - \Pr[\mc{B}(D,
T_1)=0] \big|$ where the probability is taken over the random choice of $a,
b, c, d \in \Z_p$.
\end{assumption}

The LW1 and LW2 assumptions are falsifiable since they are not interactive
(or even $q$-type) assumptions and they obviously hold in the generic
bilinear group model since the target polynomial in $T$ is independent of
given polynomials in $D$.

\section{Public-Key Signature} \label{sec:pks}

In this section, we propose two PKS schemes with short public keys and prove
their security under static assumptions.

\subsection{Definitions}

The concept of PKS was introduced by Diffie and Hellman \cite{DiffieH76}. In
PKS, a signer first generates a public key and a private key, and then he
publishes the public key. The signer generates a signature on a message by
using his private key. A verifier can check the validity of the signer's
signature on the message by using the signer's public key. A PKS scheme is
formally defined as follows:

\begin{definition}[Public-Key Signature]
A public key signature (PKS) scheme consists of three PPT algorithms
\tb{KeyGen}, \tb{Sign}, and \tb{Verify}, which are defined as follows:
\begin{description}
\item $\tb{KeyGen}(1^{\lambda})$. The key generation algorithm takes as
    input the security parameters $1^{\lambda}$ and outputs a public key
    $PK$ and a private key $SK$.

\item $\tb{Sign}(M, SK)$. The signing algorithm takes as input a message
    $M$ and a private key $SK$ and outputs a signature $\sigma$.

\item $\tb{Verify}(\sigma, M, PK)$. The verification algorithm takes as
    input a signature $\sigma$, a message $M$, and a public key $PK$ and
    outputs either $1$ or $0$, depending on the validity of the signature.
\end{description}
The correctness requirement is that for any $(PK,SK)$ output by \tb{KeyGen}
and any $M \in \mathcal{M}$, we have $\tb{Verify} \lb (\tb{Sign} (M,SK), M,
PK) = 1$. We can relax this notion to require that the verification is
correct with overwhelming probability over all the randomness of the
experiment.
\end{definition}

The security model of PKS is defined as existential unforgeability under a
chosen message attack (EUF-CMA), and this was formally defined by Goldwasser
et al. \cite{GoldwasserMR88}. In this security model, an adversary adaptively
requests a polynomial number of signatures on messages through the signing
oracle, and he finally outputs a forged signature on a message $M^*$. If the
message $M^*$ was not queried to the signing oracle and the forged signature
is valid, then the adversary wins this game. The security of PKS is formally
defined as follows:

\begin{definition}[Security]
The security notion of existential unforgeability under a chosen message
attack is defined in terms of the following experiment between a challenger
$\mc{C}$ and a PPT adversary $\mc{A}$:
\begin{enumerate}
\item \tb{Setup}: $\mc{C}$ first generates a key pair $(PK,SK)$ by running
    \tb{KeyGen}, and gives $PK$ to $\mc{A}$.

\item \tb{Signature Query}: Then $\mc{A}$, adaptively and polynomially
    many times, requests a signature query on a message $M$ under the
    challenge public key $PK$, and receives a signature $\sigma$
    generated by running \tb{Sign}.

\item \tb{Output}: Finally, $\mc{A}$ outputs a forged signature
    $\sigma^*$ on a message $M^*$. $\mc{C}$ then outputs $1$ if the
    forged signature satisfies the following two conditions, or outputs
    $0$ otherwise: 1) $\tb{Verify}(\sigma^*, M^*, PK) = 1$ and 2) $M^*$
    was not queried by $\mc{A}$ to the signing oracle.
\end{enumerate}
The advantage of $\mc{A}$ is defined as $\Adv_{\mc{A}}^{PKS}(\lambda) = \Pr
[\mc{C} = 1]$ where the probability is taken over all the randomness of the
experiment. A PKS scheme is existentially unforgeable under a chosen message
attack if all PPT adversaries have at most a negligible advantage in the
above experiment (for a large enough security parameter).
\end{definition}

\subsection{Construction} \label{sec:pks-prime}

We construct PKS schemes with a short public key that will be augmented to
support \textit{multi-users} and \textit{public re-randomization}. To
construct a PKS scheme with a short public key, we may convert the LW-IBE
scheme \cite{LewkoW10} in prime order groups to the LW-PKS scheme in prime
order groups by using the transformation of Naor \cite{BonehF01}. However,
this directly converted LW-PKS scheme does not support multi-users and public
re-randomization since it is necessary to publish additional public key
components: Specifically, we need to publish an element $g$ for multi-users
and elements $g, u, h$ for public re-randomization. Note that $\hat{g},
\hat{u}, \hat{h}$ are already in the public key, but $g, u, h$ are not. One
may try to publish $g, u, h$ in the public key, but a technical difficulty
arises in this case in that the simulator of the security proof can easily
distinguish from the normal verification algorithm to the semi-functional
one, without using an adversary. Thus the simulator of Lewko and Waters sets
the CDH value into the elements $g, u, h$ to prevent the simulator from
creating these elements.

To solve this problem, we devise two independent solutions. The first
solution allows a PKS scheme to safely publish elements $g, u, h$ in the
public key for multi-users and public re-randomization. The main idea is to
additionally randomize the verification components using $\hat{v},
\hat{v}^{\nu_3}, \hat{v}^{-\pi}$ in the verification algorithm. If a valid
signature is given in the verification algorithm, then the additionally added
randomization elements $\hat{v}, \hat{v}^{\nu_3}, \hat{v}^{-\pi}$ are
canceled. Otherwise, the added randomization components prevent the
verification of an invalid signature. Therefore, the simulator of the
security proof cannot detect the changes of the verification algorithm even
if $g, u, h$ are published, since the additional elements $\hat{v},
\hat{v}^{\nu_3}, \hat{v}^{-\pi}$ prevent the signature verification.

Our second solution for this problem is to publish randomized components $g
w_1^{c_g}, u w_1^{c_u}, h w_1^{c_h}$ that are additionally multiplied with
random elements rather than directly publishing $g, u, h$. In this case, the
simulator can create these elements since the random exponents $c_g, c_u,
c_h$ can be used to cancel out the CDH value embedded in the elements $g, u,
h$. Additionally, the simulator cannot detect the changes of verification
components for the forged signature because of the added elements $w_1^{c_g},
w_1^{c_u}, w_1^{c_h}$. This solution does not increase the number of group
elements in the signatures, rather it increases the number of public keys
since additional elements $w_2^{c_g}, w^{c_g}, w_2^{c_u}, w^{c_u}, w_2^{c_h},
w^{c_h}$ should be published.

\subsubsection{Our PKS1 Scheme}

Our first PKS scheme in prime order bilinear groups is described as follows:

\begin{description}
\item [\textbf{PKS1.KeyGen}($1^\lambda$):] This algorithm first generates
    the asymmetric bilinear groups $\G, \hat{\G}$ of prime order $p$ of bit
    size $\Theta(\lambda)$. It chooses random elements $g, w \in \G$ and
    $\hat{g}, \hat{v} \in \hat{\G}$. Next, it chooses random exponents
    $\nu_1, \nu_2, \nu_3, \phi_1, \phi_2, \phi_3 \in \Z_p$ and sets $\tau =
    \phi_1 + \nu_1 \phi_2 + \nu_2 \phi_3, \pi = \phi_2 + \nu_3 \phi_3$. It
    selects random exponents $\alpha, x, y \in \Z_p$ and sets $u = g^x, h =
    g^y, \hat{u} = \hat{g}^x, \hat{h} = \hat{g}^y, w_1 = w^{\phi_1}, w_2 =
    w^{\phi_2}, w_3 = w^{\phi_3}$. It outputs a private key $SK = \alpha$
    and a public key as
    \begin{align*}
    PK = \Big(~
    &   (p, \G, \hat{\G}, \G_T, e),~
        g, u, h,~ w_1, w_2, w_3, w,~
        \hat{g}, \hat{g}^{\nu_1}, \hat{g}^{\nu_2}, \hat{g}^{-\tau},~ \\
    &   \hat{u}, \hat{u}^{\nu_1}, \hat{u}^{\nu_2}, \hat{u}^{-\tau},~
        \hat{h}, \hat{h}^{\nu_1}, \hat{h}^{\nu_2}, \hat{h}^{-\tau},~
        \hat{v}, \hat{v}^{\nu_3}, \hat{v}^{-\pi},~
        \Omega = e(g,\hat{g})^{\alpha}
    ~\Big).
    \end{align*}

\item [\textbf{PKS1.Sign}($M, SK$):] This algorithm takes as input a
    message $M \in \bits^k$ where $k < \lambda$ and a private key $SK =
    \alpha$. It selects random exponents $r, c_1, c_2 \in \Z_p$ and outputs
    a signature as
    \begin{align*}
    \sigma = \Big(~
    &   W_{1,1} = g^{\alpha} (u^M h)^r w_1^{c_1},
        W_{1,2} = w_2^{c_1},
        W_{1,3} = w_3^{c_1},
        W_{1,4} = w^{c_1},~ \\
    &   W_{2,1} = g^r w_1^{c_2},
        W_{2,2} = w_2^{c_2},
        W_{2,3} = w_3^{c_2},
        W_{2,4} = w^{c_2}
    ~\Big).
    \end{align*}

\item [\textbf{PKS1.Verify}($\sigma, M, PK$):] This algorithm takes as
    input a signature $\sigma$ on a message $M \in \bits^k$ under a public
    key $PK$. It first chooses random exponents $t, s_1, s_2 \in \Z_p$ and
    computes verification components as
    \begin{align*}
    &   V_{1,1} = \hat{g}^t,
        V_{1,2} = (\hat{g}^{\nu_1})^t \hat{v}^{s_1},
        V_{1,3} = (\hat{g}^{\nu_2})^t (\hat{v}^{\nu_3})^{s_1},
        V_{1,4} = (\hat{g}^{-\tau})^t (\hat{v}^{-\pi})^{s_1}, \\
    &   V_{2,1} = (\hat{u}^M \hat{h})^t,
        V_{2,2} = ((\hat{u}^{\nu_1})^M \hat{h}^{\nu_1})^t \hat{v}^{s_2},
        V_{2,3} = ((\hat{u}^{\nu_2})^M \hat{h}^{\nu_2})^t (\hat{v}^{\nu_3})^{s_2},
        V_{2,4} = ((\hat{u}^{-\tau})^M \hat{h}^{-\tau})^t (\hat{v}^{-\pi})^{s_2}.
    \end{align*}
    Next, it verifies that $\prod_{i=1}^4 e(W_{1,i}, V_{1,i}) \cdot
    \prod_{i=1}^4 e(W_{2,i}, V_{2,i})^{-1} \stackrel{?}{=} \Omega^t$. If
    this equation holds, then it outputs $1$. Otherwise, it outputs $0$.
\end{description}

We note that the inner product of $(\phi_1, \phi_2, \phi_3, 1)$ and $(1,
\nu_1, \nu_2, -\tau)$ is zero since $\tau = \phi_1 + \nu_1 \phi_2 + \nu_2
\phi_3$, and the inner product of $(\phi_1, \phi_2, \phi_3, 1)$ and $(0, 1,
\nu_3, -\pi)$ is zero since $\pi = \phi_2 + \nu_3 \phi_3$. Using these facts,
the correctness of PKS is easily obtained from the equation
    \begin{align*}
    \prod_{i=1}^4 e(W_{1,i}, V_{1,i}) \cdot \prod_{i=1}^4 e(W_{2,i}, V_{2,i})^{-1}
        = e(g^{\alpha} (u^M h)^r, \hat{g}^t) \cdot
          e(g^r, (\hat{u}^M \hat{h})^t)^{-1}
        = \Omega^t.
    \end{align*}

\subsubsection{Our PKS2 Scheme}

Our second PKS scheme in prime order bilinear groups is described as follows:

\begin{description}
\item [\tb{PKS2.KeyGen}($1^\lambda$):] This algorithm first generates the
    asymmetric bilinear groups $\G, \hat{\G}$ of prime order $p$ of bit
    size $\Theta(\lambda)$. It chooses random elements $g, w \in \G$ and
    $\hat{g} \in \hat{\G}$. Next, it selects random exponents $\nu, \phi_1,
    \phi_2 \in \Z_p$ and sets $\tau = \phi_1 + \nu \phi_2$. It also selects
    random exponents $\alpha, x, y \in \Z_p$ and sets $u = g^{x}, h =
    g^{y}, \hat{u} = \hat{g}^{x}, \hat{h} = \hat{g}^y, w_1 = w^{\phi_1},
    w_2 = w^{\phi_2}$. It outputs a private key $SK = (\alpha, g, u, h)$
    and a public key by selecting random values $c_g, c_u, c_h \in \Z_p$ as
    \begin{align*}
    PK = \Big(~
    &   (p, \G, \hat{\G}, \G_T, e),~
        g w_1^{c_g}, w_2^{c_g}, w^{c_g},~
        u w_1^{c_u}, w_2^{c_u}, w^{c_u},~
        h w_1^{c_h}, w_2^{c_h}, w^{c_h},~ \\
    &   w_1, w_2, w,~
        \hat{g}, \hat{g}^{\nu}, \hat{g}^{-\tau},~
        \hat{u}, \hat{u}^{\nu}, \hat{u}^{-\tau},~
        \hat{h}, \hat{h}^{\nu}, \hat{h}^{-\tau},~
        \Omega = e(g,\hat{g})^{\alpha}
    ~\Big).
    \end{align*}

\item [\tb{PKS2.Sign}($M, SK$):] This algorithm takes as input a message $M
    \in \Z_p$ and a private key $SK = (\alpha, g, u, h)$ with $PK$. It
    selects random exponents $r, c_1, c_2 \in \Z_p$ and outputs a signature
    as
    \begin{align*}
    \sigma = \Big(~
    &   W_{1,1} = g^{\alpha} (u^M h)^r w_1^{c_1},~
        W_{1,2} = w_2^{c_1},~
        W_{1,3} = w^{c_1},~ \\
    &   W_{2,1} = g^r w_1^{c_2},~
        W_{2,2} = w_2^{c_2},~
        W_{2,3} = w^{c_2}
    ~\Big).
    \end{align*}

\item [\tb{PKS2.Verify}($\sigma, M, PK$):] This algorithm takes as input a
    signature $\sigma$ on a message $M \in \Z_p$ under a public key $PK$.
    It chooses a random exponent $t \in \Z_p$ and computes verification
    components as
    \begin{align*}
    &   V_{1,1} = \hat{g}^t,~
        V_{1,2} = (\hat{g}^{\nu})^t,~
        V_{1,3} = (\hat{g}^{-\tau})^t,~ \\
    &   V_{2,1} = (\hat{u}^M \hat{h})^t,~
        V_{2,2} = ((\hat{u}^{\nu})^M \hat{h}^{\nu})^t,~
        V_{2,3} = ((\hat{u}^{-\tau})^M \hat{h}^{-\tau})^t.
    \end{align*}
    Next, it verifies that $\prod_{i=1}^3 e(W_{1,i}, V_{1,i}) \cdot
    \prod_{i=1}^3 e(W_{2,i}, V_{2,i})^{-1} \stackrel{?}{=} \Omega^t$. If
    this equation holds, then it outputs $1$. Otherwise, it outputs $0$.
\end{description}

We note that the inner product of $(\phi_1, \phi_2, 1)$ and $(1, \nu, -\tau)$
is zero since $\tau = \phi_1 + \nu \phi_2$. Using this fact, the correctness
of PKS is easily obtained from the following equation
    \begin{align*}
    \prod_{i=1}^3 e(W_{1,i}, V_{1,i}) \cdot \prod_{i=1}^3 e(W_{2,i}, V_{2,i})^{-1}
        = e(g^{\alpha} (u^M h)^r, \hat{g}^t) \cdot
          e(g^r, (\hat{u}^M \hat{h})^t)^{-1}
        = \Omega^t.
    \end{align*}

\subsection{Security Analysis}

We prove the security of our PKS schemes without random oracles under static
assumptions. To prove the security, we use the dual system encryption
technique of Lewko and Waters \cite{LewkoW10}. The dual system encryption
technique was originally developed to prove the full-model security of IBE
and its extensions, but it also can be used to prove the security of PKS by
using the transformation of Naor \cite{BonehF01}. Note that Gerbush et al.
\cite{GerbushLOW12} developed the dual form signature technique that is a
variation of the dual system encryption technique to prove the security of
their PKS schemes.

\subsubsection{Analysis of PKS1}

\begin{theorem} \label{thm:pks1-prime}
The above \tb{PKS1} scheme is existentially unforgeable under a chosen
message attack if the SXDH, LW2, DBDH assumptions hold. That is, for any PPT
adversary $\mc{A}$, there exist PPT algorithms $\mc{B}_1, \mc{B}_2, \mc{B}_3$
such that
    $\Adv_{\mc{A}}^{PKS}(\lambda)
    \leq \Adv_{\mc{B}_1}^{SXDH}(\lambda) + q \Adv_{\mc{B}_2}^{LW2}(\lambda) +
    \Adv_{\mc{B}_3}^{DBDH}(\lambda)$
where $q$ is the maximum number of signature queries of $\mc{A}$.
\end{theorem}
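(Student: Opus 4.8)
The plan is to use the dual system encryption technique of Lewko and Waters via a hybrid argument over a sequence of games. The core idea is to define \emph{semi-functional} versions of both the signatures (which the challenger returns to the signature oracle) and the verification components (which the challenger uses to check the forgery), and then gradually convert the real game into an ideal game in which the forgery can no longer pass verification except with negligible probability. Concretely, I would define a semi-functional signature by augmenting the normal signature with extra randomness in the ``$\hat{v}$-orthogonal'' direction governed by $(\nu_1,\nu_2,-\tau)$ and $(\nu_3,-\pi)$, and a semi-functional verification that pairs nontrivially against that direction. The correctness identity already exploits the two orthogonality relations $\langle(\phi_1,\phi_2,\phi_3,1),(1,\nu_1,\nu_2,-\tau)\rangle=0$ and $\langle(\phi_1,\phi_2,\phi_3,1),(0,1,\nu_3,-\pi)\rangle=0$; semi-functionality breaks exactly one of these so that a semi-functional signature still verifies against a normal verification (\emph{nominal} semi-functionality) but a semi-functional signature paired against semi-functional verification picks up an extra $\G_T$ factor.

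First I would set up the game sequence: $\mathbf{Game_0}$ is the real EUF-CMA game; $\mathbf{Game_1}$ changes the challenger's verification of the forgery from normal to semi-functional; then for $j=1,\dots,q$, a sequence of games $\mathbf{Game_{2,j}}$ makes the first $j$ signatures returned to $\mathcal{A}$ semi-functional while the rest stay normal; finally $\mathbf{Game_3}$ is a game where the forgery verifies only with negligible probability. The transition $\mathbf{Game_0}\to\mathbf{Game_1}$ I expect to reduce to SXDH (in $\hat{\G}$), since switching the verification components to semi-functional amounts to replacing a DDH tuple in the $\hat{\G}$-side randomizers $\hat{v},\hat{v}^{\nu_3},\hat{v}^{-\pi}$; crucially, the extra published elements $g,u,h\in\G$ do not let the simulator self-distinguish, precisely because the verification is randomized by $\hat v$-terms as emphasized in the construction. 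Each single-signature transition $\mathbf{Game_{2,j-1}}\to\mathbf{Game_{2,j}}$ I would reduce to LW2, embedding the assumption so that the $j$-th signature is normal or semi-functional depending on the challenge term, accounting for the factor $q$ in the bound. The final step $\mathbf{Game_{2,q}}\to\mathbf{Game_3}$ I would reduce to DBDH: once all signatures are semi-functional, the term $\Omega=e(g,\hat g)^\alpha$ is information-theoretically hidden in the appropriate component, so a valid semi-functional forgery yields a solver for the DBDH challenge embedded in $\alpha$.

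The subtle point throughout is the \emph{nominal semi-functionality / pairwise independence} argument needed to make the $\mathbf{Game_{2,j}}$ hybrid go through: when the simulator embeds the LW2 instance into the $j$-th signature, it must be unable to test by itself whether that signature is semi-functional, which requires that the message $M_j$ being signed and the forgery message $M^*$ are tied together through the $(u,h)$-pair in a way that looks identically distributed in the adversary's view. I would argue that conditioned on the public key, the relevant exponents masking the semi-functional component are pairwise independent as functions of $M$, so the simulator's own (semi-functional) verification of the $j$-th signature would always succeed regardless of the hidden bit---exactly because $M_j$ can legitimately be queried while $M^*\neq M_j$. This is the step I expect to be the main obstacle, both because it is where the restriction $M^*$-is-fresh is consumed and because getting the distributions to match requires choosing the semi-functional parameters so that the adversary cannot separate the nominal case from the true semi-functional case.

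I would organize the write-up as follows: state the semi-functional \textbf{Sign} and \textbf{Verify} algorithms as auxiliary definitions, verify the nominal property (semi-functional signature verifies under normal verification, and a normal signature verifies under semi-functional verification), then prove three lemmas---one per assumption (SXDH, LW2, DBDH)---bounding the advantage gap across the corresponding game transition, and finally combine them by the triangle inequality to obtain
\begin{equation*}
\Adv_{\mc{A}}^{PKS}(\lambda) \leq \Adv_{\mc{B}_1}^{SXDH}(\lambda) + q\,\Adv_{\mc{B}_2}^{LW2}(\lambda) + \Adv_{\mc{B}_3}^{DBDH}(\lambda).
\end{equation*}
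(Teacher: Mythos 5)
Your proposal matches the paper's proof essentially step for step: the same semi-functional signing/verification definitions, the same game sequence (normal verification $\to$ semi-functional verification under SXDH, a $q$-step hybrid over signature queries under LW2, then rejection of the forgery under DBDH), and the same key subtlety of nominal semi-functionality resolved via the pairwise independence of $AM+B$ together with the freshness of $M^*$. This is the paper's argument; no gaps.
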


\begin{proof}
To use the dual system encryption technique of Lewko and Waters
\cite{LewkoW10}, we first describe a semi-functional signing algorithm and a
semi-functional verification algorithm. They are not used in a real system;
rather, they are used in the security proof.
When comparing our proof to that of Lewko and Waters, we employ a different
assumption since we have published additional elements $g, u, h$ used in
aggregation (in fact, direct adaptation of the earlier technique will break
the assumption and thus the proof). A crucial idea in our proof is that we
have added elements $\hat{v}, \hat{v}^{\nu_3}, \hat{v}^{-\pi}$ in the public
key that are used in randomization of the verification algorithm. In the
security proof when moving from normal to semi-functional verification, it is
the randomization elements $\hat{v}, \hat{v}^{\nu_3}, \hat{v}^{-\pi}$ that
are expanded to the semi-functional space; this enables deriving
semi-functional verification as part of the security proof under our
assumption, without being affected by the publication of the additional
public key elements used for aggregation.

For the semi-functional signing and verification, we set $f = g^{y_f},
\hat{f} = \hat{g}^{y_f}$ where $y_f$ is a random exponent in $\Z_p$.

\begin{description}
\item [\textbf{PKS1.SignSF}.] The semi-functional signing algorithm first
    creates a normal signature using the private key. Let $(W'_{1,1},
    \ldots, W'_{2,4})$ be the normal signature of a message $M$ with random
    exponents $r, c_1, c_2 \in \Z_p$. It selects random exponents $s_k, z_k
    \in \Z_p$ and outputs a semi-functional signature as
    \begin{align*}
    \sigma = \Big(~
    &   W_{1,1} = W'_{1,1} (f^{\nu_1 \nu_3 - \nu_2})^{s_k z_k},~
        W_{1,2} = W'_{1,2} (f^{-\nu_3})^{s_k z_k},~
        W_{1,3} = W'_{1,3} f^{s_k z_k},~
        W_{1,4} = W'_{1,4},~\\
    &   W_{2,1} = W'_{2,1} (f^{\nu_1 \nu_3 - \nu_2})^{s_k},~
        W_{2,2} = W'_{2,2} (f^{-\nu_3})^{s_k},~
        W_{2,3} = W'_{2,3} f^{s_k},~
        W_{2,4} = W'_{2,4}
    ~\Big).
    \end{align*}

\item [\textbf{PKS1.VerifySF}.] The semi-functional verification algorithm
    first creates normal verification components using the public key. Let
    $(V'_{1,1}, \ldots, V'_{2,4})$ be the normal verification components
    with random exponents $t, s_1, s_2 \in \Z_p$. It chooses random
    exponents $s_c, z_c \in \Z_p$ and computes semi-functional verification
    components as
    \begin{align*}
    &   V_{1,1} = V'_{1,1},~
        V_{1,2} = V'_{1,2},~
        V_{1,3} = V'_{1,3} \hat{f}^{s_c},~
        V_{1,4} = V'_{1,4} (\hat{f}^{-\phi_3})^{s_c}, \\
    &   V_{2,1} = V'_{2,1},~
        V_{2,2} = V'_{2,2},~
        V_{2,3} = V'_{2,3} \hat{f}^{s_c z_c},~
        V_{2,4} = V'_{2,4} (\hat{f}^{-\phi_3})^{s_c z_c}.
    \end{align*}
    Next, it verifies that $\prod_{i=1}^4 e(W_{1,i}, V_{1,i}) \cdot
    \prod_{i=1}^4 e(W_{2,i}, V_{2,i})^{-1} \stackrel{?}{=} \Omega^t$. If
    this equation holds, then it outputs 1. Otherwise, it outputs 0.
\end{description}

\noindent Note that if the semi-functional verification algorithm verifies a
semi-functional signature, then the left part of the above verification
equation contains an additional random element $e(f, \hat{f})^{s_k s_c (z_k -
z_c)}$. If $z_k = z_c$, then the semi-functional verification algorithm
succeeds. In this case, we say that the signature is \textit{nominally}
semi-functional.

The security proof uses a sequence of games $\tb{G}_0, \tb{G}_1, \tb{G}_2$,
and $\tb{G}_3$: The first game $\tb{G}_0$ will be the original security game
and the last game $\tb{G}_3$ will be a game such that an adversary $\mc{A}$
has no advantage. Formally, the hybrid games are defined as follows:

\begin{description}
\item [\textbf{Game} $\tb{G}_0$.] This game is the original security game.
    In this game, the signatures that are given to $\mc{A}$ are normal and
    the challenger use the normal verification algorithm \textbf{Verify} to
    check the validity of the forged signature of $\mc{A}$.

\item [\textbf{Game} $\tb{G}_1$.] We first modify the original game to a
    new game $\tb{G}_1$. This game is almost identical to $\tb{G}_0$ except
    that the challenger uses the semi-functional verification algorithm
    \textbf{VerifySF} to check the validity of the forged signature of
    $\mc{A}$.

\item [\textbf{Game} $\tb{G}_2$.] Next, we change $\tb{G}_1$ to a new game
    $\tb{G}_2$. This game is the same as the $\tb{G}_1$ except that the
    signatures that are given to $\mc{A}$ will be semi-functional. At this
    moment, the signatures are semi-functional and the challenger uses the
    semi-functional verification algorithm \textbf{VerifySF} to check the
    validity of the forged signature. Suppose that $\mc{A}$ makes at most
    $q$ signature queries. For the security proof, we define a sequence of
    hybrid games $\tb{G}_{1,0}, \ldots, \tb{G}_{1,k}, \ldots, \tb{G}_{1,q}$
    where $\tb{G}_{1,0} = \tb{G}_1$. In $\tb{G}_{1,k}$, a normal signature
    is given to $\mc{A}$ for all $j$-th signature queries such that $j >
    k$, and a semi-functional signature is given to $\mc{A}$ for all $j$-th
    signature queries such that $j \leq k$. It is obvious that
    $\tb{G}_{1,q}$ is equal to $\tb{G}_2$.

\item [\textbf{Game} $\tb{G}_3$.] Finally, we define a new game $\tb{G}_3$.
    This game differs from $\tb{G}_2$ in that the challenger always rejects
    the forged signature of $\mc{A}$. Therefore, the advantage of this game
    is zero since $\mc{A}$ cannot win this game.
\end{description}

For the security proof, we show the indistinguishability of each hybrid game.
We informally describe the meaning of each indistinguishability as follows:
\begin{itemize}
\item Indistinguishability of $\tb{G}_0$ and $\tb{G}_1$: This property
    shows that $\mc{A}$ cannot forge a semi-functional signature if it is
    only given normal signatures. That is, if $\mc{A}$ forges a
    semi-functional signature, then it can distinguish $\tb{G}_0$ from
    $\tb{G}_1$.

\item Indistinguishability of $\tb{G}_1$ and $\tb{G}_2$: This property
    shows that the probability of $\mc{A}$ forging a normal signature is
    almost the same when the signatures given to the adversary are changed
    from a normal type to a semi-functional type. That is, if the
    probability of $\mc{A}$ forging a normal signature is different in
    $\tb{G}_1$ and $\tb{G}_2$, then $\mc{A}$ can distinguish the two games.

\item Indistinguishability of $\tb{G}_2$ and $\tb{G}_3$: This property
    shows that $\mc{A}$ cannot forge a normal signature if it is only given
    semi-functional signatures. That is, if $\mc{A}$ forges a normal
    signature, then it can distinguish $\tb{G}_2$ from $\tb{G}_3$.
\end{itemize}

The security (unforgeability) of our PKS scheme follows from a hybrid
argument. We first consider an adversary $\mc{A}$ attacking our PKS scheme in
the original security game $\tb{G}_0$. By the indistinguishability of
$\tb{G}_0$ and $\tb{G}_1$, we have that $\mc{A}$ can forge a normal signature
with a non-negligible $\epsilon$ probability, but it can forge a
semi-functional signature with only a negligible probability. Now we should
show that the $\epsilon$ probability of $\mc{A}$ forging a normal signature
is also negligible. By the indistinguishability of $\tb{G}_1$ and $\tb{G}_2$,
we have that the $\epsilon$ probability of $\mc{A}$ forging a normal
signature is almost the same when the signatures given to $\mc{A}$ are
changed from a normal type to a semi-functional type. Finally, by the
indistinguishability of $\tb{G}_2$ and $\tb{G}_3$, we have that $\mc{A}$ can
forge a normal signature with only a negligible probability. Summing up, we
obtain that the probability of $\mc{A}$ forging a semi-functional signature
is negligible (from the indistinguishability of $\tb{G}_0$ and $\tb{G}_1$)
and the probability of $\mc{A}$ forging a normal signature is also negligible
(from the indistinguishability of $\tb{G}_2$ and $\tb{G}_3$).

Let $\Adv_{\mc{A}}^{G_j}$ be the advantage of $\mc{A}$ in $\tb{G}_j$ for
$j=0, \ldots, 3$. Let $\Adv_{\mc{A}}^{G_{1,k}}$ be the advantage of $\mc{A}$
in $\tb{G}_{1,k}$ for $k=0, \ldots, q$. It is clear that
    $\Adv_{\mc{A}}^{G_0} = \Adv_{\mc{A}}^{PKS}(\lambda)$,
    $\Adv_{\mc{A}}^{G_{1,0}} = \Adv_{\mc{A}}^{G_1}$,
    $\Adv_{\mc{A}}^{G_{1,q}} = \Adv_{\mc{A}}^{G_2}$, and
    $\Adv_{\mc{A}}^{G_3} = 0$.
From the following three Lemmas, we prove that it is hard for $\mc{A}$ to
distinguish $\tb{G}_{i-1}$ from $\tb{G}_{i}$ under the given assumptions.
Therefore, we have that
    \begin{align*}
    \Adv_{\mc{A}}^{PKS}(\lambda)
    & = \Adv_{\mc{A}}^{G_0} +
        \sum_{i=1}^2 \big( \Adv_{\mc{A}}^{G_i} - \Adv_{\mc{A}}^{G_i} \big)
        - \Adv_{\mc{A}}^{G_3}
    \leq \sum_{i=1}^3 \big| \Adv_{\mc{A}}^{G_{i-1}} - \Adv_{\mc{A}}^{G_i} \big| \\
    & = \Adv_{\mc{B}_1}^{SXDH}(\lambda) +
        \sum_{k=1}^q \Adv_{\mc{B}_2}^{LW2}(\lambda) +
        \Adv_{\mc{B}_3}^{DBDH}(\lambda).
    \end{align*}
This completes our proof.
\end{proof}

\begin{lemma} \label{lem:pks1-prime-1}
If the SXDH assumption holds, then no polynomial-time adversary can
distinguish between $\tb{G}_0$ and $\tb{G}_1$ with non-negligible advantage.
That is, for any adversary $\mc{A}$, there exists a PPT algorithm $\mc{B}_1$
such that
    $\big| \Adv_{\mc{A}}^{G_0} - \Adv_{\mc{A}}^{G_1} \big|
    = \Adv_{\mc{B}_1}^{SXDH}(\lambda)$.
\end{lemma}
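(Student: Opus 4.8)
The plan is to build a PPT distinguisher $\mc{B}_1$ for SXDH that perfectly simulates $\tb{G}_0$ when its challenge term is $T_0 = \hat{g}^{ab}$ and $\tb{G}_1$ when it is $T_1 = \hat{g}^{c}$. Since the only difference between the two games is whether $\mc{A}$'s forgery is checked by $\tb{Verify}$ or by $\tb{VerifySF}$, and since both games hand $\mc{A}$ only \emph{normal} signatures, $\mc{B}_1$ can answer every signature query honestly: it picks $\alpha, x, y, w, \phi_1, \phi_2, \phi_3, \nu_1, \nu_2$ itself, so it knows the entire private key and all the $\G$-side public elements $g, u, h, w_1, w_2, w_3, w$ and runs $\tb{PKS1.Sign}$ directly. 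Because the SXDH instance lives in $\hat{\G}$, the reduction embeds it \emph{only} into the verification components used to test the forgery, never into the public key's $\G$-side or into the answered signatures.

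The embedding exploits that $\hat{v}$ and $\nu_3$ reach the public key solely through $\hat{v}^{\nu_3}$ and $\hat{v}^{-\pi}$ with $\pi = \phi_2 + \nu_3\phi_3$. I would set $\hat{v} = \hat{g}^{d}$ for a known random $d$, implicitly define $\nu_3 = a/d$ (unknown but uniform), and publish $\hat{v}^{\nu_3} = \hat{g}^{a}$ and $\hat{v}^{-\pi} = \hat{g}^{-\phi_2 d}(\hat{g}^{a})^{-\phi_3}$, both of which are computable from the instance even though $\nu_3$ and $\pi$ are not known; all remaining public elements use the known $\nu_1, \nu_2, \tau$ and are honest. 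To test the forgery, $\mc{B}_1$ picks $t$ and implicitly sets the verification randomness $s_1 = b/d$, so that the unchanged component $V_{1,2} = (\hat{g}^{\nu_1})^t \hat{g}^{b}$ needs only $\hat{g}^{b}$, while the two components on which $\tb{Verify}$ and $\tb{VerifySF}$ disagree are formed as $V_{1,3} = (\hat{g}^{\nu_2})^t\, T^{1/d}$ and $V_{1,4} = (\hat{g}^{-\tau})^t (\hat{g}^{b})^{-\phi_2} T^{-\phi_3/d}$. When $T = \hat{g}^{ab}$ these coincide exactly with the normal values $(\hat{g}^{\nu_2})^t(\hat{v}^{\nu_3})^{s_1}$ and $(\hat{g}^{-\tau})^t(\hat{v}^{-\pi})^{s_1}$; when $T = \hat{g}^{c}$ each picks up the factor $\hat{g}^{(c-ab)/d}$ (resp. its $-\phi_3$ power), which is precisely the semi-functional term $\hat{f}^{s_c}$ (resp. $(\hat{f}^{-\phi_3})^{s_c}$) for an implicit, uniformly random $\hat{f}^{s_c}$.

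The second verification row is handled identically after choosing a fresh $z_c$ and implicitly setting $s_2 = z_c s_1$, so that $V_{2,3}$ and $V_{2,4}$ carry $T^{z_c/d}$ and $T^{-\phi_3 z_c/d}$; this reproduces the semi-functional terms $\hat{f}^{s_c z_c}$ and $(\hat{f}^{-\phi_3})^{s_c z_c}$ with the $z_c$ shared across the two rows exactly as in $\tb{PKS1.VerifySF}$. Then $\mc{B}_1$ evaluates $\prod_{i} e(W_{1,i},V_{1,i})\cdot\prod_{i} e(W_{2,i},V_{2,i})^{-1} \stackrel{?}{=} \Omega^t$ (it can form $\Omega^t$ from the known $\alpha$), also checks that $M^*$ was never queried, and outputs $0$ iff the forgery is accepted and $1$ otherwise. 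Under $T_0$ the simulated check is distributed as $\tb{Verify}$ and under $T_1$ as $\tb{VerifySF}$, so $\Pr[\mc{B}_1(D,T_0)=0] = \Adv_{\mc{A}}^{G_0}$ and $\Pr[\mc{B}_1(D,T_1)=0] = \Adv_{\mc{A}}^{G_1}$, yielding the claimed equality $\big|\Adv_{\mc{A}}^{G_0}-\Adv_{\mc{A}}^{G_1}\big| = \Adv_{\mc{B}_1}^{SXDH}(\lambda)$.

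The crux, and the main obstacle, is confining the challenge term to the third and fourth verification components — the only ones the normal and semi-functional algorithms disagree on — while keeping both the public key and the unchanged component $V_{1,2}$ computable. This is exactly what forces the choice $\nu_3 = a/d$ together with the relation $\pi = \phi_2 + \nu_3\phi_3$: it routes the unknown $\hat{g}^{ab}$ into $V_{1,3},V_{1,4}$ through $T$ and into the public key only through the already-given $\hat{g}^{a}$, never into $V_{1,2}$ or any honest element. The one point needing distributional care is that a single instance feeds both rows, so $s_1$ and $s_2$ are linked through $z_c$ rather than independent; however $s_1,s_2$ influence acceptance only for malformed forgeries whose two $\hat{v}$-direction defects are $z_c$-proportional, an event of probability $O(1/p)$ over the uniform $z_c$, which is harmless and leaves the reduction tight.
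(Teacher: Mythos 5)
Your reduction is correct in substance, but it takes a genuinely different route from the paper's. The paper does not run the reduction from plain SXDH directly: it first introduces a ``parallel-SXDH'' tuple $(\hat{k}^{a}, \hat{k}^{d_1}, \hat{k}^{d_2}, T=(A_1,A_2))$ with \emph{two} independent exponents $d_1,d_2$, sets $\hat{v}=\hat{k}^{y_v}$, $\nu_3=a$, and embeds $s_1=d_1$, $s_2=d_2$ so that the two verification rows receive genuinely independent randomness and, in the random case, $(s_c,z_c)=(y_v d_3,\, d_4/d_3)$ is independent of everything else; this yields an exact simulation of both $\tb{G}_0$ and $\tb{G}_1$, at the cost of a separate (hybrid or random-self-reducibility) step relating parallel-SXDH to SXDH. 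You instead reuse a single DDH instance across both rows by setting $s_1=b/d$ and $s_2=z_c s_1$ with a fresh $z_c$ that doubles as the semi-functional $z_c$. This avoids the auxiliary assumption, and your embedding ($\hat{v}=\hat{g}^{d}$, $\nu_3=a/d$, routing $T$ only into $V_{1,3},V_{1,4},V_{2,3},V_{2,4}$) is the same idea as the paper's. The price is that $(s_1,s_2,z_c)$ satisfies the deterministic relation $s_2=z_c s_1$, which holds in neither game; you correctly observe that since the adversary only ever learns the accept/reject bit, this perturbs the acceptance probability of any fixed forgery by at most $O(1/p)$. One nit: that slack means your conclusion should read $\big|\Adv_{\mc{A}}^{G_0}-\Adv_{\mc{A}}^{G_1}\big| \leq \Adv_{\mc{B}_1}^{SXDH}(\lambda)+O(1/p)$ rather than the exact equality you assert in the final sentence (and calling the reduction ``tight'' while conceding the $O(1/p)$ deviation is a small internal tension); the bound is still negligibly close and suffices for the lemma. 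In short: the paper buys exactness by strengthening the instance to two correlated pairs, whereas you buy a self-contained single-instance reduction by paying a statistically negligible simulation error.
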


\begin{proof}
Before proving this lemma, we introduce the parallel-SXDH assumption as
follows: Let $(p, \G, \hat{\G}, \G_T, e)$ be a description of the asymmetric
bilinear group of prime order $p$. Let $k, \hat{k}$ be generators of $\G,
\hat{\G}$ respectively. The assumption is stated as following: given a
challenge tuple
    $D = ((p, \G, \hat{\G}, \G_T, e),
    k, \hat{k}^a, \hat{k}^{d_1}, \hat{k}^{d_2})$ and $T = (A_1, A_2)$,
it is hard to decide whether $T = (\hat{k}^{ad_1}, \hat{k}^{ad_2})$ or $T =
(\hat{k}^{d_3}, \hat{k}^{d_4})$ with random choices of $a, d_1, d_2, d_3, d_4
\in \Z_p$. It is easy to prove by simple hybrid arguments that if there
exists an adversary that breaks the parallel-SXDH assumption, then it can
break the SXDH assumption. Alternatively, we can tightly prove the reduction
using the random self-reducibility of the Decisional Diffie-Hellman
assumption.

Suppose there exists an adversary $\mc{A}$ that distinguishes between
$\tb{G}_0$ and $\tb{G}_1$ with non-negligible advantage. Simulator $\mc{B}_1$
that solves the parallel-SXDH assumption using $\mc{A}$ is given: a challenge
tuple
    $D = ((p, \G, \hat{\G}, \G_T, e), \lb
    k, \hat{k}, \hat{k}^a, \hat{k}^{d_1}, \hat{k}^{d_2})$ and
    $T = (A_1, A_2)$ where
    $T = T_0 = (A_1^0, A_2^0) = (\hat{k}^{ad_1}, \hat{k}^{ad_2})$ or
    $T = T_1 = (A_1^1, A_2^1) = (\hat{k}^{ad_1 + d_3}, \hat{k}^{ad_2 + d_4})$.
Then $\mc{B}_1$ that interacts with $\mc{A}$ is described as follows:
$\mc{B}_1$ first chooses random exponents $\nu_1, \nu_2, \phi_1, \phi_2,
\phi_3 \in \Z_p$, then it sets $\tau = \phi_1 + \nu_1 \phi_2 + \nu_2 \phi_3$.
It selects random exponents $\alpha, x, y, y_g, y_v, y_w \in \Z_p$ and sets
    $g = k^{y_g}, u = g^x, h = g^y,
    w_1 = k^{y_w \phi_1}, w_2 = k^{y_w \phi_2}, w_3 = k^{y_w \phi_3}, w = k^{y_w},
    \hat{g} = \hat{k}^{y_g}, \hat{u} = \hat{g}^x, \hat{h} = \hat{g}^y$.
It implicitly sets $\nu_3 = a, \pi = \phi_2 + a \phi_3$ and publishes a
public key $PK$ as
    \begin{align*}
    &   g, u, h,~ w_1, w_2, w_3, w,~
        \hat{g}, \hat{g}^{\nu_1}, \hat{g}^{\nu_2}, \hat{g}^{-\tau},~
        \hat{u}, \hat{u}^{\nu_1}, \hat{u}^{\nu_2}, \hat{u}^{-\tau},~ \\
    &   \hat{h}, \hat{h}^{\nu_1}, \hat{h}^{\nu_2}, \hat{h}^{-\tau},~
        \hat{v} = \hat{k}^{y_v}, \hat{v}^{\nu_3} = (\hat{k}^a)^{y_v},
        \hat{v}^{-\pi} = \hat{k}^{-y_v \phi_2} (\hat{k}^a)^{-y_v \phi_3},~
        \Omega = e(g, \hat{g})^{\alpha}.
    \end{align*}
It sets a private key $SK = \alpha$. Additionally, it sets $f = k, \hat{f} =
\hat{k}$ for the semi-functional signature and verification. $\mc{A}$
adaptively requests a signature for a message $M$. To response this sign
query, $\mc{B}_1$ creates a normal signature by calling \textbf{PKS1.Sign}
since it knows the private key. Note that it cannot create a semi-functional
signature since it does not know $k^a$. Finally, $\mc{A}$ outputs a forged
signature $\sigma^* = (W_{1,1}^*, \ldots, W_{2,4}^*)$ on a message $M^*$ from
$\mc{A}$. To verify the forged signature, $\mc{B}_1$ first chooses a random
exponent $t \in \Z_p$ and computes verification components by implicitly
setting $s_1 = d_1,~ s_2 = d_2$ as
    \begin{align*}
    & V_{1,1} = \hat{g}^t,~
      V_{1,2} = (\hat{g}^{\nu_1})^t (\hat{k}^{d_1})^{y_v},~
      V_{1,3} = (\hat{g}^{\nu_2})^t (A_1)^{y_v},~
      V_{1,4} = (\hat{g}^{-\tau})^t (\hat{k}^{d_1})^{-y_v \phi_2} (A_1)^{-y_v \phi_3}, \\
    & V_{2,1} = (\hat{u}^{M^*} \hat{h})^t,~
      V_{2,2} = ((\hat{u}^{\nu_1})^{M^*} \hat{h}^{\nu_1})^t (\hat{k}^{d_2})^{y_v},~
      V_{2,3} = ((\hat{u}^{\nu_2})^{M^*} \hat{h}^{\nu_2})^t (A_2)^{y_v},~ \\
    & V_{2,4} = ((\hat{u}^{-\tau})^{M^*} \hat{h}^{-\tau})^t
                (\hat{k}^{d_2})^{-y_v \phi_2} (A_2)^{-y_v \phi_3}.
    \end{align*}
Next, it verifies that
    $\prod_{i=1}^4 e(W_{1,i}^*, V_{1,i}) \cdot
    \prod_{i=1}^4  e(W_{2,i}^*, V_{2,i})^{-1} \stackrel{?}{=} \Omega^t$.
If this equation holds, then it outputs 0. Otherwise, it outputs 1.

\vs To finish this proof, we show that the distribution of the simulation is
correct. We first show that the distribution using $D, T_0 = (A_1^0, A_2^0) =
(\hat{k}^{ad_1}, \hat{k}^{ad_2})$ is the same as $\tb{G}_0$. The public key
is correctly distributed since the random blinding values $y_g, y_w, y_v$ are
used. The signatures is correctly distributed since it uses the signing
algorithm. The verification components are correctly distributed as
    \begin{align*}
    V_{1,3} &= (\hat{g}^{\nu_2})^t (\hat{v}^{\nu_3})^{s_1}
             = (\hat{g}^{\nu_2})^t (\hat{k}^{y_v a})^{d_1}
             = (\hat{g}^{\nu_2})^t (A_1^0)^{y_v}, \\
    V_{1,4} &= (\hat{g}^{-\tau})^t (\hat{v}^{-\pi})^{s_1}
             = (\hat{g}^{-\tau})^t (\hat{k}^{-y_v (\phi_2 + a \phi_3)})^{d_1}
             = (\hat{g}^{-\tau})^t (\hat{k}^{d_1})^{-y_v \phi_2} (A_1^0)^{-y_v \phi_3},
             \db \\
    V_{2,3} &= ((\hat{u}^{\nu_2})^{M^*} \hat{h}^{\nu_2})^t (\hat{v}^{\nu_3})^{s_2}
             = ((\hat{u}^{\nu_2})^{M^*} \hat{h}^{\nu_2})^t (\hat{k}^{y_v a})^{d_2}
             = ((\hat{u}^{\nu_2})^{M^*} \hat{h}^{\nu_2})^t (A_2^0)^{y_v} \\
    V_{2,4} &= ((\hat{u}^{-\tau})^{M^*} \hat{h}^{-\tau})^t (\hat{v}^{-\pi})^{s_2}
             = ((\hat{u}^{-\tau})^{M^*} \hat{h}^{-\tau})^t
               (\hat{k}^{-y_v (\phi_2 + a \phi_3)})^{d_2} \\
            &= ((\hat{u}^{-\tau})^{M^*} \hat{h}^{-\tau})^t
               (\hat{k}^{d_2})^{-y_v \phi_2} (A_2^0)^{-y_v \phi_3}.
    \end{align*}
We next show that the distribution of the simulation using $D, T_1 = (A_1^1,
A_2^1) = (\hat{k}^{ad_1 + d_3}, \hat{k}^{ad_2 + d_4})$ is the same as
$\tb{G}_1$. We only consider the distribution of the verification components
since $T$ is only used in the verification components. The difference between
$T_0 = (A_1^0, A_2^0)$ and $T_1 = (A_1^1, A_2^1)$ is that $T_1 = (A_1^1,
A_2^1)$ additionally has $(\hat{k}^{d_3}, \hat{k}^{d_4})$. Thus $V_{1,3},
V_{1,4}, V_{2,3}, V_{2,4}$ that have $T = (A_1, A_2)$ in the simulation
additionally have $(\hat{k}^{d_3})^{y_v}, (\hat{k}^{d_3})^{-y_v \phi_3},
(\hat{k}^{d_4})^{y_v}, (\hat{k}^{d_4})^{-y_v \phi_3}$ respectively. If we
implicitly set $s_c = y_v d_3,~ z_c = d_4 / d_3$, then the verification
components for the forged signature are semi-functional since $d_3, d_4$ are
randomly chosen.
%
This completes our proof.
\end{proof}

\begin{lemma} \label{lem:pks1-prime-2}
If the LW2 assumption holds, then no polynomial-time adversary can
distinguish between $\tb{G}_1$ and $\tb{G}_2$ with non-negligible advantage.
That is, for any adversary $\mc{A}$, there exists a PPT algorithm $\mc{B}_2$
such that
    $\big| \Adv_{\mc{A}}^{G_{1,k-1}} - \Adv_{\mc{A}}^{G_{1,k}} \big|
    = \Adv_{\mc{B}_2}^{LW2}(\lambda)$.
\end{lemma}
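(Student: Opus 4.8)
The plan is to prove this lemma by a hybrid reduction to the LW2 assumption, following the same template as the proof of Lemma~\ref{lem:pks1-prime-1} but now embedding the challenge inside the $k$-th \emph{signature} instead of inside the verification of the forgery. Suppose $\mc{A}$ distinguishes $\tb{G}_{1,k-1}$ from $\tb{G}_{1,k}$ with non-negligible advantage. I would build a simulator $\mc{B}_2$ that receives an LW2 instance $(g, g^a, g^b, g^c, \hat{g}, \hat{g}^a, \hat{g}^{a^2}, \hat{g}^{bx}, \hat{g}^{abx}, \hat{g}^{a^2 x})$ together with $T$ and must decide whether $T = g^{bc}$ or $T = g^d$. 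The semi-functional subspace is realized through the assumption's $a$- and $a^2$-terms, while the randomness of the $k$-th signature is aligned with $b,c$ so that the single challenge element $T$ governs whether that signature carries a semi-functional component. The hidden exponent $x$ is used to blind the semi-functional space so that a target message can be injected into the semi-functional verification components through a message-dependent scalar, using the products $\hat{g}^{bx}, \hat{g}^{abx}, \hat{g}^{a^2 x}$ raised to the forgery message.

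First I would have $\mc{B}_2$ publish a correctly distributed public key while keeping $\alpha$ and the $u, h, w_i$ exponents in hand, so that it can run \tb{PKS1.Sign} honestly. Signature queries are then answered by type according to the index: for $j > k$ it returns an honest normal signature; for $j < k$ it returns a semi-functional signature, which it can produce on its own by choosing fresh semi-functional randomness $s_k, z_k$ and applying the public $a$-terms; and for $j = k$ it injects $T$ into the signing randomness. Writing $T = g^{bc + \delta}$, the $k$-th signature is normal when $\delta = 0$ (the case $T = T_0$, matching $\tb{G}_{1,k-1}$) and carries a uniformly random semi-functional component when $\delta$ is random (the case $T = T_1$, matching $\tb{G}_{1,k}$). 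When $\mc{A}$ outputs its forgery $\sigma^*$ on $M^*$, $\mc{B}_2$ builds the semi-functional verification components for $M^*$ from the $\hat{g}^{\cdot x}$ products, runs the semi-functional verification equation, and echoes its accept/reject decision as its guess for $T$.

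The bulk of the routine work is then the distribution analysis: I would check term by term that under $T = T_0$ everything handed to $\mc{A}$ is distributed exactly as in $\tb{G}_{1,k-1}$, and under $T = T_1$ exactly as in $\tb{G}_{1,k}$, giving $\big| \Adv_{\mc{A}}^{G_{1,k-1}} - \Adv_{\mc{A}}^{G_{1,k}} \big| = \Adv_{\mc{B}_2}^{LW2}(\lambda)$. The main obstacle is the classical dual-system pitfall: because the single secret $a$ is reused both in the $k$-th semi-functional signature and in the semi-functional verification, the two semi-functional randomizers are not a priori independent, and I must rule out the possibility that $\mc{B}_2$ could detect the type of its own $k$-th signature \emph{without} $\mc{A}$, which would make the reduction vacuous. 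The resolution is to argue that the $k$-th signature is only \emph{nominally} semi-functional: the simulator is forced to set its semi-functional scalar $z_k$ to an affine function of the queried message $M_k$ (mirroring the $u^{M}h$ structure), while the verification scalar $z_c$ is the same affine function of $M^*$, so the stray factor $e(f,\hat{f})^{s_k s_c (z_k - z_c)}$ in the verification equation vanishes precisely when $M_k = M^*$. Since a valid forgery must satisfy $M^* \neq M_k$, these two scalars are information-theoretically independent and uniform from $\mc{A}$'s view, which simultaneously shows that $\mc{B}_2$ cannot self-test the $k$-th signature and that the embedded $z_k$ is distributed as the genuinely random semi-functional randomness required in $\tb{G}_{1,k}$. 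Establishing this decorrelation cleanly, together with verifying that all other queried signatures remain perfectly simulated, is where I expect the real effort to lie.
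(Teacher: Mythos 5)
Your proposal matches the paper's proof in all essentials: the LW2 challenge $T$ is embedded into the randomness of the $k$-th signature (with $j<k$ answered semi-functionally and $j>k$ normally), the forgery is checked with semi-functional verification whose scalar $z_c$ is the affine function $AM^*+B$ of the forgery message built from the $\hat{g}^{bx},\hat{g}^{abx},\hat{g}^{a^2x}$ terms, and the self-test obstruction is resolved exactly as in the paper by observing that the $k$-th signature is only nominally semi-functional with $z_k=AM+B$, which is decorrelated from $z_c$ by pairwise independence of $M\mapsto AM+B$ and the constraint $M^*\neq M$. This is the same reduction the paper gives, so the plan is sound.
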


\begin{proof}
Suppose there exists an adversary $\mc{A}$ that distinguishes between
$\tb{G}_{1,k-1}$ and $\tb{G}_{1,k}$ with non-negligible advantage. A
simulator $\mc{B}_2$ that solves the LW2 assumption using $\mc{A}$ is given:
a challenge tuple
    $D = ((p, \G, \hat{\G}, \G_T, e), \lb
    k, k^a, k^b, k^c, \hat{k}^a, \hat{k}^{a^2}, \hat{k}^{bx}, \hat{k}^{abx},
    \hat{k}^{a^2x})$ and $T$
where $T = T_0 = k^{bc}$ or $T = T_1 = k^{bc+d}$. Then $\mc{B}_2$ that
interacts with $\mc{A}$ is described as follows: $\mc{B}_2$ first selects
random exponents $\nu_1, \nu_2, \nu_3, y_{\tau}, \pi, A, B, \alpha, y_u, y_h,
\lb y_w, y_v \in \Z_p$ and sets
    $g = k^a, u = (k^a)^A k^{y_u}, h = (k^a)^B k^{y_h}, w = k^{y_w},
    \hat{g} = \hat{k}^a, \hat{u} = (\hat{k}^a)^A \hat{k}^{y_u},
    \hat{h} = (\hat{k}^a)^B \hat{k}^{y_h}, \hat{v} = \hat{k}^{y_v}$.
It implicitly sets $\phi_1 = (\nu_1 \nu_3 - \nu_2) b - \nu_1 \pi + (a +
y_{\tau}), \phi_2 = -\nu_3 b + \pi, \phi_3 = b, \tau = a + y_{\tau}$ and
publishes a public key $PK$ as
    \begin{align*}
    &   g, u, h,~
        w_1 = ((k^b)^{\nu_1 \nu_3 - \nu_2} k^{-\nu_1 \pi} (k^a) k^{y_{\tau}})^{y_w},
        w_2 = ((k^b)^{-\nu_3} k^{\pi})^{y_w}, w_3 = (k^b)^{y_w}, w,~ \\
    &   \hat{g}, \hat{g}^{\nu_1}, \hat{g}^{\nu_2},
        \hat{g}^{-\tau} = (\hat{k}^{a^2} (\hat{k}^a)^{y_{\tau}})^{-1}),~
        \hat{u}, \hat{u}^{\nu_1}, \hat{u}^{\nu_2},
        \hat{u}^{-\tau} = ((\hat{k}^{a^2})^A (\hat{k}^a)^{y_u + A y_{\tau}}
            \hat{k}^{y_u y_{\tau}})^{-1},~ \\
    &   \hat{h}, \hat{h}^{\nu_1}, \hat{h}^{\nu_2},
        \hat{h}^{-\tau} = ((\hat{k}^{a^2})^B (\hat{k}^a)^{y_h + B y_{\tau}}
            \hat{k}^{y_h y_{\tau}} )^{-1},~
        \hat{v}, \hat{v}^{\nu_3}, \hat{v}^{-\pi},~
        \Omega = e(k^a, \hat{k}^a)^{\alpha}.
    \end{align*}
Additionally, it sets $f = k, \hat{f} = \hat{k}$ for the semi-functional
signature and verification. $\mc{A}$ adaptively requests a signature for a
message $M$. If this is a $j$-th signature query, then $\mc{B}_2$ handles
this query as follows:
\begin{itemize}
\item {Case $j < k$} : It creates a semi-functional signature by calling
    \textbf{PKS1.SignSF} since it knows the tuple $(f^{\nu_1 \nu_3 -
    \nu_2}, f^{-\nu_3}, f, 1)$ for the semi-functional signature.

\item {Case $j = k$} : It selects random exponents $r', c'_1, c'_2 \in
    \Z_p$ and creates a signature by implicitly setting $r = -c + r',~
    c_1 = c(AM+B)/y_w + c'_1,~ c_2 = c/y_w + c'_2$ as
    \begin{align*}
    &   W_{1,1} = g^{\alpha} (k^c)^{-(y_u M + y_h)} (u^M h)^{r'}
                  (T)^{(\nu_1 \nu_3 - \nu_2)(AM+B)}
                  (k^c)^{(-\nu_1 \pi + y_{\tau})(AM+B)} w_1^{c'_1},~ \\
    &   W_{1,2} = (T)^{-\nu_3 (AM+B)} (k^c)^{\pi (AM+B)} w_2^{c'_1},~
        W_{1,3} = (T)^{(AM+B)} w_3^{c'_1},~
        W_{1,4} = (k^c)^{(AM+B)} w^{c'_1},~\\
    &   W_{2,1} = g^{r'} (T)^{(\nu_1 \nu_3 - \nu_2)}
                  (k^c)^{(-\nu_1 \pi + y_{\tau})} w_1^{c'_2},~
        W_{2,2} = (T)^{-\nu_3} (k^c)^{y_w \pi} w_2^{c'_2},~
        W_{2,3} = T w_3^{c'_2},~
        W_{2,4} = (k^c)^{y_w} w^{c'_2}.
    \end{align*}

\item {Case $j > k$} : It creates a normal signature by calling
    \textbf{PKS1.Sign} since it knows $\alpha$ of the private key. Note
    that $x, y$ are not required.
\end{itemize}

\noindent Finally, $\mc{A}$ outputs a forged signature $\sigma^* =
(W_{1,1}^*, \ldots, W_{2,4}^*)$ on a message $M^*$. To verify the forged
signature, $\mc{B}_2$ first chooses random exponents $t', s_1, s_2 \in \Z_p$
and computes semi-functional verification components by implicitly setting
    $t = bx + t',~ s_c = -a^2 x,~ z_c = AM^* + B$
as
    \begin{align*}
    &   V_{1,1} = \hat{k}^{abx} (\hat{k}^a)^{t'},~
        V_{1,2} = (\hat{k}^{abx})^{\nu_1} (\hat{k}^a)^{\nu_1 t'} \hat{v}^{s_1},~ \\
    &   V_{1,3} = (\hat{k}^{abx})^{\nu_2} (\hat{k}^a)^{\nu_2 t'} \hat{v}^{\nu_3 s_1}
                  (\hat{k}^{a^2 x})^{-1},~
        V_{1,4} = (\hat{k}^{abx})^{-y_{\tau}} (\hat{k}^{a^2})^{-t'}
                  (\hat{k}^a)^{-y_{\tau} t'} \hat{v}^{-\pi s_1}, \db \\
    &   V_{2,1} = (\hat{k}^{abx})^{AM^* + B} (\hat{k}^{bx})^{y_u M^* + y_h}
                  (\hat{u}^{M^*} \hat{h})^{t'},~ \\
    &   V_{2,2} = (\hat{k}^{abx})^{(AM^* + B) \nu_1}
                  (\hat{k}^{bx})^{(y_u M^* + y_h) \nu_1}
                  (\hat{u}^{M^*} \hat{h})^{\nu_1 t'} \hat{v}^{s_2},~ \\
    &   V_{2,3} = (\hat{k}^{abx})^{(AM^* + B) \nu_2}
                  (\hat{k}^{bx})^{(y_u M^* + y_h) \nu_2}
                  (\hat{u}^{M^*} \hat{h})^{\nu_2 t'} \hat{v}^{\nu_3 s_2}
                  (\hat{k}^{a^2 x})^{-(AM^* + B)},~ \\
    &   V_{2,4} = (\hat{k}^{abx})^{-(AM^* + B) y_{\tau} -(y_u M^* + y_h)}
                  (\hat{k}^{bx})^{-(y_u M^* + y_h) y_{\tau}}
                  (\hat{k}^{a^2})^{-(AM^* + B) t'} (\hat{k}^a)^{-(y_u M^* + y_h) t'}
                  (\hat{u}^{M^*} \hat{h})^{-y_{\tau} t'} \hat{v}^{-\pi s_2}.
    \end{align*}
Next, it verifies that
    $\prod_{i=1}^4 e(W_{1,i}^*, V_{1,i}) \cdot
    \prod_{i=1}^4 e(W_{2,i}^*, V_{2,i})^{-1} \stackrel{?}{=}
    e(k^a, \hat{k}^{abx})^{\alpha} \cdot e(k^a, \hat{k}^a)^{\alpha t'}$.
If this equation holds, then it outputs 0. Otherwise, it outputs 1.

\vs To finish the proof, we should show that the distribution of the
simulation is correct. We first show that the distribution of the simulation
using $D, T_0 = k^{bc}$ is the same as $\tb{G}_{1,k-1}$. The public key is
correctly distributed since the random blinding values $y_u, y_h, y_w, y_v$
are used. The $k$-th signature is correctly distributed as
    \begin{align*}
    W_{1,1} &= g^{\alpha} (u^M h)^r w_1^{c_1}
             = g^{\alpha} (k^{(aA + y_u) M} k^{aB + y_h})^{-c + r'}
               (k^{y_w ((\nu_1 \nu_3 - \nu_2) b - \nu_1 \pi + (a + y_{\tau}))})^{
                    c (AM+B)/y_w + c'_1} \\
            &= g^{\alpha} (k^c)^{-(y_u M + y_h)} (u^M h)^{r'}
               (T)^{(\nu_1 \nu_3 - \nu_2) (AM+B)}
               (k^c)^{(-\nu_1 \pi + y_{\tau}) (AM+B)} w_1^{c'_1},~
            \displaybreak[0] \\
    W_{1,2} &= w_2^{c_1}
             = (k^{y_w (-\nu_3 b + \pi)})^{c (AM+B)/y_w + c'_1}
             = (T)^{-\nu_3 (AM+B)} (k^c)^{\pi (AM+B)} w_2^{c'_1},~ \\
    W_{1,3} &= w_3^{c_1}
             = (k^{y_w b})^{c (AM+B)/y_w + c'_1}
             = (T)^{(AM+B)} w_3^{c'_1},~ \\
    W_{1,4} &= w^{c_1}
             = (k^{y_w})^{c (AM+B)/y_w + c'_1}
             = (k^c)^{(AM+B)} w^{c'_1}.
    \end{align*}
The semi-functional verification components are correctly distributed as
    \begin{align*}
    V_{2,1} &= (\hat{u}^{M^*} \hat{h})^t
             = (\hat{k}^{(aA + y_u) M^*} \hat{k}^{aB + y_h})^{bx + t'}
             = (\hat{k}^{abx})^{AM^* + B} (\hat{k}^{bx})^{y_u M^* + y_h}
               (\hat{u}^{M^*} \hat{h})^{t'},~ \\
    V_{2,2} &= ((\hat{u}^{\nu_1})^{M^*} \hat{h}^{\nu_1})^t \hat{v}^{s_2}
             = (\hat{k}^{(aA + y_u) \nu_1 M^*} \hat{k}^{(aB + y_h) \nu_1})^{bx + t'}
               \hat{v}^{s_2} \\
            &= (\hat{k}^{abx})^{(AM^* + B) \nu_1} (\hat{k}^{bx})^{(y_u M^* + y_h) \nu_1}
               (\hat{u}^{M^*} \hat{h})^{\nu_1 t'} \hat{v}^{s_2},~
               \displaybreak[0] \\
    V_{2,3} &= ((\hat{u}^{\nu_2})^{M^*} \hat{h}^{\nu_2})^t (\hat{v}^{\nu_3})^{s_2}
               \hat{f}^{s_c z_c}
             = (\hat{k}^{(aA + y_u) \nu_2 M^*} \hat{k}^{(aB + y_h) \nu_2})^{bx + t'}
               (\hat{v}^{\nu_3})^{s_2} \hat{k}^{-a^2x (AM^*+B)} \\
            &= (\hat{k}^{abx})^{(AM^* + B) \nu_2} (\hat{k}^{bx})^{(y_u M^* + y_h) \nu_2}
               (\hat{u}^{M^*} \hat{h})^{\nu_2 t'} \hat{v}^{\nu_3 s_2}
               (\hat{k}^{a^2 x})^{-(AM^* + B)},~
               \displaybreak[0] \\
    V_{2,4} &= ((\hat{u}^{-\tau})^{M^*} \hat{h}^{-\tau})^t (\hat{v}^{-\pi})^{s_2}
               (\hat{f}^{-\phi_3})^{s_c z_c} \\
            &= (\hat{k}^{-(aA + y_u) (a + y_{\tau}) M^*}
               \hat{k}^{-(aB + y_h) (a + y_{\tau})})^{bx + t'}
               (\hat{v}^{-\pi})^{s_2} \hat{k}^{-b (-a^2 x) (AM^*+B)} \\
            &= (\hat{k}^{abx})^{-(AM^* + B) y_{\tau} -(y_u M^* + y_h)}
               (\hat{k}^{bx})^{-(y_u M^* + y_h) y_{\tau}}
               (\hat{k}^{a^2})^{-(AM^* + B) t'} (\hat{k}^a)^{-(y_u M^* + y_h) t'}
               (\hat{u}^{M^*} \hat{h})^{-y_{\tau} t'} \hat{v}^{-\pi s_2}.
    \end{align*}
The simulator can create the semi-functional verification components with
only fixed $z_c = AM^* + B$ since $s_c, s_c$ enable the cancellation of
$\hat{k}^{a^2 b x}$. Even though the simulator uses the fixed $z_c$, the
distribution of $z_c$ is correct since $A,B$ are information theoretically
hidden to $\mc{A}$.
We next show that the distribution of the simulation using $D, T_1 =
k^{bc+d}$ is the same as $\tb{G}_{1,k}$. We only consider the distribution of
the $k$-th signature since $T$ is only used in the $k$-th signature. The only
difference between $T_0$ and $T_1$ is that $T_1$ additionally has $k^d$. The
signature components $W_{1,1}, W_{1,2}, W_{1,3}$, $W_{2,1}, W_{2,2}, W_{2,3}$
that have $T$ in the simulation additionally have $(k^d)^{(\nu_1 \nu_3 -
\nu_2) (AM+B)}$, $(k^d)^{-\nu_3 (AM+B)}$, $(k^d)^{(AM+B)}$, $(k^d)^{(\nu_1
\nu_3 - \nu_2)}$, $(k^d)^{-\nu_3}, k^d$ respectively. If we implicitly set
$s_k = d, z_k = AM+B$, then the distribution of the $k$-th signature is the
same as $\tb{G}_{1,k}$ except that the $k$-th signature is nominally
semi-functional.

Finally, we show that the adversary cannot distinguish the nominally
semi-functional signature from the semi-functional signature. The main idea
of this is that the adversary cannot request a signature for the forgery
message $M^*$ in the security model. Suppose there exists an unbounded
adversary, then the adversary can gather $z_k = AM + B$ from the $k$-th
signature and $z_c = AM^* + B$ from the forged signature. It is easy to show
that $z_k$ and $z_c$ look random to the unbounded adversary since $f(M) = AM
+ B$ is a pair-wise independent function and $A, B$ are information
theoretically hidden to the adversary.
%
This completes our proof.
\end{proof}

\begin{lemma} \label{lem:pks1-prime-3}
If the DBDH assumption holds, then no polynomial-time adversary can
distinguish between $\tb{G}_2$ and $\tb{G}_3$ with non-negligible advantage.
That is, for any adversary $\mc{A}$, there exists a PPT algorithm $\mc{B}_3$
such that
    $\big| \Adv_{\mc{A}}^{G_2} - \Adv_{\mc{A}}^{G_3} \big| =
    \Adv_{\mc{B}_3}^{DBDH}(\lambda)$.
\end{lemma}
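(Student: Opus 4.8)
The plan is to build a simulator $\mc{B}_3$ that embeds a DBDH instance $(g, g^a, g^b, g^c, \hat{g}, \hat{g}^a, \hat{g}^b, \hat{g}^c, T)$ so that the challenge term $T$ plays the role of the verification target $\Omega^t$ used to check the forgery: when $T = T_0 = e(g,\hat{g})^{abc}$ the forgery is tested exactly as in $\tb{G}_2$ (semi-functional verification), and when $T = T_1$ is random the target is independent of everything $\mc{A}$ produces, so the check passes only with probability $1/p$, which is precisely $\tb{G}_3$. To force $\Omega^t$ to coincide with $e(g,\hat{g})^{abc}$, I would implicitly set the master secret $\alpha = ab$ and the verification exponent $t = c$, publish $\Omega = e(g^a,\hat{g}^b)$, and choose the remaining public-key exponents $\nu_1,\nu_2,\nu_3,\phi_1,\phi_2,\phi_3,x,y$ freely so that every other element of $PK$ is simulatable. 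The point of leaving $\alpha$ unknown as a scalar is that $\mc{B}_3$ then cannot evaluate $\Omega^c$ by itself, so the forgery's verification equation can only be checked through $T$; this is exactly what converts a successful forgery into a DBDH distinguisher.

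Next I would answer the signature queries. In both $\tb{G}_2$ and $\tb{G}_3$ every signature handed to $\mc{A}$ is semi-functional, so $\mc{B}_3$ must output correctly distributed semi-functional signatures. The semi-functional part is easy: set $f = g^{y_f}, \hat{f} = \hat{g}^{y_f}$ with a known $y_f$, so the tuple $(f^{\nu_1\nu_3 - \nu_2}, f^{-\nu_3}, f, 1)$ is computable and \textbf{PKS1.SignSF} can be run, exactly as in the proof of Lemma~\ref{lem:pks1-prime-2}. The genuinely delicate term is the normal part $W_{1,1} = g^{\alpha}(u^M h)^r w_1^{c_1}$, which contains $g^{\alpha} = g^{ab}$, a value $\mc{B}_3$ does not hold. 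The natural remedy is to fold a DBDH exponent into $u$ and $h$, writing $u = (g^{a})^{A} g^{y_u}$ and $h = (g^{a})^{B} g^{y_h}$ with matching $\hat{u}, \hat{h}$, and to choose the signing randomness $r$ so that the unknown $g^{ab}$ is cancelled against the $g^{ab}$-part of $(u^M h)^r$, leaving only factors expressible through $g^{a}$ and $g^{b}$. After this cancellation the signature still encodes the true $\alpha$, so it passes the normal verification that $\mc{A}$ can itself run from $PK$ (recall that a semi-functional signature passes normal verification precisely because its $f$-components lie in the orthogonal space).

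The hard part, and the step I expect to be the main obstacle, is reconciling these two requirements under the plain DBDH instance. Embedding $a$ into $u,h$ is exactly what makes the queried signatures producible, but at $t = c$ the forgery's verification component $V_{2,1} = (\hat{u}^{M^*}\hat{h})^{c}$ then calls for a mixed term $\hat{g}^{ac}$, which DBDH does not supply; conversely, keeping $u,h$ free of $a,b$ makes every $V_{i,j}$ simulatable at $t = c$ but destroys the cancellation needed to sign. The crux of the argument is therefore a careful split of the three DBDH exponents between the $\G$-side elements used for signing and the $\hat{\G}$-side elements used for verifying the forgery, combined with the extra freedom of the semi-functional verification (the $\hat{f}$ terms of \textbf{VerifySF} and the free choices of $s_c,z_c$) to route the otherwise-missing mixed term, so that $\mc{B}_3$ never has to form it explicitly. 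Once both the semi-functional signing simulation and the $T$-based verification of the forgery are shown to be correctly distributed — so that $T_0$ reproduces $\tb{G}_2$ and $T_1$ reproduces $\tb{G}_3$ up to the negligible $1/p$ term — the indistinguishability follows and yields $\big|\Adv_{\mc{A}}^{G_2} - \Adv_{\mc{A}}^{G_3}\big| = \Adv_{\mc{B}_3}^{DBDH}(\lambda)$.
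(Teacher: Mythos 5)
Your high-level frame is the right one — setting $\alpha = ab$, $t = c$, $\Omega = e(g^a,\hat{g}^b)$ and letting $T$ stand in for $\Omega^t$ so that a random $T_1$ forces rejection except with probability $1/p$ is exactly how the paper's reduction works. But the concrete embedding you propose for answering signature queries has a genuine gap, and you have in fact put your finger on it yourself without resolving it. If you fold $a$ into $u$ and $h$ as $u = (g^a)^A g^{y_u}$, $h = (g^a)^B g^{y_h}$, then the forgery's verification components $V_{2,1} = (\hat{u}^{M^*}\hat{h})^{c}$ and $V_{2,2} = ((\hat{u}^{\nu_1})^{M^*}\hat{h}^{\nu_1})^{c}\hat{v}^{s_2}$ require $\hat{g}^{ac}$, which DBDH does not supply. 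Your hope that the semi-functional verification can "route" this missing term cannot work: \textbf{PKS1.VerifySF} only multiplies the components with indices $3$ and $4$ by powers of $\hat{f}$, while $V_{1,1}, V_{1,2}, V_{2,1}, V_{2,2}$ must be formed exactly as in the normal algorithm. There is no freedom left to absorb an $\hat{g}^{ac}$ sitting inside $V_{2,1}$ or $V_{2,2}$, so this split of the exponents is a dead end, not merely a delicate step.

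The missing idea is that $a$ should not go into $u,h$ at all: the paper keeps $u = g^x$, $h = g^y$ with \emph{known} $x,y$ and instead embeds $a$ into the semi-functional parameter $\nu_2$ (implicitly $\nu_2 = a$, $\tau = \phi_1 + \nu_1\phi_2 + a\phi_3$). Then the unknown $g^{\alpha} = k^{y_g ab}$ in $W_{1,1}$ is cancelled not by the randomness $r$ of the normal part but by the semi-functional blinding itself, since $(f^{\nu_1\nu_3-\nu_2})^{s_k z_k} = (k^{\nu_1\nu_3 - a})^{s_k z_k}$ contributes $k^{-a s_k z_k}$, and choosing $z_k = b y_g/s_k + z'_k$ makes this exactly $k^{-ab y_g}$ times computable factors. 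This is available precisely because every signature in $\tb{G}_2$ is semi-functional — note that running \textbf{PKS1.SignSF} as written is not "easy" even with the tuple $(f^{\nu_1\nu_3-\nu_2}, f^{-\nu_3}, f, 1)$ in hand, because its first step builds a normal signature containing $g^{ab}$; the simulator can \emph{only} produce semi-functional signatures. Symmetrically, on the verification side the troublesome $\hat{k}^{ac}$ terms now appear only in $V_{1,3}, V_{1,4}, V_{2,3}, V_{2,4}$ (through $\hat{g}^{\nu_2}$ and $\hat{g}^{-\tau}$ at $t=c$), which are exactly the components that \textbf{VerifySF} perturbs, so setting $s_c = -ac\,y_g + s'_c$ and the matching $z_c$ cancels them. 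Without this relocation of $a$ from the message-hash elements to the semi-functional space, the reduction cannot be completed.
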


\begin{proof}
Suppose there exists an adversary $\mc{A}$ that distinguish $\tb{G}_2$ from
$\tb{G}_3$ with non-negligible advantage. A simulator $\mc{B}_3$ that solves
the DBDH assumption using $\mc{A}$ is given: a challenge tuple
    $D = ((p, \G, \hat{\G}, \G_T, e), \lb
    k, k^a, k^b, k^c, \hat{k}, \hat{k}^a, \hat{k}^b, \hat{k}^c)$
    and $T$ where $T = T_0 = e(k, \hat{k})^{abc}$ or $T = T_1 = e(k, \hat{k})^d$.
Then $\mc{B}_3$ that interacts with $\mc{A}$ is described as follows:
$\mc{B}_3$ first chooses random exponents $\nu_1, \nu_3$, $\phi_1, \phi_2,
\phi_3 \in \Z_p$ and sets $\pi = \phi_2 + \nu_3 \phi_3$. It selects random
exponents $y_g, x, y, y_w, y_v \in \Z_p$ and sets
    $g = k^{y_g}, u = g^{x}, h = g^{y},
    w_1 = k^{y_w \phi_1}, w_2 = k^{y_w \phi_2}, w_3 = k^{y_w \phi_3}, w = k^{y_w},
    \hat{g} = \hat{k}^{y_g}, \hat{u} = \hat{g}^{x}, \hat{h} = \hat{g}^{y},
    \hat{v} = \hat{k}^{y_v}$.
It implicitly sets $\nu_2 = a, \tau = \phi_1 + \nu_1 \phi_2 + a \phi_3,
\alpha = ab$ and publishes a public key $PK$ as
    \begin{align*}
    &   g, u, h,~ w_1, w_2, w_3, w,~
        \hat{g}, \hat{g}^{\nu_1}, \hat{g}^{\nu_2} = (\hat{k}^a)^{y_g},
        \hat{g}^{-\tau} = \hat{k}^{-y_g (\phi_1 + \nu_1 \phi_2)}
            (\hat{k}^a)^{-y_g \phi_3},~ \\
    &   \hat{u}, \hat{u}^{\nu_1}, \hat{u}^{\nu_2} = (\hat{g}^{\nu_2})^x,
        \hat{u}^{-\tau} = (\hat{g}^{-\tau})^x,~
        \hat{h}, \hat{h}^{\nu_1}, \hat{h}^{\nu_2} = (\hat{g}^{\nu_2})^y,
        \hat{h}^{-\tau} = (\hat{g}^{-\tau})^y,~ \\
    &   \hat{v}, \hat{v}^{\nu_3}, \hat{v}^{-\pi},~
        \Omega = e(k^a, \hat{k}^b)^{y_g^2}.
    \end{align*}
Additionally, it sets $f = k, \hat{f} = \hat{k}$ for the semi-functional
signature and semi-functional verification. $\mc{A}$ adaptively requests a
signature for a message $M$. To respond to this query, $\mc{B}_3$ selects
random exponents $r, c_1, c_2, s_k, z'_k \in \Z_p$ and creates a
semi-functional signature by implicitly setting $z_k = b y_g / s_k + z'_k$ as
    \begin{align*}
    &   W_{1,1} = (u^M h)^r w_1^{c_1} (k^b)^{\nu_1 \nu_3 y_g}
                  k^{\nu_1 \nu_3 s_k z'_k} (k^a)^{-s_k z'_k},~ \\
    &   W_{1,2} = w_2^{c_1} (k^b)^{-\nu_3 y_g} k^{-\nu_3 s_k z'_k},~
        W_{1,3} = w_3^{c_1} (k^b)^{y_g} k^{s_k z'_k},~
        W_{1,4} = w^{c_1}, \\
    &   W_{2,1} = g^r w_1^{c_2} k^{\nu_1 \nu_3 s_k} (k^a)^{-s_k},~
        W_{2,2} = w_2^{c_2} k^{-\nu_3 s_k},~
        W_{2,3} = w_3^{c_2} k^{s_k},~
        W_{2,4} = w^{c_2}.
    \end{align*}
The simulator can only create a semi-functional signature since $s_k, z_k$
enables the cancellation of $k^{ab}$. Finally, $\mc{A}$ outputs a forged
signature $\sigma^* = (W_{1,1}^*, \ldots, W_{2,4}^*)$ on a message $M^*$. To
verify the forged signature, $\mc{B}_3$ first chooses random exponents $s_1,
s_2, s'_c, z'_c \in \Z_p$ and computes semi-functional verification
components by implicitly setting
    $t = c,~ s_c = -ac y_g + s'_c,~
    z_c = -ac y_g (xM^*+y)/s_c + z'_c/s_c$
as
    \begin{align*}
    &   V_{1,1} = (\hat{k}^c)^{y_g},~
        V_{1,2} = (\hat{k}^c)^{y_g \nu_1} \hat{v}^{s_1},~
        V_{1,3} = \hat{v}^{\nu_3 s_1} \hat{k}^{s'_c},~
        V_{1,4} = (\hat{k}^c)^{-y_g (\phi_1 + \nu_1 \phi_2)} \hat{v}^{-\pi s_1}
                  \hat{k}^{-\phi_3 s'_c}, \\
    &   V_{2,1} = (\hat{k}^c)^{y_g (xM^*+y)},~
        V_{2,2} = (\hat{k}^c)^{y_g (xM^*+y) \nu_1} \hat{v}^{s_2},~
        V_{2,3} = \hat{v}^{\nu_3 s_2} \hat{k}^{z'_c},~ \\
    &   V_{2,4} = (\hat{k}^c)^{-y_g (xM^*+y) (\phi_1 + \nu_1 \phi_2)}
                  \hat{v}^{-\pi s_2} \hat{k}^{-\phi_3 z'_c}.
    \end{align*}
Next, it verifies that
    $\prod_{i=1}^4 e(W_{1,i}^*, V_{1,i}) \cdot \prod_{i=1}^4 e(W_{2,i}^*, V_{2,i})^{-1}
    \stackrel{?}{=} (T)^{y_g^2}.$
If this equation holds, then it outputs $0$. Otherwise, it outputs $1$.

\vs To finish the proof, we first show that the distribution of the
simulation using $D, T = e(k,\hat{k})^{abc}$ is the same as $\tb{G}_2$. The
public key is correctly distributed since the random blinding values $y_g,
y_w, y_v$ are used. The semi-functional signature is correctly distributed as
    \begin{align*}
    W_{1,1} &= g^{\alpha} (u^M h)^r w_1^{c_1} (f^{\nu_1 \nu_3 - \nu_2})^{s_k z_k}
             = k^{y_g ab} (u^M h)^r w_1^{c_1}
               (k^{\nu_1 \nu_3 - a})^{s_k (b y_g / s_k + z'_k)} \\
            &= (u^M h)^r w_1^{c_1} (k^b)^{\nu_1 \nu_3 y_g} k^{\nu_1 \nu_3 s_k z'_k}
               (k^a)^{-s_k z'_k}.
    \end{align*}
The semi-functional verification components are correctly distributed as
    \begin{align*}
    V_{1,3} &= (\hat{g}^{\nu_2})^t (\hat{v}^{\nu_3})^{s_1} \hat{f}^{s_c}
             = (\hat{k}^{y_g a})^c \hat{v}^{\nu_3 s_1} \hat{k}^{-ac y_g + s'_c}
             = \hat{v}^{\nu_3 s_1} \hat{k}^{s'_c},~ \\
    V_{1,4} &= (\hat{g}^{-\tau})^t (\hat{v}^{-\pi})^{s_1} (\hat{f}^{-\phi_3})^{s_c}
             = (\hat{k}^{-y_g (\phi_1 + \nu_1 \phi_2 + a \phi_3)})^c \hat{v}^{-\pi s_1}
               \hat{k}^{-\phi_3 (-ac y_g + s'_c)}
             = (\hat{k}^c)^{-y_g (\phi_1 + \nu_1 \phi_2)} \hat{v}^{-\pi s_1}
               \hat{k}^{-\phi_3 s'_c}, \db \\
    V_{2,3} &= (\hat{u}^{\nu_2 M^*} \hat{h}^{\nu_2})^t (\hat{v}^{\nu_3})^{s_2}
               \hat{f}^{s_c z_c}
             = (\hat{k}^{y_g a (xM^*+y)})^c (\hat{v}^{\nu_3})^{s_2}
               \hat{k}^{-ac y_g (xM^*+y) + z'_c}
             = \hat{v}^{\nu_3 s_2} \hat{k}^{z'_c},~ \\
    V_{2,4} &= (\hat{u}^{-\tau M^*} \hat{h}^{-\tau})^t (\hat{v}^{-\pi})^{s_2}
               (\hat{f}^{-\phi_3})^{s_c z_c}
             = (\hat{k}^{-y_g (\phi_1 + \nu_1 \phi_2 + a \phi_3) (xM^* + y)})^c
               (\hat{v}^{-\pi})^{s_2} (\hat{k}^{-\phi_3})^{-ac y_g (xM^*+y) + z'_c} \\
            &= (\hat{k}^c)^{-y_g (xM^*+y) (\phi_1 + \nu_1 \phi_2)} \hat{v}^{-\pi s_2}
               \hat{k}^{-\phi_3 z'_c}, \\
    \Omega^t &= e(g, \hat{g})^{\alpha t}
             = e(k, \hat{k})^{y_g^2 ab c} = (T_0)^{y_g^2} .
    \end{align*}
We next show that the distribution of the simulation using $D, T_1 = e(k,
\hat{k})^d$ is almost the same as $\tb{G}_3$. It is obvious that the
signature verification for the forged signature always fails if $T_1 = e(k,
\hat{k})^d$ is used except with $1/p$ probability since $d$ is a random value
in $\Z_p$.
%
This completes our proof.
\end{proof}

\subsubsection{Analysis of PKS2}

\begin{theorem} \label{thm:pks2-prime}
The above \tb{PKS2} scheme is existentially unforgeable under a chosen
message attack if the LW1, LW2, DBDH assumptions hold. That is, for any PPT
adversary $\mc{A}$, there exist PPT algorithms $\mc{B}_1, \mc{B}_2, \mc{B}_3$
such that
    $\Adv_{\mc{A}}^{PKS}(\lambda)
    \leq \Adv_{\mc{B}_1}^{LW1}(\lambda) + q \Adv_{\mc{B}_2}^{LW2}(\lambda) +
        \Adv_{\mc{B}_3}^{DBDH}(\lambda)$
where $q$ is the maximum number of signature queries of $\mc{A}$.
\end{theorem}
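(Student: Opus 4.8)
The plan is to mirror the dual system encryption proof of Theorem~\ref{thm:pks1-prime} almost verbatim, adapting it to the PKS2 scheme. The overall structure is identical: I would define a semi-functional signing algorithm \textbf{PKS2.SignSF} and a semi-functional verification algorithm \textbf{PKS2.VerifySF} by injecting the factor $f = g^{y_f}$ (and $\hat f = \hat g^{y_f}$) into the appropriate components, then run through the same hybrid sequence $\tb{G}_0, \tb{G}_1, \tb{G}_2, \tb{G}_3$, where $\tb{G}_1$ switches the verification of the forgery to semi-functional, $\tb{G}_2$ switches the queried signatures to semi-functional one at a time through subgames $\tb{G}_{1,0}, \ldots, \tb{G}_{1,q}$, and $\tb{G}_3$ rejects unconditionally so $\mc{A}$ has advantage zero. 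The final bound $\Adv_{\mc{A}}^{PKS}(\lambda) \leq \Adv_{\mc{B}_1}^{LW1}(\lambda) + q\,\Adv_{\mc{B}_2}^{LW2}(\lambda) + \Adv_{\mc{B}_3}^{DBDH}(\lambda)$ then follows by the same telescoping hybrid argument used in the proof of Theorem~\ref{thm:pks1-prime}.

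The essential difference lies in the first transition. In PKS1 the elements $g, u, h$ are published directly and the verification algorithm is randomized by the extra components $\hat v, \hat v^{\nu_3}, \hat v^{-\pi}$; the indistinguishability of $\tb{G}_0$ and $\tb{G}_1$ is then reduced to SXDH (via the parallel-SXDH variant). In PKS2 the design instead publishes the \emph{randomized} group elements $g w_1^{c_g}, u w_1^{c_u}, h w_1^{c_h}$ together with their companions $w_2^{c_g}, w^{c_g}$, etc., and keeps the plain $g, u, h$ in the private key. Accordingly, I would prove the $\tb{G}_0$--$\tb{G}_1$ step under the LW1 assumption rather than SXDH: the simulator embeds the LW1 challenge so that switching the forgery's verification components from normal to semi-functional corresponds exactly to the $T_0$ versus $T_1$ distinction of LW1. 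The key is that the masking exponents $c_g, c_u, c_h$ let the simulator construct the public key while the CDH-type value hidden inside $g, u, h$ stays concealed, so the simulator cannot self-distinguish normal from semi-functional verification. This is the part that genuinely departs from the PKS1 argument and is where I would concentrate attention.

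For the remaining two transitions I expect the PKS1 reductions to carry over with only bookkeeping changes. The $\tb{G}_{1,k-1}$--$\tb{G}_{1,k}$ step should again reduce to LW2: the simulator answers queries $j<k$ semi-functionally, embeds the LW2 challenge $T$ into the $k$-th signature, and answers $j>k$ normally, with the pairwise-independent masking $f(M) = AM + B$ ensuring that the nominally semi-functional signature is indistinguishable from a truly semi-functional one because $M^*$ is never queried. The $\tb{G}_2$--$\tb{G}_3$ step should reduce to DBDH by embedding $\alpha = ab$ and setting the verification target to $(T)^{y_g^2}$, so that a real DBDH tuple reproduces $\tb{G}_2$ while a random $T$ forces rejection as in $\tb{G}_3$. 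I would state these as three lemmas paralleling Lemmas~\ref{lem:pks1-prime-1}--\ref{lem:pks1-prime-3}.

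The main obstacle is verifying, in the LW1-based first lemma, that the simulator can simultaneously (i) produce a correctly distributed public key containing the randomized elements $g w_1^{c_g}, u w_1^{c_u}, h w_1^{c_h}$ and their $w_2, w$ companions, (ii) answer all signing queries with honest (normal) signatures using the private key $(\alpha, g, u, h)$, and (iii) set up the semi-functional verification components on the forgery so that the LW1 challenge term toggles precisely the semi-functional part. Concretely, I would have to choose the implicit assignments of $\nu, \phi_1, \phi_2, \tau$ and the blinding values against the LW1 instance so that all published $\hat g^{-\tau}, \hat u^{-\tau}, \hat h^{-\tau}$ terms are computable from the given monomials $\hat g^a, \hat g^b, \hat g^{ab^2}, \hat g^{b^2}, \hat g^{b^3}, \ldots$, while the quantity separating normal from semi-functional verification lines up with $\hat g^{ab^2 c}$ versus $\hat g^d$. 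Confirming this alignment — that every element the simulator must output is a known monomial and that the $T_0/T_1$ gap is exactly the semi-functional shift — is the delicate calculation, and it is the step I would check most carefully before asserting the reduction.
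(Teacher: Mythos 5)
Your proposal matches the paper's proof essentially step for step: the same semi-functional algorithms and hybrid sequence, the $\tb{G}_0$--$\tb{G}_1$ transition reduced to LW1 with the masking exponents $c_g, c_u, c_h$ canceling the hard value embedded in $g, u, h$ (the paper sets $g = k^{b^2}k^{y_g}$ etc.\ and uses $T$ versus $T_1$ to toggle the semi-functional shift $s_c = d$, $z_c = AM^*+B$), the LW2 step with the pairwise-independent $AM+B$ argument, and the DBDH step with $\alpha = ab$ and target $(T)^{y_g^2}$. The only caveat is that in the LW1 reduction the simulator does not literally hold $g, u, h$ (since $k^{b^2}$ is unavailable); it produces normal signatures by implicitly choosing $c_1, c_2$ to cancel $k^{b^2}$ --- exactly the delicate alignment you flagged for verification.
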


\begin{proof}
Before proving the security, we first define two additional algorithms for
semi-functional types. For the semi-functionality, we set $f = g^{y_f},
\hat{f} = \hat{g}^{y_f}$ where $y_f$ is a random exponent in $\Z_p$.

\begin{description}
\item [\tb{PKS2.SignSF}.] The semi-functional signing algorithm first
    creates a normal signature using the private key. Let $(W'_{1,1},
    \ldots, W'_{2,3})$ be the normal signature of a message $M$ with random
    exponents $r, c_1, c_2 \in \Z_p$. It selects random exponents $s_k, z_k
    \in \Z_p$ and outputs a semi-functional signature as
    \begin{align*}
    \sigma = \Big(~
    &   W_{1,1} = W'_{1,1} \cdot (f^{-{\nu}})^{s_k z_k},~
        W_{1,2} = W'_{1,2} \cdot f^{s_k z_k},~
        W_{1,3} = W'_{1,3},~\\
    &   W_{2,1} = W'_{2,1} \cdot (f^{-{\nu}})^{s_k},~
        W_{2,2} = W'_{2,2} \cdot f^{s_k},~
        W_{2,3} = W'_{2,3}
    ~\Big).
    \end{align*}

\item [\tb{PKS2.VerifySF}.] The semi-functional verification algorithm
    first creates normal verification components using the public key. Let
    $(V'_{1,1}, \ldots, V'_{2,3})$ be the normal verification components
    with a random exponent $t \in \Z_p$. It chooses random exponents $s_c,
    z_c \in \Z_p$ and computes semi-functional verification components as
    \begin{align*}
    &   V_{1,1} = V'_{1,1},~
        V_{1,2} = V'_{1,2} \cdot \hat{f}^{s_c},~
        V_{1,3} = V'_{1,3} \cdot (\hat{f}^{-\phi_2})^{s_c}, \\
    &   V_{2,1} = V'_{2,1},~
        V_{2,2} = V'_{2,2} \cdot \hat{f}^{s_c z_c},~
        V_{2,3} = V'_{2,3} \cdot (\hat{f}^{-\phi_2})^{s_c z_c}.
    \end{align*}
    Next, it verifies that $\prod_{i=1}^3 e(W_{1,i}, V_{1,i}) \cdot
    \prod_{i=1}^3 e(W_{2,i}, V_{2,i})^{-1} \stackrel{?}{=} \Omega^t$. If
    this equation holds, then it outputs 1. Otherwise, it outputs 0.
\end{description}
If the semi-functional verification algorithm is used to verify a
semi-functional signature, then an additional random element $e(f,
\hat{f})^{s_k s_c (z_k - z_c)}$ is left in the left part of the above
verification equation. If $z_k = z_c$, then the semi-functional verification
algorithm succeeds. In this case, we say that the signature is
\textit{nominally} semi-functional.

The security proof uses a sequence of games $\tb{G}_0, \tb{G}_1, \tb{G}_2$,
and $\tb{G}_3$. The definition of these games is the same as that of Theorem
\ref{thm:pks1-prime}.
From the following three lemmas, we prove that it is hard for $\mc{A}$ to
distinguish $\tb{G}_{i-1}$ from $\tb{G}_{i}$ under the given assumptions.
Therefore, we have that
    \begin{align*}
    \Adv_{\mc{A}}^{PKS}(\lambda)
    & = \Adv_{\mc{A}}^{G_0} +
        \sum_{i=1}^2 \big( \Adv_{\mc{A}}^{G_i} - \Adv_{\mc{A}}^{G_i} \big)
        - \Adv_{\mc{A}}^{G_3}
    \leq \sum_{i=1}^3 \big| \Adv_{\mc{A}}^{G_{i-1}} - \Adv_{\mc{A}}^{G_i} \big| \\
    & = \Adv_{\mc{B}_1}^{LW1}(\lambda) +
        \sum_{k=1}^q \Adv_{\mc{B}_2}^{LW2}(\lambda) +
        \Adv_{\mc{B}_3}^{DBDH}(\lambda).
    \end{align*}
This completes our proof.
\end{proof}

\begin{lemma} \label{lem:pks2-prime-1}
If the LW1 assumption holds, then no polynomial-time adversary can
distinguish between $\tb{G}_0$ and $\tb{G}_1$ with non-negligible advantage.
That is, for any adversary $\mc{A}$, there exists a PPT algorithm $\mc{B}_1$
such that
    $\big| \Adv_{\mc{A}}^{G_0} - \Adv_{\mc{A}}^{G_1} \big|
    = \Adv_{\mc{B}_1}^{LW1}(\lambda)$.
\end{lemma}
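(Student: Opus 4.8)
The plan is to build a simulator $\mc{B}_1$ that, given an LW1 instance $D$ together with $T$ (where $T=T_0=\hat{g}^{ab^2c}$ or $T=T_1=\hat{g}^d$), runs $\mc{A}$ and decides which case holds, arranging things so that the $T_0$-world is exactly $\tb{G}_0$ and the $T_1$-world is exactly $\tb{G}_1$. The guiding principle is the one already used for \tb{PKS1} in Lemma~\ref{lem:pks1-prime-1}: a normal signature verifies identically under the normal and the semi-functional verification algorithms, so $\tb{G}_0$ and $\tb{G}_1$ can diverge only when $\mc{A}$ submits a \emph{semi-functional} forgery. Hence $\mc{B}_1$ answers every signing query with an honest normal signature, and embeds the challenge $T$ only into the verification of the forged signature, outputting $0$ when the verification equation holds and $1$ otherwise.

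First I would fix the embedding of the hidden parameters. I set the scheme generator $\hat{g}$ to the instance generator and encode the dual-system exponent $\nu$ and the verification randomness $t$ in terms of the instance exponents $a,b,c$ so that the normal value $\hat{g}^{\nu t}$ occupying the slot $V_{1,2}$ coincides with the exponent $ab^2c$ of $T_0$; concretely one takes $\hat{g}^{\nu}$ from $\hat{g}^{ab^2}$ (available in $D$), takes $t$ from $\hat{g}^{c}$, and keeps $\tau=\phi_1+\nu\phi_2$ so that $\hat{g}^{-\tau}$ remains computable. The public key is then assembled in the masked form $g w_1^{c_g}, u w_1^{c_u}, h w_1^{c_h}$ (with the companion $w_2$- and $w$-blocks) using freshly chosen $c_g,c_u,c_h$; this masking is precisely what lets $\mc{B}_1$ expose $g,u,h$ — needed for re-randomization and aggregation — without leaking the semi-functional direction, because the fresh exponents information-theoretically hide any instance-dependent value placed in $g,u,h$. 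Signing queries are answered by running \tb{PKS2.Sign} directly, which $\mc{B}_1$ can do from its knowledge of $\alpha,x,y$ and the $\G$-side material; crucially, the embedding must leave $\mc{B}_1$ \emph{unable} to form the element $f^{-\nu}$ in $\G$ demanded by \tb{PKS2.SignSF}, so that $\mc{B}_1$ cannot manufacture a semi-functional signature and thereby decide the instance by itself.

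To check the forgery on $M^*$, I would build $V_{1,1},V_{2,1}$ from the directly available powers of $\hat{g}^c$, and inject $T$ into $V_{1,2},V_{1,3}$ and into $V_{2,2},V_{2,3}$ in the ratio dictated by the semi-functional direction $(0,1,-\phi_2)$, i.e. $T$ in the $\hat{g}^{\nu}$-slots and $T^{-\phi_2}$ in the $\hat{g}^{-\tau}$-slots. When $T=T_0$ these are exactly the honest normal components, since the injected factor is absorbed into $\hat{g}^{\nu t}$; this reproduces $\tb{G}_0$. When $T=T_1$ the surplus factor $\hat{g}^{d-ab^2c}$ survives as $\hat{f}^{s_c}$ with a uniform $s_c$ inherited from the uniform $d$, which is precisely the semi-functional verification, reproducing $\tb{G}_1$.

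The step I expect to be the main obstacle is producing a \emph{correctly distributed} semi-functional verification from a single challenge term. One injection couples the two rows $V_{1,\cdot}$ and $V_{2,\cdot}$, which forces the semi-functional slope $z_c$ to the message-tied value $xM^*+y$ rather than the fresh uniform exponent used in the real $\tb{G}_1$; note that here $x,y$ are \emph{not} information-theoretically hidden (they leak through $\hat{u}=\hat{g}^x,\hat{h}=\hat{g}^y$), so one cannot simply appeal to pairwise independence as in the LW2 step. For \tb{PKS1} this coupling was avoided by the \emph{two}-component parallel-SXDH challenge, which supplied independent randomness for $s_c$ and $z_c$; for \tb{PKS2} the analogous independence must instead be drawn from the auxiliary terms of LW1 (the powers $\hat{g}^{ac},\hat{g}^{bc},\hat{g}^{b^2c},\hat{g}^{b^3c}$ together with $\hat{g}^{b^2},\hat{g}^{b^3}$), which I expect are present precisely so that the $V_{2,\cdot}$ row can carry an independent offset and thus a uniform $z_c$. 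Verifying that this yields a semi-functional verification whose distribution matches $\tb{G}_1$ exactly — so that a semi-functional forgery is accepted only with the negligible probability it would be in the real game — is the delicate part; once the public key, the normal signatures, and the two verification modes are shown to match $\tb{G}_0$ and $\tb{G}_1$ on the nose, the claimed bound $\big|\Adv_{\mc{A}}^{G_0}-\Adv_{\mc{A}}^{G_1}\big|=\Adv_{\mc{B}_1}^{LW1}(\lambda)$ follows immediately.
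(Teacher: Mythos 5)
Your skeleton is the paper's: embed the challenge only in the verification of the forgery, answer all signing queries with normal signatures, and argue that $T_0$ reproduces $\tb{G}_0$ while $T_1$ reproduces $\tb{G}_1$. But the step you yourself flag as ``the delicate part'' is a genuine gap, and the escape route you sketch does not work. The auxiliary terms $\hat{k}^{ac},\hat{k}^{bc},\hat{k}^{b^2c},\hat{k}^{b^3c}$ of LW1 carry no randomness independent of $a,b,c$; they are needed just to assemble the \emph{normal} parts of $V_{1,3},V_{2,1},V_{2,3}$, and the only hidden uniform value in the instance is the single $d$ sitting in $T_1$, which is consumed by $s_c$. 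So you cannot extract a second independent offset to make $z_c$ uniform ``from the assumption.'' The paper resolves this differently: it gives $u$ and $h$ two degrees of freedom each, setting $\hat{u}=(\hat{k}^{b^2})^A\hat{k}^{y_u}$ and $\hat{h}=(\hat{k}^{b^2})^B\hat{k}^{y_h}$ (with $\hat{g}=\hat{k}^{b^2}\hat{k}^{y_g}$). The forced slope is then $z_c=AM^*+B$, not $xM^*+y$: the semi-functional direction couples only to the $b^2$-coefficients $A,B$, and these remain information-theoretically hidden behind $y_u,y_h$ even though the full discrete logs $x,y$ are determined by the public key. Since the forgery is the only semi-functional verification in the game, a single hidden uniform $B$ already makes $z_c$ uniform; no pairwise-independence argument is needed here (that is only required in the LW2 step, where a signature and the forgery share $A,B$). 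Your observation that ``$x,y$ leak'' is true but aimed at the wrong quantity.

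A second, smaller gap: you assert that $\mc{B}_1$ answers signing queries ``by running \tb{PKS2.Sign} directly.'' It cannot. For $T$ to land in the $\hat{g}^{\nu t}$ slot, $\hat{g}$ must contain $\hat{k}^{b^2}$, hence $g,u,h$ contain $k^{b^2}$, which is not given in $\G$ (only $k^b$ is). The masking exponents $c_g,c_u,c_h$ are not there to hide the semi-functional direction, as you suggest, but to make the public key \emph{computable}: setting $c_g=-b/y_w+c'_g$ cancels $k^{b^2}$ inside $g w_1^{c_g}$ because $w_1=k^{by_w}$. The same cancellation must be repeated at signing time, by implicitly choosing $c_1=-b(\alpha+(AM+B)r)/y_w+c'_1$ and $c_2=-br/y_w+c'_2$ so that $g^{\alpha}(u^Mh)^r w_1^{c_1}$ and $g^r w_1^{c_2}$ become computable from $k$ and $k^b$ alone. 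Without spelling this out, the simulator you describe cannot produce any signatures at all.
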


\begin{proof}
%
Suppose there exists an adversary $\mc{A}$ that distinguishes between
$\tb{G}_0$ and $\tb{G}_1$ with non-negligible advantage. A simulator
$\mc{B}_1$ that solves the LW1 assumption using $\mc{A}$ is given: a
challenge tuple
    $D = ((p, \G, \hat{\G}, \G_T, e), \lb
    k, k^b, \hat{k}, \hat{k}^a, \hat{k}^b,
    \hat{k}^{ab^2}, \hat{k}^{b^2}, \hat{k}^{b^3}, \hat{k}^c, \hat{k}^{ac},
    \hat{k}^{bc}, \hat{k}^{b^2 c}, \hat{k}^{b^3 c})$ and $T$
    where $T = T_0 = \hat{k}^{ab^2c}$ or $T = T_1 = \hat{k}^{ab^2c + d}$.
Then $\mc{B}_1$ that interacts with $\mc{A}$ is described as follows:
$\mc{B}_1$ first chooses random exponents $\phi_2, A, B, \alpha \in \Z_p$,
random values $y_g, y_u, y_h, y_w \in \Z_p$. It computes $w_1 = w^{\phi_1} =
(k^b)^{y_w}, w_2 = w^{\phi_2} = k^{y_w \phi_2}, w = k^{y_w}$ by implicitly
setting $\phi_1 = b$. It implicitly sets $c_g = -b/y_w + c'_g, c_u = -b A/y_w
+ c'_u, c_h = -b B/y_w + c'_h, \nu = a, \tau = b + a \phi_2$ and publishes a
public key $PK$ by selecting random values $c'_g, c'_u, c'_h \in \Z_p$ as
    \begin{align*}
    &   g w_1^{c_g} = k^{y_g} w_1^{c'_g},~
        w_2^{c_g} = (k^b)^{-b_2} w_2^{c'_g},~
        w^{c_g} = (k^b)^{-1} w^{c'_g},~ \\
    &   u w_1^{c_u} = k^{y_u} w_1^{c'_u},~
        w_2^{c_u} = (k^b)^{-b_2 A} w_2^{c'_u},~
        w^{c_u} = (k^b)^{-A} w^{c'_u},~ \\
    &   h w_1^{c_h} = k^{y_h} w_1^{c'_h},~
        w_2^{c_h} = (k^b)^{-b_2 B} w_2^{c'_h},~
        w^{c_h} = (k^b)^{-B} w^{c'_h},~
        w_1,~ w_2,~ w,~ \db \\
    &   \hat{g} = \hat{k}^{b^2} \hat{k}^{y_g},~
        \hat{g}^{\nu} = \hat{k}^{ab^2} (\hat{k}^a)^{y_g},~
        \hat{g}^{-\tau} = (\hat{k}^{b^3} (\hat{k}^b)^{y_g}
            (\hat{k}^{ab^2})^{b_2} (\hat{k}^a)^{y_g b_2} )^{-1},~ \db \\
    &   \hat{u} = (\hat{k}^{b^2})^A \hat{k}^{y_u},~
        \hat{u}^{\nu} = (\hat{k}^{ab^2})^A (\hat{k}^a)^{y_u},~
        \hat{u}^{-\tau} = ((\hat{k}^{b^3})^A (\hat{k}^b)^{y_u}
            (\hat{k}^{ab^2})^{A b_2} (\hat{k}^a)^{y_u b_2} )^{-1},~ \\
    &   \hat{h} = (\hat{k}^{b^2})^B \hat{k}^{y_h},~
        \hat{h}^{\nu} = (\hat{k}^{ab^2})^B (\hat{k}^a)^{y_h},~
        \hat{h}^{-\tau} = ((\hat{k}^{b^3})^B (\hat{k}^b)^{y_h}
            (\hat{k}^{ab^2})^{B b_2} (\hat{k}^a)^{y_h b_2} )^{-1},~ \\
    &   \Omega = (e(k^{b^3}, \hat{k}^b) \cdot e(k^{b^2}, \hat{k})^{2 y_g}
            \cdot e(k, \hat{k})^{y_g^2})^{\alpha}.
    \end{align*}
It implicitly sets $g = k^{b^2} k^{y_g}, u = (k^{b^2})^A k^{y_u}, h =
(k^{b^2})^B k^{y_h}$, but it cannot create these elements since $k^{b^2}$ is
not given. Additionally, it sets $f = k, \hat{f} = \hat{k}$ for the
semi-functional signature and verification.
$\mc{A}$ adaptively requests a signature for a message $M$. To response this
sign query, $\mc{B}_1$ first selects random exponents $r, c'_1, c'_2 \in
\Z_p$. It implicitly sets
    $c_1 = -b (\alpha + (A M + B) r) / y_w + c'_1,
    c_2 = -b r_1 / y_w + c'_2$
and creates a normal signature as
    \begin{align*}
    &   W_{1,1} = k^{y_g \alpha + (y_u M + y_h)r} (w_1)^{c'_1},~
        W_{1,2} = (W_{1,3})^{\phi_2},~
        W_{1,3} = (k^b)^{-(\alpha + (A M + B)r)} w^{c'_1},~ \\
    &   W_{2,1} = k^{y_g r} (w_1)^{c'_2},~
        W_{2,2} = (W_{2,3})^{\phi_2},~
        W_{2,3} = (k^b)^{-r} w^{c'_2}.
    \end{align*}
Finally, $\mc{A}$ outputs a forged signature $\sigma^* = (W_{1,1}^*, \ldots,
W_{2,3}^*)$ on a message $M^*$ from $\mc{A}$. To verify the forged signature,
$\mc{B}_1$ first chooses a random exponent $t \in \Z_p$ and computes
verification components by implicitly setting $t = c$ as
    \begin{align*}
    &   V_{1,1} = \hat{k}^{b^2 c} (\hat{k}^c)^{y_g},~
        V_{1,2} = T (\hat{k}^{ac})^{y_g},~
        V_{1,3} = ((\hat{k}^{b^3 c}) (\hat{k}^{bc})^{y_g} (T)^{\phi_2}
                  (\hat{k}^{ac})^{y_g \phi_2})^{-1}, \\
    &   V_{2,1} = (\hat{k}^{b^2 c})^{A M^* + B} (\hat{k}^c)^{y_u M^* + y_h},~
        V_{2,2} = (T)^{A M^* + B} (\hat{k}^{ac})^{y_u M^* + y_h},~ \\
    &   V_{2,3} = \big(
                  (\hat{k}^{b^3 c})^{A M^* + B} (\hat{k}^{bc})^{y_u M^* + y_h}
                  (T)^{\phi_2 (A M^* + B)} (\hat{k}^{ac})^{\phi_2 (y_u M^* + y_h)}
                  \big)^{-1}.
    \end{align*}
Next, it verifies that
    $\prod_{i=1}^3 e(W_{1,i}^*, V_{1,i}) \cdot
    \prod_{i=1}^3  e(W_{2,i}^*, V_{2,i})^{-1} \stackrel{?}{=} \Omega^t$.
If this equation holds, then it outputs 0. Otherwise, it outputs 1.

\vs To finish this proof, we show that the distribution of the simulation is
correct. We first show that the distribution using $D, T_0 = \hat{k}^{ab^2c}$
is the same as $\tb{G}_0$. The public key is correctly distributed as
    \begin{align*}
    &   g w_1^{c_g} = (k^{b^2} k^{y_g}) (k^{b y_w})^{-b/y_w + c'_g}
                    = k^{y_g} w_1^{c'_g}.
    \end{align*}
The simulator cannot create $g, u, h$ since $k^{b^2}$ is not given in the
assumption, but it can create $g w_1^{c_g}, u w_1^{c_u}, h w_1^{c_h}$ since
$c_g, c_u, c_h$ can be used to cancel out $k^{b^2}$. The signature is
correctly distributed as
    \begin{align*}
    W_{1,1} &= g^{\alpha} (u^M h)^{r} w_1^{c_1}
             = (k^{b^2 + y_g})^{\alpha} (k^{(b^2 A + y_u)M} k^{b^2 B + y_h})^{r}
               (k^{b y_w})^{-b(\alpha + (A M + B)r) / y_w + c'_1} \\
            &= k^{y_g \alpha + (y_u M + y_h)r} w_1^{c'_1},~ \\
    W_{2,1} &= g^{r} (w^{b_1})^{c_2}
             = (k^{b^2 + y_g})^{r} (k^{b y_w})^{-b r / y_w + c'_2}
             = k^{y_g r} (w^{b_1})^{c'_2}.
    \end{align*}
It can create a normal signature since $c_1, c_2$ enable the cancellation of
$k^{b^2}$, but it cannot create a semi-functional signature since $k^a$ is
not given. The verification components are correctly distributed as
    \begin{align*}
    V_{1,1} &= \hat{g}^t = (\hat{k}^{b^2 + y_g})^c
             = \hat{k}^{b^2 c} (\hat{k}^c)^{y_g},~
    V_{1,2}  = (\hat{g}^{\nu})^t = \hat{k}^{(b^2 + y_g) ac}
             = T_0 (\hat{k}^{ac})^{y_g},~ \\
    V_{1,3} &= (\hat{g}^{-\tau})^t = (\hat{k}^{(b^2 + y_g) (b + a \phi_2) c})^{-1}
             = ((\hat{k}^{b^3 c}) (\hat{k}^{bc})^{y_g} (T_0)^{\phi_2}
               (\hat{k}^{ac})^{y_g \phi_2})^{-1},
             \db \\
    V_{2,1} &= (u^{M^*} h)^t
             = (k^{(b^2 A + y_u) M^*} k^{b^2 B + y_h})^c
             = (k^{b^2 c})^{A M^* + B} (k^c)^{y_u M^* + y_h},~ \\
    V_{2,2} &= ((u^{\nu})^{M^*} h^{\nu})^t
             = (k^{(b^2 A + y_u)a M^*} k^{(b^2 B + y_h)a})^c
             = (T_0)^{A M^* + B} (k^{ac})^{y_u M^* + y_h},~ \\
    V_{2,3} &= ((u^{-\tau})^{M^*} h^{-\tau})^t
             = ((k^{(b^2 A + y_u)(b + a \phi_2) M^*}
               k^{(b^2 B + y_h) (b + a \phi_2)})^c)^{-1} \\
            &= ((k^{b^3 c})^{A M^* + B} (k^{bc})^{y_u M^* + y_h}
               (T_0)^{\phi_2 (A M^* + B)} (k^{ac})^{\phi_2 (y_u M^* + y_h)})^{-1}.
    \end{align*}
We next show that the distribution of the simulation using $D, T_1 =
\hat{k}^{ab^2c + d}$ is the same as $\tb{G}_1$. We only consider the
distribution of the verification components since $T$ is only used in the
verification components. The difference between $T_0$ and $T_1$ is that $T_1$
additionally has $\hat{k}^{d}$. Thus $V_{1,2}, V_{1,3}, V_{2,2}, V_{2,3}$
that have $T$ in the simulation additionally have $\hat{k}^{d},
(\hat{k}^{d})^{\phi_2}, (\hat{k}^{d})^{A M^* + B}, (\hat{k}^{d})^{\phi_2 (A
M^* + B)}$ respectively. If we implicitly set $s_c = d, z_c = A M^* + B$,
then the verification components of the forged signature are semi-functional
since $A$ and $B$ are information-theoretically hidden to the adversary.
%
This completes our proof.
\end{proof}

\begin{lemma} \label{lem:pks2-prime-2}
If the LW2 assumption holds, then no polynomial-time adversary can
distinguish between $\tb{G}_1$ and $\tb{G}_2$ with non-negligible advantage.
That is, for any adversary $\mc{A}$, there exists a PPT algorithm $\mc{B}_2$
such that
    $\big| \Adv_{\mc{A}}^{G_{1,k-1}} - \Adv_{\mc{A}}^{G_{1,k}} \big|
    = \Adv_{\mc{B}_2}^{LW2}(\lambda)$.
\end{lemma}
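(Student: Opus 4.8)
The plan is to build a PPT algorithm $\mc{B}_2$ that takes an LW2 instance $D = ((p,\G,\hat{\G},\G_T,e), k, k^a, k^b, k^c, \hat{k}, \hat{k}^a, \hat{k}^{a^2}, \hat{k}^{bx}, \hat{k}^{abx}, \hat{k}^{a^2x})$ together with $T$ (either $k^{bc}$ or $k^{bc+d}$) and interpolates between $\tb{G}_{1,k-1}$ and $\tb{G}_{1,k}$: when $T=k^{bc}$ the $k$-th signature returned to $\mc{A}$ is normal, and when $T=k^{bc+d}$ it is semi-functional. This follows the template of Lemma \ref{lem:pks1-prime-2} for \tb{PKS1}, adapted to the three-dimensional structure of \tb{PKS2}, where $\nu,\phi_1,\phi_2$ replace $\nu_1,\nu_2,\nu_3,\phi_1,\phi_2,\phi_3$ and the verification carries no $\hat{v}$ randomizers.

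For the embedding I would push the master exponent $a$ into the base, setting $g=k^a,\ u=(k^a)^A k^{y_u},\ h=(k^a)^B k^{y_h},\ w=k^{y_w}$ and their hatted counterparts $\hat{g}=\hat{k}^a,\ \hat{u}=(\hat{k}^a)^A\hat{k}^{y_u},\ \hat{h}=(\hat{k}^a)^B\hat{k}^{y_h}$, with $f=k,\ \hat{f}=\hat{k}$ spanning the semi-functional space. I would then implicitly take $\phi_2=b,\ \tau=a+y_\tau,\ \phi_1=a+y_\tau-\nu b$, which is consistent since $\phi_1+\nu\phi_2=a+y_\tau=\tau$; the values $\nu,y_\tau,A,B,y_u,y_h,y_w,\alpha$ are chosen by $\mc{B}_2$. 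In contrast to the LW1 reduction of Lemma \ref{lem:pks2-prime-1}, the element $g=k^a$ is now directly computable, so the public-key randomizers $c_g,c_u,c_h$ are simply drawn at random and $g w_1^{c_g}, w_2^{c_g}, w^{c_g}, \ldots$ are assembled from $k^a,k^b,k$; likewise $\hat{g}^{-\tau}=(\hat{k}^{a^2})^{-1}(\hat{k}^a)^{-y_\tau}$, the $\hat{u},\hat{h}$ blocks come from $\hat{k}^{a^2},\hat{k}^a,\hat{k}$, and $\Omega=e(k^a,\hat{k}^a)^\alpha$, so the full public key is simulatable.

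On the signing oracle I would answer the $j$-th query by: (i) $j<k$, producing a semi-functional signature via \tb{PKS2.SignSF} (the needed tuple $(f^{-\nu},f,1)=(k^{-\nu},k,1)$ is known); (ii) $j>k$, signing normally from the known $\alpha,g,u,h$ (note $x,y$ are not required); and (iii) $j=k$, the hinge query, implicitly setting $r=-c+r',\ c_1=c(AM+B)/y_w+c'_1,\ c_2=c/y_w+c'_2$ and folding in $T$. The crucial cancellation is that the uncomputable $k^{ac}$ (resp.\ $k^{ac(AM+B)}$) emitted by $g^r$ is killed by the $k^{ac}$ (resp.\ $k^{ac(AM+B)}$) hidden inside $w_1^{c_2}$ (resp.\ $w_1^{c_1}$), since $\phi_1$ carries the summand $a$; what remains is expressed through $T,k^c,k^b$, e.g.\ $W_{2,1}=g^{r'}(T)^{-\nu}(k^c)^{y_\tau}w_1^{c'_2}$ and $W_{2,2}=T\,w_2^{c'_2}$. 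For $T=k^{bc}$ these are exactly normal, whereas for $T=k^{bc+d}$ the extra $k^d$ and $k^{d(AM+B)}$ factors make the signature semi-functional with $s_k=d$ and $z_k=AM+B$. To test $\mc{A}$'s forgery I would run the semi-functional verifier with $t=bx+t',\ s_c=-a^2x,\ z_c=AM^*+B$; the dangerous $(\hat{f}^{-\phi_2})^{s_c}=\hat{k}^{a^2bx}$ and $(\hat{f}^{-\phi_2})^{s_c z_c}=\hat{k}^{a^2bx(AM^*+B)}$ cancel against the $\hat{k}^{-a^2bx}$ produced by $(\hat{g}^{-\tau})^t$ with $\tau=a+y_\tau$, reducing every verification component to a monomial in $\hat{k}^{a^2},\hat{k}^{abx},\hat{k}^a,\hat{k}^{bx},\hat{k}$, while the target becomes $e(k^a,\hat{k}^{abx})^\alpha e(k^a,\hat{k}^a)^{\alpha t'}=\Omega^t$.

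The main obstacle, exactly as in Lemma \ref{lem:pks1-prime-2}, is the final step: for $T=k^{bc+d}$ the $k$-th signature is only \emph{nominally} semi-functional, because its tag $z_k=AM+B$ is pinned to $A,B$ rather than freshly random, and those same $A,B$ fix the verifier's $z_c=AM^*+B$. I would argue that $\tb{G}_{1,k}$ is nevertheless faithfully reproduced: since $\mc{A}$ may not query $M^*$ and $f(M)=AM+B$ is pairwise independent, the pair $(z_k,z_c)=(AM+B,AM^*+B)$ is distributed as two independent uniform values whenever $M\neq M^*$, and $A,B$ are information-theoretically hidden from $\mc{A}$ (they enter the public key and signatures only through the blinded combinations above). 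Hence even an unbounded $\mc{A}$ cannot distinguish the fixed tag from a fresh one, so the $T=k^{bc+d}$ simulation equals $\tb{G}_{1,k}$ and the $T=k^{bc}$ simulation equals $\tb{G}_{1,k-1}$, yielding $\big|\Adv_{\mc{A}}^{G_{1,k-1}}-\Adv_{\mc{A}}^{G_{1,k}}\big|=\Adv_{\mc{B}_2}^{LW2}(\lambda)$.
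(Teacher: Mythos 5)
Your proposal matches the paper's own proof essentially step for step: the same embedding ($g=k^a$, $\phi_1=-\nu b+(a+y_\tau)$, $\phi_2=b$, $\tau=a+y_\tau$, random $c_g,c_u,c_h$, $f=k$, $\hat f=\hat k$), the same three-case signing oracle with $r=-c+r'$, $c_1=c(AM+B)/y_w+c'_1$, $c_2=c/y_w+c'_2$, the same semi-functional verification with $t=bx+t'$, $s_c=-a^2x$, $z_c=AM^*+B$, and the same concluding pairwise-independence argument for the nominally semi-functional $k$-th signature. No gaps; this is the paper's reduction.
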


\begin{proof}
%
Suppose there exists an adversary $\mc{A}$ that distinguishes between
$\tb{G}_{1,k-1}$ and $\tb{G}_{1,k}$ with non-negligible advantage. A
simulator $\mc{B}_2$ that solves the LW2 assumption using $\mc{A}$ is given:
a challenge tuple
    $D = ((p, \G, \hat{\G}, \G_T, e), \lb
    k, k^a, k^b, k^c, \hat{k}^a, \hat{k}^{a^2}, \hat{k}^{bx}, \hat{k}^{abx},
    \hat{k}^{a^2x})$ and $T$
where $T = T_0 = k^{bc}$ or $T = T_1 = k^{bc+d}$. Then $\mc{B}_2$ that
interacts with $\mc{A}$ is described as follows: $\mc{B}_2$ first selects
random exponents $\nu, y_{\tau}, A, B, \alpha, y_u, y_h, y_w \in \Z_p$. It
computes
    $w_1 = w^{\phi_1} = ((k^b)^{-\nu} k^a k^{y_{\tau}})^{y_w},
    w_2 = w^{\phi_2} = (k^b)^{y_w},
    w = k^{y_w}$
by implicitly setting $\phi_1 = -\nu b + (a + y_{\tau}), \phi_2 = b$. It
implicitly sets $\tau = a + y_{\tau}$ and publishes a public key $PK$ by
selecting random values $c_g, c_u, c_h \in \Z_p$ as
    \begin{align*}
    &   g w_1^{c_g} = k^a w_1^{c_g}, w_2^{c_g}, w^{c_g},~
        u w_1^{c_u} = (k^a)^A k^{y_u} w_1^{c_u}, w_2^{c_u}, w^{c_u},~
        h w_1^{c_h} = (k^a)^B k^{y_h} w_1^{c_h}, w_2^{c_h}, w^{c_h},~
        w_1, w_2, w,~ \\
    &   \hat{g} = \hat{k}^a, \hat{g}^{\nu},
        \hat{g}^{-\tau} = (\hat{k}^{a^2} (\hat{k}^a)^{y_{\tau}})^{-1}),~
        \hat{u} = (\hat{k}^a)^A \hat{k}^{y_u}, \hat{u}^{\nu},
        \hat{u}^{-\tau} = ((\hat{k}^{a^2})^A (\hat{k}^a)^{y_u + Ay_{\tau}}
            \hat{k}^{y_u y_{\tau}})^{-1},~ \\
    &   \hat{h} = (\hat{k}^a)^B \hat{k}^{y_h}, \hat{h}^{\nu},
        \hat{h}^{-\tau} = ((\hat{k}^{a^2})^B (\hat{k}^a)^{y_h + By_{\tau}}
            \hat{k}^{y_h y_{\tau}} )^{-1},~
        \Omega = e(k^a, \hat{k}^a)^{\alpha}.
    \end{align*}
Additionally, it sets $f = k, \hat{f} = \hat{k}$ for the semi-functional
signature and verification. $\mc{A}$ adaptively requests a signature for a
message $M$. If this is a $j$-th signature query, then $\mc{B}_2$ handles
this query as follows:
\begin{itemize}
\item {Case $j < k$} : It creates a semi-functional signature by calling
    \tb{PKS2.SignSF} since it knows the tuple $(f^{-\nu}, f, 1)$ for the
    semi-functional signature.

\item {Case $j = k$} : It selects random exponents $r', c'_1, c'_2 \in
    \Z_p$ and creates a signature by implicitly setting $r = -c + r',~
    c_1 = c(AM+B)/y_w + c'_1,~ c_2 = c/y_w + c'_2$ as
    \begin{align*}
    &   W_{1,1} = g^{\alpha} (k^c)^{-(y_u M + y_h)} (u^M h)^{r'}
                  (T)^{-\nu (AM+B)} (k^c)^{y_{\tau} (AM+B)} w_1^{c'_1} ,~
        W_{1,2} = (T)^{(AM+B)} w_2^{c'_1},~ \\
    &   W_{1,3} = (k^c)^{(AM+B)} w^{c'_1},~
        W_{2,1} = g^{r'} (T)^{-\nu} (k^c)^{y_{\tau}} w_1^{c'_2},~
        W_{2,2} = T w_2^{c'_2},~
        W_{2,3} = k^c w^{c'_2}.
    \end{align*}

\item {Case $j > k$} : It creates a normal signature by calling
    \tb{PKS2.Sign} since it knows the private key.
\end{itemize}

\noindent Finally, $\mc{A}$ outputs a forged signature $\sigma^* =
(W_{1,1}^*, \ldots, W_{2,3}^*)$ on a message $M^*$. To verify the forged
signature, $\mc{B}_2$ first chooses a random exponent $t' \in \Z_p$ and
computes semi-functional verification components by implicitly setting
    $t = bx + t',~ s_c = -a^2 x,~ z_c = A M^* + B$
as
    \begin{align*}
    &   V_{1,1} = \hat{k}^{abx} (\hat{k}^a)^{t'},~
        V_{1,2} = (\hat{k}^{abx})^{\nu} (\hat{k}^a)^{\nu t'} (\hat{k}^{a^2x})^{-1},~
        V_{1,3} = (\hat{k}^{abx})^{-y_{\tau}} (\hat{g}^{-y_{\tau}})^{t'}, \\
    &   V_{2,1} = (\hat{k}^{abx})^{A M^* + B} (\hat{k}^{bx})^{y_u M^* + y_h}
                  (\hat{u}^{M^*} \hat{h})^{t'},~
        V_{2,2} = (\hat{k}^{abx})^{(A M^* + B) \nu}
                  (\hat{k}^{bx})^{(y_u M^* + y_h) \nu}
                  (\hat{u}^{M^*} \hat{h})^{\nu t'},~ \\
    &   V_{2,3} = (\hat{k}^{abx})^{-(A M^* + B) y_{\tau}}
                  (\hat{k}^{abx})^{-(y_u M^* + y_h)}
                  (\hat{k}^{bx})^{-(y_u M^* + y_h) y_{\tau}}
                  ((\hat{u}^{-\tau})^{M^*} \hat{h}^{-\tau})^{t'}.
    \end{align*}
Next, it verifies that
    $\prod_{i=1}^3 e(W_{1,i}^*, V_{1,i}) \cdot
    \prod_{i=1}^3 e(W_{2,i}^*, V_{2,i})^{-1} \stackrel{?}{=}
    e(k^a, \hat{k}^{abx})^{\alpha} \cdot e(k^a, \hat{k}^a)^{\alpha t'}$.
If this equation holds, then it outputs 0. Otherwise, it outputs 1.

\vs To finish the proof, we should show that the distribution of the
simulation is correct. We first show that the distribution of the simulation
using $D, T_0 = k^{bc}$ is the same as $\tb{G}_{1,k-1}$. The public key is
correctly distributed since the random blinding values $y_u, y_h, y_w$ are
used. The $k$-th signature is correctly distributed as
    \begin{align*}
    W_{1,1} &= g^{\alpha} (u^M h)^r w_1^{c_1}
             = g^{\alpha} (k^{(aA+y_u)M} k^{aB+y_h})^{-c + r'}
               (k^{y_w (-\nu b + a + y_{\tau})})^{c (AM+B)/y_w + c'_1} \\
            &= g^{\alpha} (k^c)^{-(y_u M + y_h)} (u^M h)^{r'}
               (T)^{-\nu (AM+B)} (k^c)^{y_{\tau} (AM+B)} w_1^{c'_1},~ \db \\
    W_{1,2} &= w_2^{c_1}
             = (k^{y_w b})^{c (AM+B)/y_w + c'_1}
             = (T)^{(AM+B)} w_2^{c'_1},~ \\
    W_{1,3} &= w^{c_1}
             = (k^{y_w})^{c (AM+B)/y_w + c'_1}
             = (k^c)^{(AM+B)} w^{c'_1}.
    \end{align*}
The semi-functional verification components are correctly distributed as
    \begin{align*}
    V_{2,1} &= (\hat{u}^{M^*} \hat{h})^t
             = (\hat{k}^{(aA+y_u) M^*} \hat{k}^{aB+y_h})^{bx + t'}
             = (\hat{k}^{abx})^{AM^* + B} (\hat{k}^{bx})^{y_u M^* + y_h}
               (\hat{u}^{M^*} \hat{h})^{t'},~ \\
    V_{2,2} &= ((\hat{u}^{\nu})^{M^*} \hat{h}^{\nu})^t \hat{f}^{s_c z_c}
             = (\hat{k}^{(aA+y_u) \nu M^*} \hat{k}^{(aB+y_h) \nu})^{bx + t'}
               \hat{k}^{-a^2x (AM^*+B)} \\
            &= (\hat{k}^{abx})^{(AM^* + B) \nu} (\hat{k}^{bx})^{(y_u M^* + y_h) \nu}
               ((\hat{u}^{\nu})^{M^*} \hat{h}^{\nu})^{t'}
               (\hat{k}^{a^2 x})^{-(AM^* + B)},~
               \db \\
    V_{2,3} &= ((\hat{u}^{-\tau})^{M^*} \hat{h}^{-\tau})^t (\hat{f}^{-\phi_2})^{s_c z_c}
             = (\hat{k}^{-(aA+y_u) (a+y_{\tau}) M^*}
               \hat{k}^{-(aB+y_h) (a+y_{\tau})})^{bx + t'}
               \hat{k}^{-b (-a^2 x) (AM^*+B)} \\
            &= (\hat{k}^{abx})^{-(AM^* + B) y_{\tau} - (y_u M^* + y_h)}
               (\hat{k}^{bx})^{-(y_u M^* + y_h) y_{\tau}}
               ((\hat{u}^{-\tau})^{M^*} \hat{h}^{-\tau})^{t'}.
    \end{align*}
The simulator can create the semi-functional verification components with
only fixed $z_c = AM^* + B$ since $s_c, s_c$ enable the cancellation of
$\hat{k}^{a^2 b x}$. Even though it uses the fixed $z_c$, the distribution of
$z_c$ is correct since $A,B$ are information theoretically hidden to
$\mc{A}$.
We next show that the distribution of the simulation using $D, T_1 =
k^{bc+d}$ is the same as $\tb{G}_{1,k}$. We only consider the distribution of
the $k$-th signature since $T$ is only used in the $k$-th signature. The only
difference between $T_0$ and $T_1$ is that $T_1$ additionally has $k^d$. The
signature components $W_{1,1}, W_{1,2}$, $W_{2,1}, W_{2,2}$ that have $T$ in
the simulation additionally have $(k^d)^{-\nu (AM+B)}$, $(k^d)^{(AM+B)}$,
$(k^d)^{-\nu}$, $k^d$ respectively. If we implicitly set $s_k = d, z_k =
AM+B$, then the distribution of the $k$-th signature is the same as
$\tb{G}_{1,k}$ except that the $k$-th signature is nominally semi-functional.

Finally, we show that $\mc{A}$ cannot distinguish the nominally
semi-functional signature from the semi-functional signature. The main idea
of this is that $\mc{A}$ cannot request a signature for the forgery message
$M^*$ in the security model. Suppose there exists an unbounded adversary,
then he can gather $z_k = AM + B$ from the $k$-th signature and $z_c = AM^* +
B$ from the forged signature. It is easy to show that $z_k, z_c$ look random
to the unbounded adversary since $f(M) = AM + B$ is a pair-wise independent
function and $A, B$ are information theoretically hidden to the adversary.
%
This completes our proof.
\end{proof}

\begin{lemma} \label{lem:pks2-prime-3}
If the DBDH assumption holds, then no polynomial-time adversary can
distinguish between $\tb{G}_2$ and $\tb{G}_3$ with non-negligible advantage.
That is, for any adversary $\mc{A}$, there exists a PPT algorithm $\mc{B}_3$
such that
    $\big| \Adv_{\mc{A}}^{G_2} - \Adv_{\mc{A}}^{G_3} \big| =
    \Adv_{\mc{B}_3}^{DBDH}(\lambda)$.
\end{lemma}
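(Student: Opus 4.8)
The plan is to reduce to DBDH by an argument that mirrors, \emph{mutatis mutandis}, the proof of Lemma \ref{lem:pks1-prime-3}, adapted to the simpler three-component structure of \tb{PKS2} in which a single exponent $\nu$ plays the role that $\nu_2$ played there. Given a DBDH instance $((p,\G,\hat{\G},\G_T,e), k, k^a, k^b, k^c, \hat{k}, \hat{k}^a, \hat{k}^b, \hat{k}^c)$ and $T$, the simulator $\mc{B}_3$ embeds the master secret as $\alpha = ab$ and the verification exponent as $\nu = a$. Concretely I would set $g = k^{y_g}$, $u = g^x$, $h = g^y$, $w = k^{y_w}$, $w_1 = k^{y_w \phi_1}$, $w_2 = k^{y_w \phi_2}$, $\hat{g} = \hat{k}^{y_g}$, $\hat{u} = \hat{g}^x$, $\hat{h} = \hat{g}^y$, and $\tau = \phi_1 + a \phi_2$, with $x, y, \phi_1, \phi_2, y_g, y_w$ all chosen by $\mc{B}_3$. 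The point that makes the public key easy to publish is that, unlike the LW1-based reduction of Lemma \ref{lem:pks2-prime-1}, here $g, u, h$ are \emph{computable} (no $k^{b^2}$ appears), so the randomized public components $g w_1^{c_g}, w_2^{c_g}, w^{c_g}, \ldots$ are formed directly for freshly chosen $c_g, c_u, c_h$; the $\hat{\G}$-side components follow from $\hat{g}^{\nu} = (\hat{k}^a)^{y_g}$, $\hat{g}^{-\tau} = \hat{k}^{-y_g \phi_1}(\hat{k}^a)^{-y_g \phi_2}$, and $\Omega = e(k^a, \hat{k}^b)^{y_g^2}$.

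Next I would answer every signature query with a semi-functional signature, as required in $\tb{G}_2$. Since $g^{\alpha} = k^{a b y_g}$ is not computable, I would exploit the semi-functional term exactly as in Lemma \ref{lem:pks1-prime-3}: choosing fresh $s_k, z'_k$ and implicitly setting $z_k = b y_g / s_k + z'_k$, the factor $(f^{-\nu})^{s_k z_k} = k^{-a(b y_g + s_k z'_k)}$ cancels the $k^{a b y_g}$ coming from $g^{\alpha}$, leaving the computable signature
\begin{align*}
W_{1,1} = (u^M h)^r w_1^{c_1} (k^a)^{-s_k z'_k},~ W_{1,2} = w_2^{c_1}(k^b)^{y_g} k^{s_k z'_k},~ W_{1,3} = w^{c_1}, \\
W_{2,1} = k^{y_g r} w_1^{c_2}(k^a)^{-s_k},~ W_{2,2} = w_2^{c_2} k^{s_k},~ W_{2,3} = w^{c_2}.
\end{align*}
Because $s_k$ and $z'_k$ are fresh and uniform per query, the implicit $z_k$ is uniform, so these are distributed exactly as semi-functional signatures.

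For the forged signature on $M^*$ I would run the semi-functional verification with $t = c$ and cancel the unavailable $\hat{k}^{ac}$ terms (which arise because $\nu = a$ and $\tau = \phi_1 + a\phi_2$ sit inside $V_{1,2}, V_{1,3}, V_{2,2}, V_{2,3}$) by implicitly setting $s_c = -ac y_g + s'_c$ and $z_c = (-ac y_g (x M^* + y) + z'_c)/s_c$ for fresh $s'_c, z'_c$. This makes every verification component computable from $\hat{k}^c$, and the test becomes $\prod_{i=1}^3 e(W^*_{1,i}, V_{1,i}) \cdot \prod_{i=1}^3 e(W^*_{2,i}, V_{2,i})^{-1} \stackrel{?}{=} T^{y_g^2}$, with $\mc{B}_3$ outputting $0$ if it holds and $1$ otherwise. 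When $T = e(k,\hat{k})^{abc}$ we have $T^{y_g^2} = e(g,\hat{g})^{\alpha c} = \Omega^t$, so the test is precisely the real semi-functional check and the view is $\tb{G}_2$; when $T = e(k,\hat{k})^d$ the right-hand side is an independent uniform element of $\G_T$, so the test fails except with probability $1/p$, reproducing the always-reject behavior of $\tb{G}_3$.

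I expect the main obstacle to be confirming that a \emph{single} reused exponent $\nu = a$ suffices for both cancellations simultaneously: it must annihilate $g^{\alpha}$ in the signatures (through the $(f^{-\nu})^{s_k z_k}$ factor) and also generate exactly the $\hat{k}^{ac}$ contributions in the verification components that the chosen $s_c, z_c$ are tuned to cancel, since unlike \tb{PKS1} there are no spare $\nu_1, \nu_3$ to absorb slack. As in Lemma \ref{lem:pks1-prime-3} this reduces to checking that each implicitly set quantity ($z_k$, $s_c$, $z_c$) is rerandomized by an independent fresh exponent and that $y_g \neq 0$ so that $T^{y_g^2}$ stays uniform when $T$ is random; granting these, the distinguishing advantage of $\mc{A}$ between $\tb{G}_2$ and $\tb{G}_3$ equals $\Adv_{\mc{B}_3}^{DBDH}(\lambda)$ up to the negligible $1/p$ term.
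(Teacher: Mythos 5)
Your proposal is correct and follows essentially the same reduction as the paper's proof: the same embedding $\nu = a$, $\tau = \phi_1 + a\phi_2$, $\alpha = ab$, the same semi-functional signatures via $z_k = b y_g/s_k + z'_k$, the same semi-functional verification via $t = c$, $s_c = -ac\,y_g + s'_c$, $z_c = (-ac\,y_g(xM^*+y) + z'_c)/s_c$, and the same final test against $T^{y_g^2}$. The obstacle you flag (a single $\nu=a$ serving both cancellations) is resolved exactly as you anticipate, by the direct distribution checks the paper carries out.
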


\begin{proof}
%
Suppose there exists an adversary $\mc{A}$ that distinguish $\tb{G}_2$ from
$\tb{G}_3$ with non-negligible advantage. A simulator $\mc{B}_3$ that solves
the DBDH assumption using $\mc{A}$ is given: a challenge tuple
    $D = ((p, \G, \hat{\G}, \G_T, e), \lb
    k, k^a, k^b, k^c, \lb \hat{k}, \hat{k}^a, \hat{k}^b, \hat{k}^c)$
    and $T$ where $T = T_0 = e(k, \hat{k})^{abc}$ or $T = T_1 = e(k, \hat{k})^d$.
Then $\mc{B}_3$ that interacts with $\mc{A}$ is described as follows:
$\mc{B}_3$ first chooses random exponents $\phi_1, \phi_2, y_g, x, y \in
\Z_p$ and a random element $w \in \G$. It computes
    $g = k^{y_g}, u = g^{x}, h = g^{y},
    \hat{g} = \hat{k}^{y_g}, \hat{u} = \hat{g}^{x}, \hat{h} = \hat{g}^{y},
    w_1 = w^{\phi_1}, w_2 = w^{\phi_2}$.
It implicitly sets $\nu = a, \tau = \phi_1 + a \phi_2, \alpha = ab$ and
publishes a public key $PK$ by selecting random values $c_g, c_u, c_h \in
\Z_p$ as
    \begin{align*}
    &   g w_1^{c_g}, w_2^{c_g}, w^{c_g},~
        u w_1^{c_u}, w_2^{c_u}, w^{c_u},~
        h w_1^{c_h}, w_2^{c_h}, w^{c_h},~
        w_1, w_2, w,~ \\
    &   \hat{g}, \hat{g}^{\nu} = (\hat{k}^a)^{y_g},
        \hat{g}^{-\tau} = \hat{k}^{-y_g \phi_1} (\hat{k}^a)^{-y_g \phi_2},~
        \hat{u}, \hat{u}^{\nu} = (\hat{g}^{\nu})^x,
        \hat{u}^{-\tau} = (\hat{g}^{-\tau})^x,~ \\
    &   \hat{h}, \hat{h}^{\nu} = (\hat{g}^{\nu})^y,
        \hat{h}^{-\tau} = (\hat{g}^{-\tau})^y,~
        \Omega = e(k^a,\hat{k}^b)^{y_g^2}.
    \end{align*}
Additionally, it sets $f = k, \hat{f} = \hat{k}$ for the semi-functional
signature and semi-functional verification. $\mc{A}$ adaptively requests a
signature for a message $M$. To respond to this query, $\mc{B}_3$ selects
random exponents $r, c_1, c_2, s_k, z'_k \in \Z_p$ and creates a
semi-functional signature by implicitly setting $z_k = b y_g / s_k + z'_k$ as
    \begin{align*}
    &   W_{1,1} = (u^M h)^r w_1^{c_1} (k^a)^{-s_k z'_k},~
        W_{1,2} = w_2^{c_1} (k^b)^{y_g} k^{s_k z'_k},~
        W_{1,3} = w^{c_1}, \\
    &   W_{2,1} = g^r w_1^{c_2} (k^a)^{-s_k},~
        W_{2,2} = w_2^{c_2} k^{s_k},~
        W_{2,3} = w^{c_2}.
    \end{align*}
It can only create a semi-functional signature since $s_k, z_k$ enables the
cancellation of $k^{ab}$. Finally, $\mc{A}$ outputs a forged signature
$\sigma^* = (W_{1,1}^*, \ldots, W_{2,3}^*)$ on a message $M^*$. To verify the
forged signature, $\mc{B}_3$ first chooses random exponents $s_1, s_2, s'_c,
z'_c \in \Z_p$ and computes semi-functional verification components by
implicitly setting
    $t = c,~ s_c = -ac y_g + s'_c,~ z_c = -ac y_g (xM^*+y)/s_c + z'_c/s_c$
as
    \begin{align*}
    &   V_{1,1} = (\hat{k}^c)^{y_g},~
        V_{1,2} = \hat{k}^{s'_c},~
        V_{1,3} = (\hat{k}^c)^{-y_g \phi_1} \hat{k}^{-\phi_2 s'_c}, \\
    &   V_{2,1} = (\hat{k}^c)^{y_g (xM^*+y)},~
        V_{2,2} = \hat{k}^{z'_c},~
        V_{2,3} = (\hat{k}^c)^{-y_g \phi_1 (xM^*+y)} \hat{k}^{-\phi_2 z'_c}.
    \end{align*}
Next, it verifies that $\prod_{i=1}^3 e(W_{1,i}^*, V_{1,i}) \cdot
\prod_{i=1}^3 e(W_{2,i}^*, V_{2,i})^{-1} \stackrel{?}{=} (T)^{y_g^2}.$ If
this equation holds, then it outputs $0$. Otherwise, it outputs $1$.

\vs To finish the proof, we first show that the distribution of the
simulation using $D, T = e(k,\hat{k})^{abc}$ is the same as $\tb{G}_2$. The
public key is correctly distributed since the random values $y_g, x, y, c_g,
c_u, c_h$ are used. The semi-functional signature is correctly distributed as
    \begin{align*}
    W_{1,1} &= g^{\alpha} (u^M h)^r w_1^{c_1} (f^{-\nu})^{s_k z_k}
             = k^{y_g ab} (u^M h)^r w_1^{c_1}
               (k^{-a})^{s_k (b y_g/s_k + z'_k)}
             = (u^M h)^r w_1^{c_1} (k^a)^{-s_k z'_k}.
    \end{align*}
The simulator can only create a semi-functional signature since $z_k = b y_g
/ s_k + z'_k$ enables the cancellation of $k^{ab}$. The semi-functional
verification components are correctly distributed as
    \begin{align*}
    V_{1,1} &= \hat{g}^t = (\hat{k}^{y_g})^c = (\hat{k}^c)^{y_g},~
    V_{1,2}  = (\hat{g}^{\nu})^t \hat{f}^{s_c}
             = (\hat{k}^{y_g a})^c \hat{k}^{-ac y_g + s'_c} = \hat{k}^{s'_c},~ \\
    V_{1,3} &= (\hat{g}^{-\tau})^t (\hat{f}^{-\phi_2})^{s_c}
             = (\hat{k}^{-y_g (\phi_1 + a \phi_2)})^c \hat{k}^{-\phi_2 (-ac y_g + s'_c)}
             = (\hat{k}^c)^{-y_g \phi_1} \hat{k}^{-\phi_2 s'_c}, \db \\
    V_{2,1} &= (\hat{u}^{M^*} \hat{h})^t
             = (\hat{k}^{y_g (xM^*+y)})^c
             = (\hat{k}^c)^{y_g (xM^* + y)}, \\
    V_{2,2} &= (\hat{u}^{\nu M^*} \hat{h}^{\nu})^t \hat{f}^{s_c z_c}
             = (\hat{k}^{y_g a (xM^*+y)})^c \hat{k}^{-ac y_g (xM^*+y) + z'_c}
             = \hat{k}^{z'_c},~ \db \\
    V_{2,3} &= (\hat{u}^{-\tau M^*} \hat{h}^{-\tau})^t (\hat{f}^{-\phi_2})^{s_c z_c}
             = (\hat{k}^{-y_g (\phi_1 + a \phi_2) (xM^*+y)})^c
               (\hat{k}^{-\phi_2})^{-ac y_g (xM^*+y) + z'_c} \\
            &= (\hat{k}^c)^{-y_g \phi_1 (xM^*+y)} \hat{k}^{-\phi_2 z'_c}, \\
    \Omega^t &= e(g,\hat{g})^{\alpha t} = e(k,\hat{k})^{y_g^2 ab c} = (T_0)^{y_g^2} .
    \end{align*}
We next show that the distribution of the simulation using $D, T_1 = e(k,
\hat{k})^d$ is almost the same as $\tb{G}_3$. It is obvious that the
signature verification for the forged signature always fails if $T_1 = e(k,
\hat{k})^d$ is used except with $1/p$ probability since $d$ is a random value
in $\Z_p$.
%
This completes our proof.
\end{proof}

\section{Sequential Aggregate Signature} \label{sec:sas}

In this section, we propose two SAS schemes with short public keys and prove
their security based on that of our PKS schemes.

\subsection{Definitions} \label{sec:sas-def}

The concept of SAS was introduced by Lysyanskaya et al.
\cite{LysyanskayaMRS04}. In SAS, all signers first generate public keys and
private keys, and then publishes their public keys. To generate a sequential
aggregate signature, a signer may receive an aggregate-so-far from a previous
signer, and creates a new aggregate signature by adding his signature to the
aggregate-so-far in sequential order. After that, the signer may send the
aggregate signature to a next signer. A verifier can check the validity of
the aggregate signature by using the pubic keys of all signers in the
aggregate signature. An SAS scheme is formally defined as follows:

\begin{definition}[Sequential Aggregate Signature]
A sequential aggregate signature (SAS) scheme consists of four PPT algorithms
\tb{Setup}, \tb{KeyGen}, \tb{AggSign}, and \tb{AggVerify}, which are defined
as follows:
\begin{description}
\item $\tb{Setup}(1^\lambda)$. The setup algorithm takes as input a
    security parameter $1^\lambda$ and outputs public parameters $PP$.

\item $\tb{KeyGen}(PP)$. The key generation algorithm takes as input the
    public parameters $PP$, and outputs a public key $PK$ and a private key
    $SK$.

\item $\tb{AggSign}(AS', \vect{M}, \vect{PK}, M, SK)$. The aggregate
    signing algorithm takes as input an aggregate-so-far $AS'$ on messages
    $\vect{M} = (M_1, \ldots, M_l)$ under public keys $\vect{PK} = (PK_1,
    \ldots, PK_l)$, a message $M$, and a private key $SK$, and outputs a
    new aggregate signature $AS$.

\item $\tb{AggVerify}(AS, \vect{M}, \vect{PK})$. The aggregate verification
    algorithm takes as input an aggregate signature $AS$ on messages
    $\vect{M} = (M_1, \ldots, M_l)$ under public keys $\vect{PK} = (PK_1,
    \ldots, PK_l)$, and outputs either $1$ or $0$ depending on the validity
    of the sequential aggregate signature.
\end{description}
The correctness requirement is that for each $PP$ output by \tb{Setup}, for
all $(PK,SK)$ output by \tb{KeyGen}, any $M$, we have that $\tb{AggVerify}
(\tb{AggSign} (AS', \vect{M}', \vect{PK}', M, SK), \vect{M}'||M,
\vect{PK}'||PK) = 1$ where $AS'$ is a valid aggregate-so-far signature on
messages $\vect{M}'$ under public keys $\vect{PK}'$.
\end{definition}

A trivial SAS scheme can be constructed from a PKS scheme by concatenating
each signer's signature in sequential order, but the size of aggregate
signature is proportional to the size of signers. Therefore, a non-trivial
SAS scheme should satisfy the signature compactness property that requires
the size of aggregate signature to be independent of the size of signers.

The security model of SAS was defined by Lysyanskaya et al.
\cite{LysyanskayaMRS04}, but we follow the security model of Lu et al.
\cite{LuOSSW06} that requires for an adversary to register the key-pairs of
other signers except the target signer, namely the knowledge of secret key
(KOSK) setting or the proof of knowledge (POK) setting. In this security
model, an adversary first given the public key of a target signer. After
that, the adversary adaptively requests a certification for a public key by
registering the key-pair of other signer, and he adaptively requests a
sequential aggregate signature by providing a previous aggregate signature to
the signing oracle. Finally, the adversary outputs a forged sequential
aggregate signature on messages under public keys. If the forged sequential
signature satisfies the conditions of the security model, then the adversary
wins the security game. The security model of SAS is formally defined as
follows:

\begin{definition}[Security]
The security notion of existential unforgeability under a chosen message
attack is defined in terms of the following experiment between a challenger
$\mc{C}$ and a PPT adversary $\mc{A}$:
\begin{enumerate}
\item \tb{Setup}: $\mc{C}$ first initializes a certification list $CL$ as
    empty. Next, it runs \tb{Setup} to obtain public parameters $PP$ and
    \tb{KeyGen} to obtain a key pair $(PK,SK)$, and gives $PK$ to $\mc{A}$.

\item \tb{Certification Query}: $\mc{A}$ adaptively requests the
    certification of a public key by providing a key pair $(PK,SK)$. Then
    $\mc{C}$ adds the key pair $(PK,SK)$ to $CL$ if the key pair is a valid
    one.

\item \tb{Signature Query}: $\mc{A}$ adaptively requests a sequential
    aggregate signature (by providing an aggregate-so-far $AS'$ on messages
    $\vect{M}'$ under public keys $\vect{PK}'$), on a message $M$ to sign
    under the challenge public key $PK$, and receives a sequential
    aggregate signature $AS$.

\item \tb{Output}: Finally (after a sequence of the above queries),
    $\mc{A}$ outputs a forged sequential aggregate signature $AS^*$ on
    messages $\vect{M}^*$ under public keys $\vect{PK}^*$. $\mc{C}$ outputs
    $1$ if the forged signature satisfies the following three conditions,
    or outputs $0$ otherwise: 1) $\tb{AggVerify}(AS^*, \vect{M}^*,
    \vect{PK}^*) = 1$, 2) The challenge public key $PK$ must exists in
    $\vect{PK}^*$ and each public key in $\vect{PK}^*$ except the challenge
    public key must be in $CL$, and 3) The corresponding message $M$ in
    $\vect{M}^*$ of the challenge public key $PK$ must not have been
    queried by $\mc{A}$ to the sequential aggregate signing oracle.
\end{enumerate}
The advantage of $\mc{A}$ is defined as $\Adv_{\mc{A}}^{SAS}(\lambda) = \Pr
[\mc{C} = 1]$ where the probability is taken over all the randomness of the
experiment. An SAS scheme is existentially unforgeable under a chosen message
attack if all PPT adversaries have at most a negligible advantage in the
above experiment.
\end{definition}

\subsection{Construction}

To construct an SAS scheme from a PKS scheme, the PKS scheme should support
multi-users by sharing some elements among all signers and the randomness of
signatures should be sequentially aggregated to a single value. We can employ
the randomness reuse technique of Lu et al. \cite{LuOSSW06} to aggregate the
randomness of signatures. To apply the randomness reuse technique, we should
re-randomize the aggregate signature to prevent a forgery attack. Thus we
build on the PKS schemes of the previous section that support multi-users and
public re-randomization to construct SAS schemes.

\subsubsection{Our SAS1 Scheme}

Our first SAS scheme in prime order bilinear groups is described as follows:

\begin{description}
\item [\textbf{SAS1.Setup}($1^\lambda$):] This algorithm first generates
    the asymmetric bilinear groups $\G, \hat{\G}$ of prime order $p$ of bit
    size $\Theta(\lambda)$. It chooses random elements $g, w \in \G$ and
    $\hat{g}, \hat{v} \in \hat{\G}$. Next, it chooses random exponents
    $\nu_1, \nu_2, \nu_3, \phi_1, \phi_2, \phi_3 \in \Z_p$ and sets $\tau =
    \phi_1 + \nu_1 \phi_2 + \nu_2 \phi_3, \pi = \phi_2 + \nu_3 \phi_3$. It
    also sets $w_1 = w^{\phi_1}, w_2 = w^{\phi_2}, w_3 = w^{\phi_3}$. It
    publishes public parameters as
    \begin{align*}
    PP = \Big(~ (p, \G, \hat{\G}, \G_T, e),~
        g,~ w_1, w_2, w_3, w,~
        \hat{g}, \hat{g}^{\nu_1}, \hat{g}^{\nu_2}, \hat{g}^{-\tau},~
        \hat{v}, \hat{v}^{\nu_3}, \hat{v}^{-\pi}
    ~\Big).
    \end{align*}

\item [\textbf{SAS1.KeyGen}($PP$):] This algorithm takes as input the
    public parameters $PP$. It selects random exponents $\alpha, x, y \in
    \Z_p$ and computes $u = g^x, h = g^y, \hat{u} = \hat{g}^x,
    \hat{u}^{\nu_1} = (\hat{g}^{\nu_1})^x, \hat{u}^{\nu_2} =
    (\hat{g}^{\nu_2})^x, \hat{u}^{-\tau} = (\hat{g}^{-\tau})^x, \hat{h} =
    \hat{g}^y, \hat{h}^{\nu_1} = (\hat{g}^{\nu_1})^y, \hat{h}^{\nu_2} =
    (\hat{g}^{\nu_2})^y, \hat{h}^{-\tau} = (\hat{g}^{-\tau})^y$. It outputs
    a private key $SK = (\alpha, x, y)$ and a public key as
    \begin{align*}
    PK = \Big(~
        u, h,~
        \hat{u}, \hat{u}^{\nu_1}, \hat{u}^{\nu_2}, \hat{u}^{-\tau},~
        \hat{h}, \hat{h}^{\nu_1}, \hat{h}^{\nu_2}, \hat{h}^{-\tau},~
        \Omega = e(g, \hat{g})^{\alpha}
    ~\Big).
    \end{align*}

\item [\textbf{SAS1.AggSign}($AS', \vect{M}', \vect{PK}', M, SK$):] This
    algorithm takes as input an aggregate-so-far $AS' = (S'_{1,1}, \ldots,
    S'_{2,4})$ on messages $\vect{M}' = (M_1, \ldots, M_{l-1})$ under
    public keys $\vect{PK}' = (PK_1, \ldots, PK_{l-1})$ where $PK_i = (u_i,
    h_i, \ldots, \Omega_i)$, a message $M \in \bits^k$ where $k < \lambda$,
    a private key $SK = (\alpha, x, y)$ with $PK = (u, h, \ldots, \Omega)$
    and $PP$. It first checks the validity of $AS'$ by calling
    $\textbf{AggVerify} (AS', \vect{M}', \vect{PK}')$. If $AS'$ is not
    valid, then it halts. If the public key $PK$ of $SK$ does already exist
    in $\vect{PK}'$, then it halts. Next, it creates temporal aggregate
    components by using the randomness of the previous aggregate-so-far as
    \begin{align*}
    &   T_{1,1} = S'_{1,1} \cdot g^{\alpha} (S'_{2,1})^{xM + y},~
        T_{1,2} = S'_{1,2} \cdot (S'_{2,2})^{xM + y},~
        T_{1,3} = S'_{1,3} \cdot (S'_{2,3})^{xM + y},~ \\
    &   T_{1,4} = S'_{1,4} \cdot (S'_{2,4})^{xM + y},~
        T_{2,1} = S'_{2,1},~
        T_{2,2} = S'_{2,2},~
        T_{2,3} = S'_{2,3},~
        T_{2,4} = S'_{2,4}.
    \end{align*}
    Finally, it selects random exponents $r, c_1, c_2 \in \Z_p$ for
    re-randomization and outputs an aggregate signature as
    \begin{align*}
    AS = \Big(~
    &   S_{1,1} = T_{1,1} \cdot \prod_{i=1}^{l-1} (u_i^{M_i} h_i)^r (u^M h)^r
                  w_1^{c_1},
        S_{1,2} = T_{1,2} \cdot w_2^{c_1},
        S_{1,3} = T_{1,3} \cdot w_3^{c_1},
        S_{1,4} = T_{1,4} \cdot w^{c_1},~ \\
    &   S_{2,1} = T_{2,1} \cdot g^r w_1^{c_2},
        S_{2,2} = T_{2,2} \cdot w_2^{c_2},
        S_{2,3} = T_{2,3} \cdot w_3^{c_2},
        S_{2,4} = T_{2,4} \cdot w^{c_2}
    ~\Big).
    \end{align*}

\item [\textbf{SAS1.AggVerify}($AS, \vect{M}, \vect{PK}$):] This algorithm
    takes as input a sequential aggregate signature $AS$ on messages
    $\vect{M} = (M_1, \ldots, M_l)$ under public keys $\vect{PK} = (PK_1,
    \ldots, PK_l)$ where $PK_i = (u_i, h_i, \ldots, \Omega_i)$. It first
    checks that any public key does not appear twice in $\vect{PK}$ and
    that any public key in $\vect{PK}$ has been certified. If these checks
    fail, then it outputs 0. If $l=0$, then it outputs 1 if $S_1 = S_2 =
    1$, 0 otherwise. It chooses random exponents $t, s_1, s_2 \in \Z_p$ and
    computes verification components as
    \begin{align*}
    &   C_{1,1} = \hat{g}^t,~
        C_{1,2} = (\hat{g}^{\nu_1})^t \hat{v}^{s_1},~
        C_{1,3} = (\hat{g}^{\nu_2})^t (\hat{v}^{\nu_3})^{s_1},~
        C_{1,4} = (\hat{g}^{-\tau})^t (\hat{v}^{-\pi})^{s_1}, \\
    &   C_{2,1} = \prod_{i=1}^l (\hat{u}_i^{M_i} \hat{h}_i)^t,~
        C_{2,2} = \prod_{i=1}^l ((\hat{u}_i^{\nu_1})^{M_i} \hat{h}_i^{\nu_1})^t
                  \hat{v}^{s_2},~
        C_{2,3} = \prod_{i=1}^l ((\hat{u}_i^{\nu_2})^{M_i} \hat{h}_i^{\nu_2})^t
                  (\hat{v}^{\nu_3})^{s_2},~ \\
    &   C_{2,4} = \prod_{i=1}^l ((\hat{u}_i^{-\tau})^{M_i} \hat{h}_i^{-\tau})^t
                  (\hat{v}^{-\pi})^{s_2}.
    \end{align*}
    Next, it verifies that $\prod_{i=1}^4 e(S_{1,i}, C_{1,i}) \cdot
    \prod_{i=1}^4 e(S_{2,i}, C_{2,i})^{-1} \stackrel{?}{=} \prod_{i=1}^l
    \Omega_i^t$. If this equation holds, then it outputs $1$. Otherwise, it
    outputs $0$.
\end{description}

The aggregate signature $AS$ is a valid sequential aggregate signature on
messages $\vect{M}'||M$ under public keys $\vect{PK}'||PK$ with randomness
    $\tilde{r} = r' + r,~
    \tilde{c}_1 = c'_1 + c'_2(xM + y) + c_1,~
    \tilde{c}_2 = c'_2 + c_2$
where $r', c'_1, c'_2$ are random values in $AS'$. The sequential aggregate
signature has the following form
    \begin{align*}
    &   S_{1,1} = \prod_{i=1}^l g^{\alpha_i} \prod_{i=1}^l (u_i^{M_i} h_i)^{\tilde{r}}
                  w_1^{\tilde{c}_1},~
        S_{1,2} = w_2^{\tilde{c}_1},~
        S_{1,3} = w_3^{\tilde{c}_1},~
        S_{1,4} = w^{\tilde{c}_1}, \\
    &   S_{2,1} = g^{\tilde{r}} w_1^{\tilde{c}_2},~
        S_{2,2} = w_2^{\tilde{c}_2},~
        S_{2,3} = w_3^{\tilde{c}_2},~
        S_{2,4} = w^{\tilde{c}_2}.
    \end{align*}

\subsubsection{Our SAS2 Scheme}

Our second SAS scheme in prime order bilinear groups is described as follows:

\begin{description}
\item [\tb{SAS2.Setup}($1^\lambda$):] This algorithm first generates the
    asymmetric bilinear groups $\G, \hat{\G}$ of prime order $p$ of bit
    size $\Theta(\lambda)$. It chooses random elements $g, w \in \G$ and
    $\hat{g} \in \hat{\G}$. Next, it selects random exponents $\nu, \phi_1,
    \phi_2 \in \Z_p$ and sets $\tau = \phi_1 + \nu \phi_2$, $w_1 =
    w^{\phi_1}, w_2 = w^{\phi_2}$. It publishes public parameters by
    selecting a random value $c_g \in \Z_p$ as
    \begin{align*}
    PP = \Big(~ (p, \G, \hat{\G}, \G_T, e),~
        g w_1^{c_g}, w_2^{c_g}, w^{c_g},~ w_1, w_2, w,~
        \hat{g}, \hat{g}^{\nu}, \hat{g}^{-\tau},~
        \Lambda = e(g, \hat{g})
    ~\Big).
    \end{align*}

\item [\tb{SAS2.KeyGen}($PP$):] This algorithm takes as input the public
    parameters $PP$. It selects random exponents $\alpha, x, y \in \Z_p$
    and sets $\hat{u} = \hat{g}^x, \hat{h} = \hat{g}^y$. It outputs a
    private key $SK = (\alpha, x, y)$ and a public key by selecting random
    values $c'_u, c'_h \in \Z_p$ as
    \begin{align*}
    PK = \Big(~
    &   u w_1^{c_u} = (g w_1^{c_g})^x w_1^{c'_u},
        w_2^{c_u}   = (w_2^{c_g})^x w_2^{c'_u},
        w^{c_u}     = (w^{c_g})^x w_2^{c'_u},~ \\
    &   h w_1^{c_h} = (g w_1^{c_g})^y w_1^{c'_u},
        w_2^{c_h}   = (w_2^{c_g})^y w_2^{c'_u},
        w^{c_h}     = (w^{c_g})^y w_2^{c'_u},~ \\
    &   \hat{u}, \hat{u}^{\nu} = (\hat{g}^{\nu})^x,
        \hat{u}^{-\tau} = (\hat{g}^{-\tau})^x,~
        \hat{h}, \hat{h}^{\nu} = (\hat{g}^{\nu})^y,
        \hat{h}^{-\tau} = (\hat{g}^{-\tau})^y,~
        \Omega = \Lambda^{\alpha}
    ~\Big).
    \end{align*}

\item [\tb{SAS2.AggSign}($AS', \vect{M}', \vect{PK}', M, SK$):] This
    algorithm takes as input an aggregate-so-far $AS' = (S'_{1,1}, \ldots,
    S'_{2,3})$ on messages $\vect{M}' = (M_1, \ldots, M_{l-1})$ under
    public keys $\vect{PK}' = (PK_1, \ldots, PK_{l-1})$ where $PK_i = (u_i
    w_1^{c_{u,i}}, \ldots, \Omega_i)$, a message $M \in \Z_p$, a private
    key $SK = (\alpha, x, y)$ with $PK = (u w_1^{c_u}, \ldots, \Omega)$ and
    $PP$. It first checks the validity of $AS'$ by calling
    $\tb{SAS.AggVerify}(AS', \vect{M}', \vect{PK}')$. If $AS'$ is not
    valid, then it halts. If the public key $PK$ of $SK$ does already exist
    in $\vect{PK}'$, then it halts. Next, it creates temporal aggregate
    components by using the randomness of the previous aggregate-so-far as
    \begin{align*}
    &   T_{1,1} = S'_{1,1} (g w_1^{c_g})^{\alpha} (S'_{2,1})^{xM + y},~
        T_{1,2} = S'_{1,2} (w_2^{c_g})^{\alpha} (S'_{2,2})^{xM + y},~
        T_{1,3} = S'_{1,3} (w^{c_g})^{\alpha} (S'_{2,3})^{xM + y},~ \\
    &   T_{2,1} = S'_{2,1},~
        T_{2,2} = S'_{2,2},~
        T_{2,3} = S'_{2,3}.
    \end{align*}
    Finally it selects random exponents $r, c_1, c_2 \in \Z_p$ for
    re-randomization and outputs an aggregate signature as
    \begin{align*}
    AS = \Big(~
    &   S_{1,1} = T_{1,1} \cdot \prod_{i=1}^{l}
                  ((u_i w_1^{c_{u,i}})^{M_i} (h_i w_1^{c_{h,i}}))^r w_1^{c_1},~ \\
    &   S_{1,2} = T_{1,2} \cdot \prod_{i=1}^{l}
                  ((w_2^{c_{u,i}})^{M_i} (w_2^{c_{h,i}}))^r w_2^{c_1},~
        S_{1,3} = T_{1,3} \cdot \prod_{i=1}^{l}
                  ((w^{c_{u,i}})^{M_i} (w^{c_{h,i}}))^r w^{c_1},~ \\
    &   S_{2,1} = T_{2,1} \cdot (g w_1^{c_g})^r w_1^{c_2},~
        S_{2,2} = T_{2,2} \cdot (w_2^{c_g})^r w_2^{c_2},~
        S_{2,3} = T_{2,3} \cdot (w^{c_g})^r w^{c_2}
    ~\Big).
    \end{align*}

\item [\tb{SAS2.AggVerify}($AS, \vect{M}, \vect{PK}$):] This algorithm
    takes as input a sequential aggregate signature $AS$ on messages
    $\vect{M} = (M_1, \ldots, M_l)$ under public keys $\vect{PK} = (PK_1,
    \ldots, PK_l)$ where $PK_i = (u_i w_1^{c_{u,i}}, \ldots, \Omega_i)$. It
    first checks that any public key does not appear twice in $\vect{PK}$
    and that any public key in $\vect{PK}$ has been certified. If these
    checks fail, then it outputs 0. If $l=0$, then it outputs 1 if $S_{1,1}
    = \cdots = S_{2,3} = 1$, 0 otherwise. It chooses a random exponent $t
    \in \Z_p$ and computes verification components as
    \begin{align*}
    &   C_{1,1} = \hat{g}^t,~
        C_{1,2} = (\hat{g}^{\nu})^t,~
        C_{1,3} = (\hat{g}^{-\tau})^t, \\
    &   C_{2,1} = \prod_{i=1}^l (\hat{u}_i^{M_i} \hat{h}_i)^t,~
        C_{2,2} = \prod_{i=1}^l ((\hat{u}_i^{\nu})^{M_i} \hat{h}_i^{\nu})^t,~
        C_{2,3} = \prod_{i=1}^l ((\hat{u}_i^{-\tau})^{M_i} \hat{h}_i^{-\tau})^t.
    \end{align*}
    Next, it verifies that $\prod_{i=1}^3 e(S_{1,i}, C_{1,i}) \cdot
    \prod_{i=1}^3 e(S_{2,i}, C_{2,i})^{-1} \stackrel{?}{=} \prod_{i=1}^l
    \Omega_i^t$. If this equation holds, then it outputs $1$. Otherwise, it
    outputs $0$.
\end{description}

Let $r', c'_1, c'_2$ be the randomness of an aggregate-so-far. If we
implicitly sets $\tilde{r} = r' + r,~ \tilde{c}_1 = c'_1 + c_g \alpha_l +
\sum_{i=1}^l (c_{u,i} M_i + c_{h,i}) r + c_1,~ \tilde{c}_2 = c'_2 + c_g r +
c_2$, then the aggregate signature is correctly distributed as
    \begin{align*}
    &   S_{1,1} = \prod_{i=1}^l g^{\alpha_i} \prod_{i=1}^l (u_i^{M_i} h_i)^{\tilde{r}}
                  w_1^{\tilde{c}_1},~
        S_{1,2} = w_2^{\tilde{c}_1},~
        S_{1,3} = w^{\tilde{c}_1}, \\
    &   S_{2,1} = g^{\tilde{r}} w_1^{\tilde{c}_2},~
        S_{2,2} = w_2^{\tilde{c}_2},~
        S_{2,3} = w^{\tilde{c}_2}.
    \end{align*}

\subsection{Security Analysis}

\begin{theorem} \label{thm:sas1-prime}
The above \tb{SAS1} scheme is existentially unforgeable under a chosen
message attack if the \tb{PKS1} scheme is existentially unforgeable under a
chosen message attack. That is, for any PPT adversary $\mc{A}$ for the above
\tb{SAS1} scheme, there exists a PPT algorithm $\mc{B}$ for the \tb{PKS1}
scheme such that
    $\Adv_{\mc{A}}^{SAS}(\lambda) \leq \Adv_{\mc{B}}^{PKS}(\lambda)$.
\end{theorem}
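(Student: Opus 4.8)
The goal is to reduce the security of the SAS1 scheme to the security of the underlying PKS1 scheme, showing that any forger against SAS1 yields a forger against PKS1 with the same advantage.

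The plan is to build a PKS1-forger $\mc{B}$ that runs the SAS1-forger $\mc{A}$ as a subroutine and turns its sequential-aggregate forgery into an ordinary PKS1 forgery on the challenge key. First I would have $\mc{B}$ take its PKS1 challenge public key and split it along the natural seam: the shared components $g, w_1, w_2, w_3, w, \hat{g}, \hat{g}^{\nu_1}, \hat{g}^{\nu_2}, \hat{g}^{-\tau}, \hat{v}, \hat{v}^{\nu_3}, \hat{v}^{-\pi}$ become the SAS1 public parameters $PP$, while the signer-specific components $u, h, \hat{u}, \ldots, \Omega$ become the challenge key $PK^*$ handed to $\mc{A}$. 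Because the two key formats line up exactly, this parsing is immediate. Certification queries are trivial in the KOSK model: when $\mc{A}$ submits a pair $(PK_i, SK_i)$, $\mc{B}$ checks validity, records $SK_i = (\alpha_i, x_i, y_i)$, and adds $PK_i$ to $CL$; thus $\mc{B}$ knows every signer's secret key except the challenge one.

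The delicate part is the signing simulation, since $\mc{B}$ cannot run \textbf{AggSign} directly: the challenge step needs the unknown $x, y$. Instead, on a query $(AS', \vect{M}', \vect{PK}', M)$, after verifying $AS'$, $\mc{B}$ obtains the challenge signer's contribution from its own PKS1 oracle, receiving a fresh signature $\sigma = (W_{1,1}, \ldots, W_{2,4})$ carrying internal randomness $\hat{r}$. It then rebuilds the entire aggregate on top of $\sigma$, folding in each known-key signer via the randomness-reuse identity $(W_{2,1})^{x_i M_i + y_i} = (u_i^{M_i} h_i)^{\hat{r}} w_1^{\hat{c}_2 (x_i M_i + y_i)}$ together with the matching updates to the other first-row components, so that all signers share the single exponent $\hat{r}$. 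Finally $\mc{B}$ re-randomizes with fresh $r, c_1, c_2$, computing $\prod_i (u_i^{M_i} h_i)^r$ from the public keys alone; after re-randomization the effective randomness is uniform, so the simulated aggregate is distributed exactly as a real one.

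When $\mc{A}$ outputs a forgery $AS^*$ on $\vect{M}^*$ under $\vect{PK}^*$, the winning conditions guarantee that $PK^*$ occurs in $\vect{PK}^*$, every other key is certified (so $\mc{B}$ holds its secret key), and the message $M^*$ paired with $PK^*$ was never signed. To finish, $\mc{B}$ strips the non-challenge signers out of $AS^*$: dividing $S^*_{1,1}$ by $g^{\alpha_i}$ and by $(S^*_{2,1})^{x_i M_i + y_i}$, and the companion components $S^*_{1,j}$ by $(S^*_{2,j})^{x_i M_i + y_i}$ for $j \in \{2, 3, 4\}$, cancels exactly signer $i$'s contribution at the cost of extra $w_1, w_2, w_3, w$ powers that stay consistent across the four first-row components. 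What remains is precisely a PKS1 signature $g^{\alpha} (u^{M^*} h)^{\tilde{r}} w_1^{c_1'}, w_2^{c_1'}, \ldots$ on $M^*$ under the challenge key, with $c_1' = \tilde{c}_1 - \tilde{c}_2 \sum_{i \neq *} (x_i M_i + y_i)$. Since $\mc{B}$ only ever queried its PKS1 oracle on messages $\mc{A}$ submitted under $PK^*$, the message $M^*$ is fresh, so this is a valid PKS1 forgery, yielding the tight bound $\Adv_{\mc{A}}^{SAS}(\lambda) \leq \Adv_{\mc{B}}^{PKS}(\lambda)$.

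I expect the main obstacle to be the bidirectional algebraic bookkeeping: verifying both that the folding-plus-re-randomization in the signing simulation yields the exact aggregate distribution, and that the stripping in the extraction leaves a genuinely well-formed PKS1 signature. Both directions hinge on the fact that the non-challenge signers' public keys $u_i, h_i$ and their known exponents $x_i, y_i, \alpha_i$ interact cleanly with the shared $w$-components, so that the challenge signer's randomness is manipulated only implicitly through $\sigma$ and never needs to be known in the clear.
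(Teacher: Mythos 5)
Your proposal is correct and follows essentially the same reduction as the paper: split the PKS1 challenge key into $PP$ and $PK^*$, answer signing queries by obtaining the challenge signer's signature from the PKS1 oracle and then folding in the certified signers' contributions via their known keys (which is exactly what repeated calls to \textbf{SAS1.AggSign} accomplish, order of aggregation being immaterial), and extract a PKS1 forgery by stripping the non-challenge signers out of $AS^*$ using the exponents $x_i M_i + y_i$ and $\alpha_i$ from the certification list. The only cosmetic difference is that you verify the extracted signature by explicit algebra on the exponents while the paper checks it through the pairing-product verification equation.
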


\begin{proof}
Our overall proof strategy for this part follows Lu et al. \cite{LuOSSW06}
and adapts it to our setting. The proof uses two properties: the fact that
the aggregated signature result is independent of the order of aggregation,
and the fact that the simulator of the SAS system possesses the private keys
of all but the target PKS.

Suppose there exists an adversary $\mc{A}$ that forges the above \tb{SAS1}
scheme with non-negligible advantage $\epsilon$. A simulator $\mc{B}$ that
forges the \tb{PKS1} scheme is first given: a challenge public key
    $PK_{PKS} = (
    (p, \G, \hat{\G}, \G_T, e),
    g, u, h, w_1, \ldots, w, \lb
    \hat{g}, \ldots, \hat{g}^{-\tau},
    \hat{u}, \ldots, \hat{u}^{-\tau}, \lb
    \hat{h}, \ldots, \hat{h}^{-\tau},
    \hat{v}, \hat{v}^{\nu_3}, \hat{v}^{-\pi}, \Omega )$.
Then $\mc{B}$ that interacts with $\mc{A}$ is described as follows:
$\mc{B}$ first constructs
    $PP = ( (p, \G, \hat{\G}, \G_T, e),
    g, w_1, \ldots, w, \hat{g}, \ldots, \hat{g}^{-\tau},
    \hat{v}, \hat{v}^{\nu_3}, \hat{v}^{-\pi} )$
and
    $PK^* = ( u, h,
    \hat{u}, \ldots, \hat{u}^{-\tau}, \hat{h}, \ldots, \hat{h}^{-\tau},
    \Omega = e(g, \hat{g})^{\alpha} )$
from $PK_{PKS}$. Next, it initializes a certification list $CL$ as an empty
one and gives $PP$ and $PK^*$ to $\mc{A}$.
$\mc{A}$ may adaptively requests certification queries or sequential
aggregate signature queries. If $\mc{A}$ requests the certification of a
public key by providing a public key $PK_i = (u_i, h_i, \ldots, \Omega_i)$
and its private key $SK_i = (\alpha_i, x_i, y_i)$, then $\mc{B}$ checks the
private key and adds the key pair $(PK_i, SK_i)$ to $CL$.
If $\mc{A}$ requests a sequential aggregate signature by providing an
aggregate-so-far $AS'$ on messages $\vect{M}' = (M_1, \ldots, M_{l-1})$ under
public keys $\vect{PK}' = (PK_1, \ldots, PK_{l-1})$, and a message $M$ to
sign under the challenge private key of $PK^*$, then $\mc{B}$ proceeds the
aggregate signature query as follows:
\begin{enumerate}
\item It first checks that the signature $AS'$ is valid and that each
    public key in $\vect{PK}'$ exits in $CL$.

\item It queries its signing oracle that simulates $\textbf{PKS1.Sign}$ on
    the message $M$ for the challenge public key $PK^*$ and obtains a
    signature $\sigma$.

\item For each $1\leq i\leq l-1$, it constructs an aggregate signature on
    message $M_i$ using $\textbf{SAS1.AggSign}$ since it knows the private
    key that corresponds to $PK_i$. The result signature is an aggregate
    signature for messages $\vect{M}' || M$ under public keys $\vect{PK}'
    || PK^*$ since this scheme does not check the order of aggregation. It
    gives the result signature $AS$ to $\mc{A}$.
\end{enumerate}
Finally, $\mc{A}$ outputs a forged aggregate signature $AS^* = (S_{1,1}^*,
\ldots, S_{2,4}^*)$ on messages $\vect{M}^* = (M_1, \ldots, M_{l})$ under
public keys $\vect{PK}^* = (PK_1, \ldots, PK_l)$ for some $l$. Without loss
of generality, we assume that $PK_1 = PK^*$. $\mc{B}$ proceeds as follows:
\begin{enumerate}
\item $\mc{B}$ first checks the validity of $AS^*$ by calling
    $\textbf{SAS1.AggVerify}$. Additionally, the forged signature should
    not be trivial: the challenge public key $PK^*$ must be in
    $\vect{PK}^*$, and the message $M_1$ must not be queried by $\mc{A}$ to
    the signature query oracle.

\item For each $2 \leq i \leq l$, it parses $PK_i = (u_i, h_i, \ldots,
    \Omega_i)$ from $\vect{PK}^*$, and it retrieves the private key $SK_i =
    (\alpha_i, x_i, y_i)$ of $PK_i$ from $CL$. It then computes
    \begin{align*}
    &   W_{1,1} = S_{1,1}^* \cdot \prod_{i=2}^l
          \big( g^{\alpha_j} (S_{2,1}^*)^{x_i M_i + y_i} \big)^{-1},~
        W_{1,2} = S_{1,2}^* \cdot \prod_{i=2}^l
          \big( (S_{2,2}^*)^{x_i M_i + y_i} \big)^{-1},~
        \displaybreak[0] \\
    &   W_{1,3} = S_{1,3}^* \cdot \prod_{i=2}^l
          \big( (S_{2,3}^*)^{x_i M_i + y_i} \big)^{-1},~
        W_{1,4} = S_{1,4}^* \cdot \prod_{i=2}^l
          \big( (S_{2,4}^*)^{x_i M_i + y_i} \big)^{-1},~ \\
    &   W_{2,1} = S_{2,1}^*,~
        W_{2,2} = S_{2,2}^*,~
        W_{2,3} = S_{2,3}^*,~
        W_{2,4} = S_{2,4}^*.
    \end{align*}

\item It outputs $\sigma = (W_{1,1}, \ldots, W_{2,4})$ as a non-trivial
    forgery of the PKS scheme since it did not make a signing query on
    $M_1$.
\end{enumerate}

To finish the proof, we first show that the distribution of the simulation is
correct. It is obvious that the public parameters and the public key are
correctly distributed. The sequential aggregate signatures is correctly
distributed since this scheme does not check the order of aggregation.
Finally, we can show that the result signature $\sigma = (W_{1,1}, \ldots,
W_{2,4})$ of the simulator is a valid signature for the \tb{PKS1} scheme on
the message $M_1$ under the public key $PK^*$ since it satisfies the
following equation:
    \begin{align*}
    &   \prod_{i=1}^4 e(W_{1,i}, V_{1,i}) \cdot
        \prod_{i=1}^4 e(W_{2,i}, V_{2,i})^{-1} \\
    &=  e(S_{1,1}^*, \hat{g}^t) \cdot
        e(S_{1,2}^*, \hat{g}^{\nu_1 t} \hat{v}^{s_1}) \cdot
        e(S_{1,3}^*, \hat{g}^{\nu_2 t} \hat{v}^{\nu_3 s_1}) \cdot
        e(S_{1,4}^*, \hat{g}^{-\tau t} \hat{v}^{-\pi s_1}) \cdot
        e(\prod_{i=2}^l g^{\alpha_i}, \hat{g}^t)^{-1} \cdot \\
    &~~~~~
        e(S_{2,1}^*, \prod_{i=2}^l (\hat{u}_i^{M_i} \hat{h}_i)^t)^{-1} \cdot
        e(S_{2,2}^*, \prod_{i=2}^l (\hat{u}_i^{M_i} \hat{h}_i)^{\nu_1 t}
            \hat{v}^{\delta_i s_1})^{-1} \cdot
        e(S_{2,3}^*, \prod_{i=2}^l (\hat{u}_i^{M_i} \hat{h}_i)^{\nu_2 t}
            \hat{v}^{\delta_i s_1})^{-1} \cdot \db \\
    &~~~~~
        e(S_{2,4}^*, \prod_{i=2}^l (\hat{u}_i^{M_i} \hat{h}_i)^{-\tau t}
            \hat{v}^{-\pi \delta_i s_1})^{-1} \cdot
        e(S_{2,1}^*, (\hat{u}^{M_1} \hat{h})^t)^{-1} \cdot
        e(S_{2,2}^*, (\hat{u}^{M_1} \hat{h})^{\nu_1 t} \hat{v}^{s_2})^{-1} \cdot
        \\
    &~~~~~
        e(S_{2,3}^*, (\hat{u}^{M_1} \hat{h})^{\nu_2 t} \hat{v}^{\nu_3 s_2})^{-1} \cdot
        e(S_{2,4}^*, (\hat{u}^{M_1} \hat{h})^{-\tau t} \hat{v}^{-\pi s_2})^{-1} \\
    &=  e(S_{1,1}^*, C_{1,1}) \cdot e(S_{1,2}^*, C_{1,2}) \cdot
        e(S_{1,3}^*, C_{1,3}) \cdot e(S_{1,4}^*, C_{1,4}) \cdot
        e(\prod_{i=2}^l g^{\alpha_i}, \hat{g}^t)^{-1} \cdot \db \\
    &~~~~~
        e(S_{2,1}^*, \prod_{i=1}^l (\hat{u}_i^{M_i} \hat{h}_i)^t)^{-1} \cdot
        e(S_{2,2}^*, \prod_{i=1}^l (\hat{u}_i^{M_i} \hat{h}_i)^{\nu_1 t}
            \hat{v}^{\tilde{s}_2})^{-1} \cdot
        e(S_{2,3}^*, \prod_{i=1}^l (\hat{u}_i^{M_i} \hat{h}_i)^{\nu_2 t}
            \hat{v}^{\tilde{s}_2})^{-1} \cdot \db \\
    &~~~~~
        e(S_{2,4}^*, \prod_{i=1}^l (\hat{u}_i^{M_i} \hat{h}_i)^{-\tau t}
            \hat{v}^{-\pi \tilde{s}_2})^{-1} \db \\
    &=  \prod_{i=1}^4 e(S_{1,i}^*, C_{1,i}) \cdot
        \prod_{i=1}^4 e(S_{2,i}^*, C_{2,i})^{-1} \cdot
        e(\prod_{i=2}^l g^{\alpha_i}, \hat{g}^t)^{-1}
     =  \prod_{i=1}^l \Omega_i^t \cdot \prod_{i=2}^l \Omega_i^{-t}
     =  \Omega_1^t
    \end{align*}
where $\delta_i = x_i M_i + y_i$ and $\tilde{s}_2 = \sum_{i=2}^l (x_i M_i +
y_i) s_1 + s_2$. This completes our proof.
\end{proof}

\begin{theorem} \label{thm:sas2-prime}
The above \tb{SAS2} scheme is existentially unforgeable under a chosen
message attack if the \tb{PKS2} scheme is existentially unforgeable under a
chosen message attack. That is, for any PPT adversary $\mc{A}$ for the above
\tb{SAS2} scheme, there exists a PPT algorithm $\mc{B}$ for the \tb{PKS2}
scheme such that
    $\Adv_{\mc{A}}^{SAS}(\lambda) \leq \Adv_{\mc{B}}^{PKS}(\lambda)$.
\end{theorem}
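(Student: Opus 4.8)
The plan is to mirror the reduction of Theorem \ref{thm:sas1-prime}, exploiting the two structural facts that the aggregate produced by \tb{SAS2.AggSign} is independent of the order in which signers are processed, and that in the KOSK model the simulator $\mc{B}$ learns the private keys $(\alpha_i, x_i, y_i)$ of every signer except the target. Given a \tb{PKS2} challenge public key $PK_{PKS}$, I would first carve out the public parameters $PP$ and the target key $PK^*$: the elements $g w_1^{c_g}, w_2^{c_g}, w^{c_g}, w_1, w_2, w, \hat{g}, \hat{g}^{\nu}, \hat{g}^{-\tau}$ are copied directly from $PK_{PKS}$ into $PP$, while $u w_1^{c_u}, w_2^{c_u}, w^{c_u}, h w_1^{c_h}, w_2^{c_h}, w^{c_h}, \hat{u}, \ldots, \hat{h}^{-\tau}, \Omega$ become $PK^*$. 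Then $\mc{B}$ initializes $CL$ and hands $(PP, PK^*)$ to $\mc{A}$.

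The first obstacle absent in the \tb{SAS1} case is that $PP$ must contain $\Lambda = e(g, \hat{g})$, yet $PK_{PKS}$ exposes neither $g$ nor $e(g,\hat{g})$ in isolation (only the masked element $g w_1^{c_g}$ and the masked $\Omega = e(g,\hat{g})^{\alpha}$). I would recover $\Lambda$ using the orthogonality relation $\tau = \phi_1 + \nu \phi_2$ underlying the scheme: since $w_1^{c_g}, w_2^{c_g}, w^{c_g}$ all carry the common exponent $c_g$, the pairing product $e(g w_1^{c_g}, \hat{g}) \cdot e(w_2^{c_g}, \hat{g}^{\nu}) \cdot e(w^{c_g}, \hat{g}^{-\tau})$ cancels every $w$-component (the masking exponents combine as $\phi_1 + \nu\phi_2 - \tau = 0$) and collapses to exactly $e(g,\hat{g}) = \Lambda$. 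This lets $\mc{B}$ publish a correctly distributed $PP$.

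Query answering then follows \tb{SAS1} closely. For a certification query $\mc{B}$ checks the supplied key pair and records it in $CL$. For a sequential aggregate signing query on $(AS', \vect{M}', \vect{PK}', M)$ under $PK^*$, $\mc{B}$ first verifies $AS'$ and that every key in $\vect{PK}'$ lies in $CL$, then asks its own \tb{PKS2} signing oracle for a signature $\sigma$ on $M$ under $PK^*$. Treating $\sigma$ as a length-one aggregate, $\mc{B}$ folds in the remaining signers by running \tb{SAS2.AggSign} with their known private keys; by order-independence the outcome is a properly re-randomized aggregate on $\vect{M}'\|M$ under $\vect{PK}'\|PK^*$.

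The step I expect to be the crux is extracting a \tb{PKS2} forgery from $\mc{A}$'s aggregate forgery $AS^*$ on $\vect{M}^* = (M_1,\ldots,M_l)$ with $PK_1 = PK^*$. In \tb{SAS1} one peels off signer $i$'s contribution using the bare $g^{\alpha_i}$, but here $g$ is unavailable. Instead I would peel using the masked triple: set $W_{1,1} = S_{1,1}^* \cdot \prod_{i=2}^{l} \big( (g w_1^{c_g})^{\alpha_i} (S_{2,1}^*)^{x_i M_i + y_i} \big)^{-1}$ and, crucially, apply the matching corrections $W_{1,2} = S_{1,2}^* \cdot \prod_{i=2}^{l} \big( (w_2^{c_g})^{\alpha_i} (S_{2,2}^*)^{x_i M_i + y_i} \big)^{-1}$ and $W_{1,3} = S_{1,3}^* \cdot \prod_{i=2}^{l} \big( (w^{c_g})^{\alpha_i} (S_{2,3}^*)^{x_i M_i + y_i} \big)^{-1}$, leaving $W_{2,j} = S_{2,j}^*$. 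Each $(g w_1^{c_g})^{\alpha_i}$ removes the wanted $g^{\alpha_i}$ at the price of an extra $w_1^{c_g \alpha_i}$ factor, and each $(S_{2,j}^*)^{x_i M_i + y_i}$ strips the shared-randomness term $(u_i^{M_i} h_i)^{\tilde{r}}$; because the three first-row elements are corrected by the $(\phi_1,\phi_2,1)$-aligned triple $(w_1^{c_g}, w_2^{c_g}, w^{c_g})$ and by the $\tilde{c}_2$-aligned triple $(S_{2,1}^*, S_{2,2}^*, S_{2,3}^*)$ raised to the same exponent, all residual $w$-terms stay mutually consistent, so $(W_{1,1},\ldots,W_{2,3})$ is a well-formed \tb{PKS2} signature on $M_1$ under $PK^*$. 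I would finish by verifying this distribution through the pairing identity (as in the display closing Theorem \ref{thm:sas1-prime}) and noting that $M_1$ was never queried, whence $\sigma$ is a legitimate forgery and $\Adv_{\mc{A}}^{SAS}(\lambda) \leq \Adv_{\mc{B}}^{PKS}(\lambda)$.
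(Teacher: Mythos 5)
Your proposal is correct and follows essentially the same reduction as the paper: recover $\Lambda$ via the pairing product $e(g w_1^{c_g}, \hat{g}) \cdot e(w_2^{c_g}, \hat{g}^{\nu}) \cdot e(w^{c_g}, \hat{g}^{-\tau})$ that cancels the $w$-masking, answer aggregate queries by combining the \tb{PKS2} oracle's signature with the registered private keys using order-independence, and peel off the other signers' contributions from the forgery. The one place you are more careful than the paper is the extraction step: the paper writes the bare factor $g^{\alpha_i}$ (carried over from the \tb{SAS1} proof) even though the simulator only holds $g w_1^{c_g}$, whereas your consistent masked triple $(g w_1^{c_g})^{\alpha_i}, (w_2^{c_g})^{\alpha_i}, (w^{c_g})^{\alpha_i}$ is actually computable and merely shifts the $c_1$-randomness of the extracted \tb{PKS2} signature, which the verification equation tolerates.
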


\begin{proof}
%
Suppose there exists an adversary $\mc{A}$ that forges the above \tb{SAS2}
scheme with non-negligible advantage $\epsilon$. A simulator $\mc{B}$ that
forges the \tb{PKS2} scheme is first given: a challenge public key
    $PK_{PKS} = (
    (p, \G, \hat{\G}, \G_T, e),
    g w_1^{c_g}, w_2^{c_g}, w^{c_g}, u w_1^{c_u}, \lb \ldots, w^{c_h},
    w_1, w_2, w, \hat{g}, \hat{g}^{\nu}, \hat{g}^{-\tau}, \hat{u}, \ldots,
    \hat{h}^{-\tau}, \Omega )$.
Then $\mc{B}$ that interacts with $\mc{A}$ is described as follows:
$\mc{B}$ first constructs
    $PP = ( (p, \G, \hat{\G}, \G_T, e),
    g w_1^{c_g}, w_2^{c_g}, w^{c_g}, w_1, w_2, w,
    \hat{g}, \hat{g}^{\nu}, \hat{g}^{-\tau}, \Lambda )$
by computing $\Lambda = e(gw_1^{c_g}, \hat{g}) \cdot e(w_2^{c_g},
\hat{g}^{\nu}) \cdot e(w^{c_g}, \hat{g}^{-\tau}) = e(g, \hat{g})$ and
    $PK^* = ( u w_1^{c_u}, \ldots, w^{c_h}, \hat{u}, \ldots, \hat{h}^{-\tau},
    \Omega )$
from $PK_{PKS}$. Next, it initializes a certification list $CL$ as an empty
one and gives $PP$ and $PK^*$ to $\mc{A}$.
$\mc{A}$ may adaptively requests certification queries or sequential
aggregate signature queries. If $\mc{A}$ requests the certification of a
public key by providing a public key $PK_i = (u_i w_1^{c_{u,i}}, \ldots,
\Omega_i)$ and its private key $SK_i = (\alpha_i, x_i, y_i)$, then $\mc{B}$
checks the private key and adds the key pair $(PK_i, SK_i)$ to $CL$.
If $\mc{A}$ requests a sequential aggregate signature by providing an
aggregate-so-far $AS'$ on messages $\vect{M}' = (M_1, \ldots, M_{l-1})$ under
public keys $\vect{PK}' = (PK_1, \ldots, PK_{l-1})$, and a message $M$ to
sign under the challenge private key of $PK^*$, then $\mc{B}$ proceeds the
aggregate signature query as follows:
\begin{enumerate}
\item It first checks that the signature $AS'$ is valid and that each
    public key in $\vect{PK}'$ exits in $CL$.

\item It queries its signing oracle that simulates $\tb{PKS2.Sign}$ on
    the message $M$ for the challenge public key $PK^*$ and obtains a
    signature $\sigma$.

\item For each $1\leq i\leq l-1$, it constructs an aggregate signature on
    message $M_i$ using $\tb{SAS2.AggSign}$ since it knows the private
    key that corresponds to $PK_i$. The result signature is an aggregate
    signature for messages $\vect{M}' || M$ under public keys $\vect{PK}'
    || PK^*$ since this scheme does not check the order of aggregation.
    It gives the result signature $AS$ to $\mc{A}$.
\end{enumerate}
Finally, $\mc{A}$ outputs a forged aggregate signature $AS^* = (S_{1,1}^*,
\ldots, S_{2,3}^*)$ on messages $\vect{M}^* = (M_1, \ldots, M_{l})$ under
public keys $\vect{PK}^* = (PK_1, \ldots, PK_l)$ for some $l$. Without loss
of generality, we assume that $PK_1 = PK^*$. $\mc{B}$ proceeds as follows:
\begin{enumerate}
\item $\mc{B}$ first checks the validity of $AS^*$ by using
    $\tb{SAS2.AggVerify}$. Additionally, the forged signature should not
    be trivial: the challenge public key $PK^*$ must be in $\vect{PK}^*$,
    and the message $M_1$ must not be queried by $\mc{A}$ to the
    signature query oracle.

\item For each $2 \leq i \leq l$, it parses $PK_i = (u_i w_1^{c_{u,i}},
    \ldots, \Omega_i)$ from $\vect{PK}^*$, and it retrieves the private
    key $SK_i = (\alpha_i, x_i, y_i)$ of $PK_i$ from $CL$. It then
    computes
    \begin{align*}
    &   W_{1,1} = S_{1,1}^* \prod_{i=2}^l \big(
            g^{\alpha_j} (S_{2,1}^*)^{x_i M_i + y_i} \big)^{-1},~
        W_{1,2} = S_{1,2}^* \prod_{i=2}^l \big(
            (S_{2,2}^*)^{x_i M_i + y_i} \big)^{-1},~
        W_{1,3} = S_{1,3}^* \prod_{i=2}^l \big(
            (S_{2,3}^*)^{x_i M_i + y_i} \big)^{-1},~ \\
    &   W_{2,1} = S_{2,1}^*,~
        W_{2,2} = S_{2,2}^*,~
        W_{2,3} = S_{2,3}^*.
    \end{align*}

\item It outputs $\sigma = (W_{1,1}, \ldots, W_{2,3})$ as a non-trivial
    forgery of the PKS scheme since it did not make a signing query on
    $M_1$.
\end{enumerate}

The public parameters and the public key are correctly distributed, and the
sequential aggregate signatures are also correctly distributed since this
scheme does not check the order of aggregation. The result signature $\sigma
= (W_{1,1}, \ldots, W_{2,3})$ of the simulator is a valid PKS signature on
the message $M_1$ under the public key $PK^*$ since it satisfies the
following equation:
    \begin{align*}
    \lefteqn{
        \prod_{i=1}^3 e(W_{1,i}, V_{1,i}) \cdot
        \prod_{i=1}^3 e(W_{2,i}, V_{2,i})^{-1} } \\
    &=  e(S_{1,1}^*, \hat{g}^t) \cdot
        e(S_{1,2}^*, \hat{g}^{\nu t}) \cdot
        e(S_{1,4}^*, \hat{g}^{-\tau t}) \cdot
        e(\prod_{i=2}^l g^{\alpha_i}, \hat{g}^t)^{-1} \cdot
        \db \\
    &\quad
        e(S_{2,1}^*, \prod_{i=2}^l (\hat{u}_i^{M_i} \hat{h}_i)^t)^{-1} \cdot
        e(S_{2,2}^*, \prod_{i=2}^l (\hat{u}_i^{M_i} \hat{h}_i)^{\nu t})^{-1} \cdot
        e(S_{2,3}^*, \prod_{i=2}^l (\hat{u}_i^{M_i} \hat{h}_i)^{-\tau t})^{-1} \cdot \\
    &\quad
        e(S_{2,1}^*, (\hat{u}^{M_1} \hat{h})^t)^{-1} \cdot
        e(S_{2,2}^*, (\hat{u}^{M_1} \hat{h})^{\nu t})^{-1} \cdot
        e(S_{2,3}^*, (\hat{u}^{M_1} \hat{h})^{-\tau t})^{-1}
        \db \\
    &=  e(S_{1,1}^*, C_{1,1}) \cdot e(S_{1,2}^*, C_{1,2}) \cdot
        e(S_{1,3}^*, C_{1,3}) \cdot
        e(\prod_{i=2}^l g^{\alpha_i}, \hat{g}^t)^{-1} \cdot \\
    &\quad
        e(S_{2,1}^*, \prod_{i=1}^l (\hat{u}_i^{M_i} \hat{h}_i)^t)^{-1} \cdot
        e(S_{2,2}^*, \prod_{i=1}^l (\hat{u}_i^{M_i} \hat{h}_i)^{\nu t})^{-1} \cdot
        e(S_{2,3}^*, \prod_{i=1}^l (\hat{u}_i^{M_i} \hat{h}_i)^{-\tau t})^{-1}
        \db \\
    &=  \prod_{i=1}^3 e(S_{1,i}^*, C_{1,i}) \cdot
        \prod_{i=1}^3 e(S_{2,i}^*, C_{2,i})^{-1} \cdot
        e(\prod_{i=2}^l g^{\alpha_i}, \hat{g}^t)^{-1}
     =  \prod_{i=1}^l \Omega_i^t \cdot \prod_{i=2}^l \Omega_i^{-t}
     =  \Omega_1^t
    \end{align*}
where $\delta_i = x_i M_i + y_i$ and $\tilde{s}_2 = \sum_{i=2}^l (x_i M_i +
y_i) s_1 + s_2$. This completes our proof.
\end{proof}

\subsection{Discussions}

\noindent \textbf{Multiple Messages}. The SAS schemes of this paper only
allow a signer to sign once in the aggregate algorithm. To support multiple
signing per one signer, we can use the method of Lu et al. \cite{LuOSSW06}.
The basic idea of Lu et al. is to apply a collision resistant hash function
$H$ to a message $M$ before performing the signing algorithm. If a signer
wants to add a signature on a message $M_2$ into the aggregate signature, he
first removes his previous signature on $H(M_1)$ from the aggregate signature
using his private key, and then he adds the new signature on the
$H(M_1||M_2)$ to the aggregate signature.

\section{Multi-Signature} \label{sec:ms}

In this section, we propose an efficient multi-signature (MS) scheme with
short public parameters and prove its security without random oracles.

\subsection{Definitions}

Multi-Signature (MS) can be regarded as a special kind of PKAS in which
different signatures generated by different signers on the same message are
combined as a short multi-signature. Thus MS consists of four algorithms of
PKS and additional two algorithms \tb{Combine} and \tb{MultiVerify} for
combining a multi-signature and verifying a multi-signature. In MS, each
signer generates a public key and a private key, and he can generate an
individual signature on a message by using his private key. To generate a
multi-signature, anyone can combine individual signatures of different
signers on the same message. A verifier can check the validity of the
multi-signature by using the public keys of signers. An MS scheme is formally
defined as follows:

\begin{definition}[Multi-Signature]
A multi-signature (MS) scheme consists of six PPT algorithms \tb{Setup},
\tb{KeyGen}, \tb{Sign}, \tb{Verify}, \tb{Combine}, and \tb{MultVerify}, which
are defined as follows:
\begin{description}
\item \tb{Setup}($1^\lambda$): The setup algorithm takes as input a
    security parameter $\lambda$, and outputs public parameters $PP$.

\item \tb{KeyGen}($PP$): The key generation algorithm takes as input the
    public parameters $PP$, and outputs a public key $PK$ and a private key
    $SK$.

\item \tb{Sign}($M, SK$): The signing algorithm takes as input a message
    $M$, and a private key $SK$. It outputs a signature $\sigma$.

\item \tb{Verify}($\sigma, M, PK$): The verification algorithm takes as
    input a signature $\sigma$ on a message $M$ under a public key $PK$,
    and outputs either $1$ or $0$ depending on the validity of the
    signature.

\item \tb{Combine}($\vect{\sigma}, M, \vect{PK}$): The combining algorithm
    takes as input signatures $\vect{\sigma}$ on a message $M$ under public
    keys $\vect{PK} = (PK_1, \ldots, PK_l)$, and outputs a multi-signature
    $MS$.

\item \tb{MultVerify}($MS, M, \vect{PK}$): The multi-verification algorithm
    takes as input a multi-signature $MS$ on a message $M$ under public
    keys $\vect{PK} = (PK_1, \ldots, PK_l)$, and outputs either $1$ or $0$
    depending on the validity of the multi-signature.
\end{description}
The correctness requirement is that for each $PP$ output by
$\tb{Setup}(1^\lambda)$, for all $(PK,SK)$ output by $\tb{KeyGen}(PP)$, and
any $M$, we have that $\tb{Verify} (\tb{Sign}(M, SK), M, PK) = 1$ and for
each $\vect{\sigma}$ on message $M$ under public keys $\vect{PK}$,
$\tb{MultVerify} (\tb{Combine} (\vect{\sigma}, M, \vect{PK}), M, \vect{PK}) =
1$.
\end{definition}

The security model of MS was defined by Micali et al. \cite{MicaliOR01}, but
we follow the security model of Boldyreva \cite{Boldyreva03} that requires
for an adversary to register the key-pairs of other signers except the target
signer, namely the knowledge of secret key (KOSK) setting or the proof of
knowledge (POK) setting. In this security model, an adversary is first given
the public key of a target signer. After that, the adversary adaptively
requests the certification of a public key by registering the key-pair of
other signer, and he adaptively requests a signature for the target signer on
a message. Finally, the adversary outputs a forged multi-signature on a
message $M^*$ under public keys. If the forged multi-signature satisfies the
conditions of the security model, then the adversary wins the security game.
The security model of MS is formally defined as follows:

\begin{definition}[Security]
The security notion of existential unforgeability under a chosen message
attack is defined in terms of the following experiment between a challenger
$\mc{C}$ and a PPT adversary $\mc{A}$:
\begin{enumerate}
\item \tb{Setup}: $\mc{C}$ first initialize the certification list $CL$ as
    empty. Next, it runs \tb{Setup} to obtain public parameters $PP$ and
    \tb{KeyGen} to obtain a key pair $(PK,SK)$, and gives $PP, PK$ to
    $\mc{A}$.

\item \tb{Certification Query}: $\mc{A}$ adaptively requests the
    certification of a public key by providing a key pair $(PK,SK)$.
    $\mc{C}$ adds the key pair $(PK,SK)$ to $CL$ if the private key is a
    valid one.

\item \tb{Signature Query}: $\mc{A}$ adaptively requests a signature by
    providing a message $M$ to sign under the challenge public key $PK$,
    and receives a signature $\sigma$.

\item \tb{Output}: Finally, $\mc{A}$ outputs a forged multi-signature
    $MS^*$ on a message $M^*$ under public keys $\vect{PK}^*$. $\mc{C}$
    outputs $1$ if the forged signature satisfies the following three
    conditions, or outputs $0$ otherwise: 1) $\tb{MultVerify} (MS^*, M^*,
    \vect{PK}^*) = 1$, 2) The challenge public key $PK$ must exists in
    $\vect{PK}^*$ and each public key in $\vect{PK}^*$ except the challenge
    public key must be in $CL$, and 3) The message $M^*$ must not have been
    queried by $\mc{A}$ to the signing oracle.
\end{enumerate}
The advantage of $\mc{A}$ is defined as $\Adv_{\mc{A}}^{MS} = \Pr [\mc{C} =
1]$ where the probability is taken over all the randomness of the experiment.
An MS scheme is existentially unforgeable under a chosen message attack if
all PPT adversaries have at most a negligible advantage in the above
experiment.
\end{definition}

\subsection{Construction}

To construct an MS scheme with short public parameters, we may use our PKS
schemes that support multi-users and public re-randomization. To aggregate
the randomness of signatures, we cannot use the technique of Lu et al.
\cite{LuOSSW06} since the randomness should be freely aggregated in MS.
Instead we aggregate the randomness of signatures by using the fact that each
signer generates a signature on the same message in MS. That is, if group
elements $u, h$ that are related to message hashing are shared among all
signers, then the randomness of each signer can be easily aggregated since
the random exponent in a public key and the randomness of a signature are
placed in different positions. Thus our two PKS schemes can be used to build
MS schemes since $g, u, h$ in PKS1 or $g w_1^{c_g}, u w_1^{c_u}, h w_1^{c_h}$
in PKS2 are published in a public key. Note that it is not required for a
signer to publicly re-randomize a multi-signature since each signer selects
an independent random value.

To reduce the size of multi-signatures, we use our PKS2 scheme for this MS
scheme. Our MS scheme based on the PKS2 scheme is described as follows:

\begin{description}
\item [\tb{MS.Setup}($1^\lambda$):] This algorithm first generates the
    asymmetric bilinear groups $\G, \hat{\G}$ of prime order $p$ of bit
    size $\Theta(\lambda)$. It chooses random elements $g, w \in \G$ and
    $\hat{g} \in \hat{\G}$. Next, it selects random exponents $\nu, \phi_1,
    \phi_2 \in \Z_p$ and sets $\tau = \phi_1 + \nu \phi_2$, $w_1 =
    w^{\phi_1}, w_2 = w^{\phi_2}$. It selects random exponents $x, y \in
    \Z_n$ and computes $u = g^x, h = g^y, \hat{u} = \hat{g}^x, \hat{h} =
    \hat{g}^y$. It publishes public parameters by selecting random values
    $c_g, c_u, c_h \in \Z_p$ as
    \begin{align*}
    PP = \Big(~
    &   (p, \G, \hat{\G}, \G_T, e),~
        g w_1^{c_g}, w_2^{c_g}, w^{c_g},~
        u w_1^{c_u}, w_2^{c_u}, w^{c_u},~
        h w_1^{c_h}, w_2^{c_h}, w^{c_h},~ \\
    &   w_1, w_2, w,~
        \hat{g}, \hat{g}^{\nu}, \hat{g}^{-\tau},~
        \hat{u}, \hat{u}^{\nu}, \hat{u}^{-\tau},~
        \hat{h}, \hat{h}^{\nu}, \hat{h}^{-\tau},~
        \Lambda = e(g, \hat{g})
    ~\Big).
    \end{align*}

\item [\tb{MS.KeyGen}($PP$):] This algorithm takes as input the public
    parameters $PP$. It selects a random exponent $\alpha \in \Z_p$ and
    computes $\Omega = \Lambda^{\alpha}$. Then it outputs a private key $SK
    = \alpha$ and a public key as $PK = \Omega$.

\item [\tb{MS.Sign}($M, SK$):] This algorithm takes as input a message $M
    \in \Z_p$ and a private key $SK = \alpha$. It selects random exponents
    $r, c_1, c_2 \in \Z_p$ and outputs a signature as
    \begin{align*}
    \sigma = \Big(~
    &   W_{1,1} = (g w_1^{c_g})^{\alpha} ((u w_1^{c_u})^M (h w_1^{c_h}))^r
                  w_1^{c_1},~ \\
    &   W_{1,2} = (w_2^{c_g})^{\alpha} ((w_2^{c_u})^M w_2^{c_h})^r w_2^{c_1},~
        W_{1,3} = (w^{c_g})^{\alpha} ((w^{c_u})^M w^{c_h})^r w^{c_1},~ \\
    &   W_{2,1} = (g w_1^{c_g})^r w_1^{c_2},~
        W_{2,2} = (w_2^{c_g})^r w_2^{c_2},~
        W_{2,3} = (w^{c_g})^r w^{c_2}
    ~\Big).
    \end{align*}

\item [\tb{MS.Verify}($\sigma, M, PK$):] This algorithm takes as input a
    signature $\sigma$ on a message $M$ under a public key $PK$. It chooses
    a random exponent $t \in \Z_p$ and computes verification components as
    \begin{align*}
    &   V_{1,1} = \hat{g}^t,
        V_{1,2} = (\hat{g}^{\nu})^t,
        V_{1,3} = (\hat{g}^{-\tau})^t, \\
    &   V_{2,1} = (\hat{u}^M \hat{h})^t,
        V_{2,2} = ((\hat{u}^{\nu})^M \hat{h}^{\nu})^t,
        V_{2,3} = ((\hat{u}^{-\tau})^M \hat{h}^{-\tau})^t.
    \end{align*}
    Next, it verifies that $\prod_{i=1}^3 e(W_{1,i}, V_{1,i}) \cdot
    \prod_{i=1}^3 e(W_{2,i}, V_{2,i})^{-1} \stackrel{?}{=} \Omega^t$. If
    this equation holds, then it outputs $1$. Otherwise, it outputs $0$.

\item [\tb{MS.Combine}($\vect{\sigma}, M, \vect{PK}$):] This algorithm
    takes as input signatures $\vect{\sigma} = (\sigma_1, \ldots,
    \sigma_l)$ on a message $M$ under public keys $\vect{PK} = (PK_1,
    \ldots, PK_l)$ where $PK_i = \Omega_i$. It first checks the validity of
    each signature $\sigma_i = (W_{1,1}^i, \ldots, W_{2,3}^i)$ by calling
    $\tb{MS.Verify}(\sigma_i, M, PK_i)$. If any signature is invalid, then
    it halts. It then outputs a multi-signature for a message $M$ as
    \begin{align*}
    MS = \Big(~
    &   S_{1,1} = \prod_{i=1}^l W_{1,1}^i,~
        S_{1,2} = \prod_{i=1}^l W_{1,2}^i,~
        S_{1,3} = \prod_{i=1}^l W_{1,3}^i,~ \\
    &   S_{2,1} = \prod_{i=1}^l W_{2,1}^i,~
        S_{2,2} = \prod_{i=1}^l W_{2,2}^i,~
        S_{2,3} = \prod_{i=1}^l W_{2,3}^i
    ~\Big).
    \end{align*}

\item [\tb{MS.MultVerify}($MS, M, \vect{PK}$):] This algorithm takes as
    input a multi-signature $MS$ on a message $M$ under public keys
    $\vect{PK} = (PK_1, \ldots, PK_l)$ where $PK_i = \Omega_i$. It chooses
    a random exponent $t \in \Z_p$ and computes verification components as
    \begin{align*}
    &   V_{1,1} = \hat{g}^t,
        V_{1,2} = (\hat{g}^{\nu})^t,
        V_{1,3} = (\hat{g}^{-\tau})^t, \\
    &   V_{2,1} = (\hat{u}^M \hat{h})^t,
        V_{2,2} = ((\hat{u}^{\nu})^M \hat{h}^{\nu})^t,
        V_{2,3} = ((\hat{u}^{-\tau})^M \hat{h}^{-\tau})^t.
    \end{align*}
    Next, it verifies that $\prod_{i=1}^3 e(S_{1,i}, V_{1,i}) \cdot
    \prod_{i=1}^3 e(S_{2,i}, V_{2,i})^{-1} \stackrel{?}{=} \prod_{i=1}^l
    \Omega_i^t$. If this equation holds, then it outputs $1$. Otherwise, it
    outputs $0$.
\end{description}

\subsection{Security Analysis}

\begin{theorem} \label{thm:ms-prime}
The above \tb{MS} scheme is existentially unforgeable under a chosen message
attack if the \tb{PKS2} scheme is existentially unforgeable under a chosen
message attack. That is, for any PPT adversary $\mc{A}$ for the above \tb{MS}
scheme, there exists a PPT algorithm $\mc{B}$ for the \tb{PKS2} scheme such
that
    $\Adv_{\mc{A}}^{MS}(\lambda) \leq \Adv_{\mc{B}}^{PKS}(\lambda)$.
\end{theorem}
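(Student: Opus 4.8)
The plan is to reduce the unforgeability of the \tb{MS} scheme to that of \tb{PKS2}, following the same template as the \tb{SAS2} reduction (Theorem \ref{thm:sas2-prime}) but exploiting a crucial simplification: in \tb{MS} the message-hashing elements $u, h$ live in the shared public parameters rather than in each signer's public key, so every signer hashes the common message $M$ against the same $u, h$ and the only per-signer secret is the exponent $\alpha_i$. Given an \tb{MS} forger $\mc{A}$, I would build a \tb{PKS2} forger $\mc{B}$ that receives the challenge key $PK_{PKS}$ and repackages it: the elements $g w_1^{c_g}, \ldots, w^{c_h}, w_1, w_2, w, \hat{g}, \hat{g}^{\nu}, \hat{g}^{-\tau}, \hat{u}, \ldots, \hat{h}^{-\tau}$ become the \tb{MS} public parameters $PP$, the value $\Lambda = e(g, \hat{g})$ is recovered publicly via $\Lambda = e(g w_1^{c_g}, \hat{g}) \cdot e(w_2^{c_g}, \hat{g}^{\nu}) \cdot e(w^{c_g}, \hat{g}^{-\tau})$ (the $w$-terms cancel because $\tau = \phi_1 + \nu \phi_2$), and $\Omega$ is handed to $\mc{A}$ as the target public key $PK^* = \Omega$.

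Next I would answer $\mc{A}$'s queries. Certification queries $(\Omega_i, \alpha_i)$ are checked by verifying $\Omega_i = \Lambda^{\alpha_i}$ and stored in $CL$, so that $\mc{B}$ knows $\alpha_i$ for every registered signer. For a signature query on $M$ under $PK^*$, I would simply relay $M$ to $\mc{B}$'s own \tb{PKS2} signing oracle and return the result. This is sound because a short computation shows \tb{MS.Sign} and \tb{PKS2.Sign} produce identically distributed outputs: expanding the \tb{MS} components collapses the factors $c_g \alpha + c_u M r + c_h r + c_1$ and $c_g r + c_2$ into single uniform exponents $\tilde{c}_1, \tilde{c}_2$, recovering exactly the \tb{PKS2} form $g^{\alpha} (u^M h)^r w_1^{\tilde{c}_1}, w_2^{\tilde{c}_1}, w^{\tilde{c}_1}, g^r w_1^{\tilde{c}_2}, w_2^{\tilde{c}_2}, w^{\tilde{c}_2}$.

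When $\mc{A}$ outputs a forged multi-signature $MS^* = (S_{1,1}^*, \ldots, S_{2,3}^*)$ on $M^*$ under $\vect{PK}^* = (\Omega_1, \ldots, \Omega_l)$ with $PK_1 = PK^*$ (without loss of generality), I would extract a \tb{PKS2} forgery by dividing out the known contributions of the other signers,
\begin{align*}
W_{1,i} = S_{1,i}^* \cdot \prod_{j=2}^l \big( G_i^{\alpha_j} \big)^{-1}, \qquad W_{2,i} = S_{2,i}^*, \qquad i = 1,2,3,
\end{align*}
where $(G_1, G_2, G_3) = (g w_1^{c_g}, w_2^{c_g}, w^{c_g})$. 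The correctness is checked against the \tb{PKS2} verification equation using the same $V$-components as \tb{MS.MultVerify}: for each $j \geq 2$ the correction factor contributes $\prod_{i=1}^3 e(G_i, V_{1,i})^{-\alpha_j} = (\Lambda^t)^{-\alpha_j} = \Omega_j^{-t}$, so the right-hand side $\prod_{i=1}^l \Omega_i^t$ collapses to $\Omega_1^t = \Omega^t$, exhibiting $\sigma = (W_{1,1}, \ldots, W_{2,3})$ as a valid \tb{PKS2} signature on $M^*$. Since $\mc{A}$ never queried the \tb{MS} signing oracle on $M^*$, $\mc{B}$ never queried its \tb{PKS2} oracle on $M^*$, so the forgery is non-trivial.

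The main obstacle I anticipate is bookkeeping rather than conceptual: I must confirm that the shared-$u,h$ structure genuinely eliminates any per-signer message dependence, so that (unlike \tb{SAS2}) no $(S_{2,i}^*)^{x_i M_i + y_i}$ terms need to be removed and the single product $\prod_{j \geq 2} G_i^{\alpha_j}$ suffices. Verifying this reduces to the identity $\prod_{i=1}^3 e(G_i, V_{1,i}) = \Lambda^t$ for the honestly generated verification tuple, which is exactly the relation used to recover $\Lambda$ at setup; the remainder of the reduction is then perfect and tight, yielding $\Adv_{\mc{A}}^{MS}(\lambda) \leq \Adv_{\mc{B}}^{PKS}(\lambda)$.
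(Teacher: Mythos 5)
Your proposal is correct and follows essentially the same reduction as the paper: repackage the \tb{PKS2} challenge key as $PP$ and $PK^* = \Omega$, relay signing queries to the \tb{PKS2} oracle, and strip the registered signers' contributions from the forged multi-signature. The only (cosmetic) difference is that you divide out $\bigl((g w_1^{c_g})^{\alpha_j}, (w_2^{c_g})^{\alpha_j}, (w^{c_g})^{\alpha_j}\bigr)$ componentwise, which is actually a slightly more careful rendering than the paper's literal $\prod_{i=2}^l g^{\alpha_i}$ (the bare $g$ is not available to the simulator), and it yields an equally valid \tb{PKS2} forgery since the extra factors form a legal $w_1^{c}, w_2^{c}, w^{c}$ randomization triple.
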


\begin{proof}
Suppose there exists an adversary $\mc{A}$ that forges the above \tb{MS}
scheme with a non-negligible advantage $\epsilon$. A simulator $\mc{B}$ that
forges the \tb{PKS2} scheme is given: a challenge public key $PK_{PKS} = (
(p, \G, \hat{\G}, \G_T, e), \lb g w_1^{c_g}, \ldots, \Lambda, \Omega )$. Then
$\mc{B}$ that interacts with $\mc{A}$ is described as follows:
$\mc{B}$ first constructs $PP = ( (p, \G, \hat{\G}, \G_T, e), \lb g
w_1^{c_g}, \ldots, \Lambda )$ by computing $\Lambda = e(gw_1^{c_g}, \hat{g})
\cdot e(w_2^{c_g}, \hat{g}^{\nu}) \cdot e(w^{c_g}, \hat{g}^{-\tau}) = e(g,
\hat{g})$ and $PK^* = \Omega$ from $PK_{PKS}$. Next, it initialize a
certification list $CL$ as an empty one and gives $PP$ and $PK^*$ to
$\mc{A}$.
$\mc{A}$ may adaptively request certification queries or signature queries.
If $\mc{A}$ requests the certification of a public key by providing a public
key $PK_i = \Omega_i$ and its private key $SK_i = \alpha_i$, then $\mc{B}$
checks the key pair and adds $(PK_i, SK_i)$ to $CL$.
If $\mc{A}$ requests a signature by providing a message $M$ to sign under the
challenge private key of $PK^*$, then $\mc{B}$ queries its signing oracle
that simulates $\tb{PKS2.Sign}$ on the message $M$ for the challenge public
key $PK^*$, and gives the signature to $\mc{A}$.
Finally, $\mc{A}$ outputs a forged multi-signature $MS^* = (S_{1,1}^*,
\ldots, S_{2,3}^*)$ on a message $M^*$ under public keys $\vect{PK}^* =
(PK_1, \ldots, PK_l)$ for some $l$. Without loss of generality, we assume
that $PK_1 = PK^*$. $\mc{B}$ proceeds as follows:
\begin{enumerate}
\item $\mc{B}$ first check the validity of $MS^*$ by calling
    $\tb{MS.MultVerify}$. Additionally, the forged signature should not be
    trivial: the challenge public key $PK^*$ must be in $\vect{PK}^*$, and
    the message $M$ must not be queried by $\mc{A}$ to the signing oracle.

\item For each $2\leq i\leq l$, it parses $PK_i = \Omega_i$ from
    $\vect{PK}^*$, and it retrieves the private key $SK_i = g^{\alpha_i}$
    of $PK_i$ from $CL$. It then computes
    \begin{align*}
    &   W_{1,1} = S_{1,1}^* \cdot \prod_{i=2}^l \big( g^{\alpha_i} \big)^{-1},~
        W_{1,2} = S_{1,2}^*,~
        W_{1,3} = S_{1,3}^*,~ \\
    &   W_{2,1} = S_{2,1}^*,~
        W_{2,2} = S_{2,2}^*,~
        W_{2,3} = S_{2,3}^*.
    \end{align*}

\item It outputs $\sigma = (W_{1,1}, \ldots, W_{2,3})$ as a non-trivial
    forgery of the PKS scheme since it did not make a signing query on
    $M_1$.
\end{enumerate}

To finish the proof, we first show that the distribution of the simulation is
correct. It is obvious that the public parameters, the public key, and the
signatures are correctly distributed. Next we show that the output signature
$\sigma = (W_{1,1}, \ldots, W_{2,3})$ of the simulator is a valid signature
for the \tb{PKS2} scheme on the message $M_1$ under the public key $PK^*$
since it satisfies the following equation
    \begin{align*}
    \lefteqn{
        \prod_{i=1}^3 e(W_{1,i}, V_{1,i}) \cdot
        \prod_{i=1}^3 e(W_{2,i}, V_{2,i})^{-1} } \\
    &=  \prod_{i=1}^3 e(S_{1,i}^*, V_{1,i}) \cdot
        \prod_{i=1}^3 e(S_{2,i}^*, V_{2,i})^{-1} \cdot
        e(\prod_{i=2}^l g^{\alpha_i}, \hat{g})^{-1}
     =  \prod_{i=1}^l \Omega_i^t \cdot
        \prod_{i=2}^l \Omega_i^{-t}
     =  \Omega_1^t.
    \end{align*}
This completes our proof.
\end{proof}

\subsection{Discussions}

\tb{Removing the Proof of Knowledge.} In our MS scheme, an adversary should
prove that he knows the private key of other signer by using a zero-knowledge
proof system. Ristenpart and Yilek \cite{RistenpartY07} showed that some MS
schemes can be proven in the proof of possession (POP) setting instead of the
POK setting. Our MS scheme also can be proven in the POP setting by using
their technique. That is, if our MS scheme is incorporated with a POP scheme
that uses a different hash function, and the adversary submits a signature on
the private key of other signer as the proof of possession, then the security
of our scheme is also achieved. In the security proof, a simulator cannot
extract the private key element $g^{\alpha}$ from the signature of the POP
scheme, but he can extract other values $g^{\alpha} w_1^{c'}, w_2^{c'},
w^{c'}$ and these values are enough for the security proof.

\section{Conclusion}

In this paper, we first proposed two PKS schemes with short public keys that
support multi-users and public re-randomization based on the LW-IBE scheme.
Next, we proposed two SAS schemes with short public keys without random
oracles and with no relaxation of assumptions (i.e., employing neither random
oracles nor interactive assumptions) based on our two PKS schemes. The
proposed SAS schemes are the first of this kind that have short (a constant
number of group elements) size public keys and a constant number of pairing
operations per message in the verification algorithm. We also proposed an MS
scheme with short public parameters based on our PKS scheme and proved its
security without random oracles.

There are many interesting open problems. The first one is to construct an
SAS scheme with short public keys that is secure under standard assumptions
without random oracles. A possible approach is to build an SAS scheme based
on the practical PKS scheme of B{\"{o}}hl et al. \cite{BohlHJKSS13} that is
secure under the standard assumption. The second one is to build an SAS
scheme with short public keys that supports lazy verification and has the
constant size of aggregate signatures. Brogle et al. \cite{BrogleGR12}
proposed an SAS scheme with lazy verification, but the size of aggregate
signatures in their SAS scheme is not constant.

%

\bibliographystyle{plain}
\bibliography{sas-short-pkey-merge}

\appendix

\section{Lewko-Waters IBE} \label{sec:lw-ibe}

In this section, we describe the IBE scheme of Lewko and Waters (LW-IBE)
\cite{LewkoW10} in prime order bilinear groups and the PKS scheme (LW-PKS)
that is derived from the LW-IBE scheme.

\subsection{The LW-IBE Scheme}

The LW-IBE scheme in prime order bilinear groups is described as follows:

\begin{description}
\item [\textbf{IBE.Setup}($1^\lambda$):] This algorithm first generates the
    asymmetric bilinear groups $\G, \hat{\G}$ of prime order $p$ of bit
    size $\Theta(\lambda)$. It chooses random elements $g \in \G$ and
    $\hat{g}, \hat{w} \in \hat{\G}$. Next, it chooses random exponents
    $\nu, \phi_1, \phi_2 \in \Z_p$ and sets $\tau = \phi_1 + \nu \phi_2$.
    It selects random exponents $\alpha, x, y \in \Z_p$ and sets $u = g^x,
    \hat{u} = \hat{g}^x, h = g^y, \hat{h} = \hat{g}^y, \hat{w}_1 =
    \hat{w}^{\phi_1}, \hat{w}_2 = \hat{w}^{\phi_2}$. It outputs a master
    key $MK = ( \alpha, \hat{g}, \hat{u}, \hat{h}, \hat{w}_1, \hat{w}_2,
    \hat{w})$ and public parameters as
    \begin{align*}
    PP = \Big(~ (p, \G, \hat{\G}, \G_T, e),~
        g, g^{\nu}, g^{-\tau},~ u, u^{\nu}, u^{-\tau},~ h, h^{\nu}, h^{-\tau},~
        \Omega = e(g, \hat{g})^{\alpha}
    ~\Big).
    \end{align*}

\item [\textbf{IBE.GenKey}($ID, MK$):] This algorithm takes as input an
    identity $ID \in \bits^k$ where $k < \lambda$ and the master key $MK$.
    It selects random exponents $r, c_1, c_2 \in \Z_p$ and outputs a
    private key as
    \begin{align*}
    SK_{ID} = \Big(~
    &   K_{1,1} = \hat{g}^{\alpha} (\hat{u}^{ID} \hat{h})^r \hat{w}_1^{c_1},
        K_{1,2} = \hat{w}_2^{c_1},
        K_{1,3} = \hat{w}^{c_1},~
        K_{2,1} = \hat{g}^r \hat{w}_1^{c_2},
        K_{2,2} = \hat{w}_2^{c_2},
        K_{2,3} = \hat{w}^{c_2}
    ~\Big).
    \end{align*}

\item [\textbf{IBE.Encrypt}($M, ID, PP$):] This algorithm takes as input a
    message $M \in \G_T$, an identity $ID$, and the public parameters $PP$.
    It first chooses a random exponent $t \in \Z_p$ and outputs a
    ciphertext as
    \begin{align*}
    CT = \Big(~
    &   C       = e(g, \hat{g})^{\alpha t} M,~
        C_{1,1} = g^t,
        C_{1,2} = (g^{\nu})^t,
        C_{1,3} = (g^{-\tau})^t, \\
    &   C_{2,1} = (u^{ID} h)^t,
        C_{2,2} = ((u^{\nu})^{ID} h^{\nu})^t,
        C_{2,3} = ((u^{-\tau})^{ID} h^{-\tau})^t
    ~\Big).
    \end{align*}

\item [\textbf{IBE.Decrypt}($CT, SK_{ID}, PP$):] This algorithm takes as
    input a ciphertext $CT$, a private key $SK_{ID}$, and the public
    parameters $PP$. If the identities of the ciphertext and the private
    key are equal, then it computes
    \begin{align*}
    M = C \cdot \prod_{i=1}^3 e(C_{1,i}, K_{1,i})^{-1} \cdot
                \prod_{i=1}^3 e(C_{2,i}, K_{2,i}).
    \end{align*}
\end{description}

\subsection{The LW-PKS Scheme}

To derive a LW-PKS scheme from the LW-IBE scheme, we apply the transformation
of Naor \cite{BonehF01}. Additionally, we represent the signature in $\G$
instead of $\hat{\G}$ to reduce the size of signatures. The LW-PKS scheme in
prime order bilinear groups is described as follows:

\begin{description}
\item [\textbf{PKS.KeyGen}($1^\lambda$):] This algorithm first generates
    the asymmetric bilinear groups $\G, \hat{\G}$ of prime order $p$ of bit
    size $\Theta(\lambda)$. It chooses random elements $g, w \in \G$ and
    $\hat{g} \in \hat{\G}$. Next, it chooses random exponents $\nu, \phi_1,
    \phi_2 \in \Z_p$ and sets $\tau = \phi_1 + \nu \phi_2$. It selects
    random exponents $\alpha, x, y \in \Z_p$ and sets $u = g^x, \hat{u} =
    \hat{g}^x, h = g^y, \hat{h} = \hat{g}^y, w_1 = w^{\phi_1}, w_2 =
    w^{\phi_2}$. It outputs a private key $SK = ( \alpha, g, u, h )$ and a
    public key as
    \begin{align*}
    PK = \Big(~ (p, \G, \hat{\G}, \G_T, e),~
        w_1, w_2, w,~
        \hat{g}, \hat{g}^{\nu}, \hat{g}^{-\tau},~
        \hat{u}, \hat{u}^{\nu}, \hat{u}^{-\tau},~
        \hat{h}, \hat{h}^{\nu}, \hat{h}^{-\tau},~
        \Omega = e(g, \hat{g})^{\alpha}
    ~\Big).
    \end{align*}

\item [\textbf{PKS.Sign}($M, SK$):] This algorithm takes as input a message
    $M \in \bits^k$ where $k < \lambda$ and a private key $SK$. It selects
    random exponents $r, c_1, c_2 \in \Z_p$ and outputs a signature as
    \begin{align*}
    \sigma = \Big(~
    &   W_{1,1} = g^{\alpha} (u^M h)^r w_1^{c_1},
        W_{1,2} = w_2^{c_1},
        W_{1,3} = w^{c_1},~
        W_{2,1} = g^r w_1^{c_2},
        W_{2,2} = w_2^{c_2},
        W_{2,3} = w^{c_2}
    ~\Big).
    \end{align*}

\item [\textbf{PKS.Verify}($\sigma, M, PK$):] This algorithm takes as input
    a signature $\sigma$ on a message $M \in \bits^k$ under a public key
    $PK$. It first chooses a random exponent $t \in \Z_p$ and computes
    verification components as
    \begin{align*}
    &   V_{1,1} = \hat{g}^t,
        V_{1,2} = (\hat{g}^{\nu})^t,
        V_{1,3} = (\hat{g}^{-\tau})^t, \\
    &   V_{2,1} = (\hat{u}^M \hat{h})^t,
        V_{2,2} = ((\hat{u}^{\nu})^M \hat{h}^{\nu})^t,
        V_{2,3} = ((\hat{u}^{-\tau})^M \hat{h}^{-\tau})^t.
    \end{align*}
    Next, it verifies that $\prod_{i=1}^3 e(W_{1,i}, V_{1,i}) \cdot
    \prod_{i=1}^3 e(W_{2,i}, V_{2,i})^{-1} \stackrel{?}{=} \Omega^t$. If
    this equation holds, then it outputs $1$. Otherwise, it outputs $0$.
\end{description}

We can safely move the elements $w_1, w_2, w$ from the private key to the
public key since these elements are always constructed in the security proof
of the LW-IBE scheme. However, this LW-PKS scheme does not support multi-user
setting and public re-randomization since the elements $g, u, h$ are not
given in the public key.

\end{document}